    \theoremstyle{plain}
    \newtheorem{theorem}{Theorem}[section]
    \newtheorem*{conjecture}{Conjecture}
    \newtheorem{lemma}[theorem]{Lemma}
    \newtheorem{corollary}[theorem]{Corollary}
    \newtheorem{proposition}[theorem]{Proposition}
    \theoremstyle{definition}
    \newtheorem{definition}[theorem]{Definition}
    \newcommand\FF{\ensuremath{\mathbb F}}
    \DeclareMathOperator\lin{Lin}
    \newcommand\ith{-th\xspace}
    \newcommand\I{\mathcal{I}}
    \newcommand\Ring{\mathcal{R}}
    \newcommand\defeq{\stackrel{\text{\tiny def}}{=}}
    \renewcommand\L{\mathcal{L}}
    \newcommand\prog[1]{\textsc{#1}}
    \newcommand\Mult{\prog{Mult}}
    \newcommand\Clean{\prog{Clean}}
    \newcommand\Add{\prog{Add}}
    \newcommand\Iso{\prog{Isolate}}
    \newcommand\Prep{\prog{Preparation}}
    \newcommand\IsFactor{\prog{IsFactorizable}}
    \newcommand\SymDet{\prog{SymDet}}
    \newcommand\Merge{\prog{Merge}}
    \newcommand*\svg[4][htbp]{%
      \begin{figure}[#1]%
      \begin{center}%
      #4 
      \caption{#3}%
      \label{fig:#2}
      \end{center}%
      \end{figure}%
    }
\title[Symmetric Determinantal Representations in characteristic $2$]{Symmetric Determinantal Representations\\ in characteristic $2$}
\date{\today}
\author{Bruno Grenet}
\address{LIP, UMR 5668, \'ENS de Lyon -- CNRS -- UCBL -- INRIA\\%
    Universit\'e de Lyon}
\email{Bruno.Grenet@ens-lyon.fr}
\author{Thierry Monteil}
\address{LIRMM, UMR 5506, CNRS, Universit\'e Montpellier II}
\urladdr{http://www.lirmm.fr/~monteil/}
\author{St\'ephan Thomass\'e}
\address{LIP, UMR 5668, \'ENS de Lyon -- CNRS -- UCBL -- INRIA\\%
    Universit\'e de Lyon}
\email{Stephan.Thomasse@ens-lyon.fr}
\begin{document}

\begin{abstract}
This paper studies Symmetric Determinantal Representations (SDR) in
characteristic $2$, that is the representation of a multivariate
polynomial $P$ by a symmetric matrix $M$ such that $P=\det(M)$, and where
each entry of $M$ is either a constant or a variable. 

We first give some sufficient conditions for a polynomial to have an SDR.
We then give a non-trivial necessary condition,
which implies that some polynomials have no SDR, answering a question of
Grenet et al. 

A large part of the paper is then devoted to the case of multilinear
polynomials. We prove that the existence of an SDR for a multilinear
polynomial is equivalent to the existence of a factorization of the
polynomial in certain quotient rings. We develop some algorithms to test
the factorizability in these rings and use them to find SDRs when they
exist. Altogether, this gives us polynomial-time algorithms to factorize
the polynomials in the quotient rings and to build SDRs.
We conclude by describing the case of Alternating Determinantal
Representations in any characteristic.
\end{abstract}

\subjclass[2000]{12705, 15A15, 11T55}
\keywords{Determinantal Representations; Finite Fields; Symmetric Determinants; Factorization in Quotient Rings; Characteristic 2}

\maketitle

\section{Introduction} 

Let $\FF$ be some field of characteristic $2$. 
A Symmetric Determinantal Representation (SDR) of a polynomial $P\in\FF[x_1,\dots,x_m]$ is a symmetric matrix $M$ with entries 
in $\FF\cup\{x_1,\dots,x_m\}$ such that $\det(M)=P$. One can also find in the literature other definitions where for instance 
the symmetric matrix has linear (degree-$1$) polynomials as entries. 
The two definitions are essentially equivalent, and we shall see that for our purposes, taking one or the other does not make any difference.

Symmetric Determinantal Representations have been studied at least from the beginning of the twentieth century~\cite{Dix02,Dic21} and apparently even from the nineteenth century~\cite{Bea00}. Definite SDRs are SDRs with the additional requirement that the matrix obtained by setting all the variables to zero is positive semi-definite. Definite SDRs play an important role in convex optimization, leading to a renew of interest in these representations, definite or not, in the recent years~\cite{HMV06,HV07,Bra10,NT12,GKKP11,PSV11,NPT11,Qua12}, see also \cite{Bea00} and the presentation~\cite{Sch09} for more perspectives on this. Recently, Petter Br\"and\'en has given SDRs for the elementary symmetric polynomials~\cite{Bra12}. He uses at this end graph-theoretic constructions and considers the laplacian matrix of the graph. Our constructions are also graph-theoretic but we consider the adjacency matrix of the graphs we obtain. Independently, symmetric determinants in characteristic two have also been a subject of studies~\cite{Alb38,Wat85}.

Symmetric Determinantal Representations for polynomials represented by weakly-skew circuits were given in \cite{GKKP11} for any field of characteristic different from $2$. The authors conjectured that these representations do not always exist in characteristic $2$. We prove this fact in this paper. 
To this end, we give a necessary condition for a polynomial to admit an SDR. We then focus on multilinear polynomials. For these polynomials, we show an equivalence between the existence of an SDR and the ability to factorize the polynomial in certain quotient rings. We develop algorithms to study the factorization in these quotient rings. Altogether, we obtain polynomial-time algorithms to factorize polynomials in the quotient rings and to compute SDRs of multilinear polynomials when they exist.

\begin{definition}
A polynomial $P\in\FF[x_1,\dots,x_m]$ is said \emph{representable} if it has an SDR, that is if there exists a symmetric matrix $M$ with entries 
in $\FF\cup\{x_1,\dots,x_m\}$ such that $P=\det(M)$. In this case, we say that $M$ \emph{represents} $P$.
\end{definition}

For instance, the polynomial $xy+yz+zx$ is representable as 
the determinant of the $4\times 4$ matrix 
    \[\begin{pmatrix}x & 0 & 0 & 1 \\ 0 & y & 0 & 1 \\ 0 & 0 & z & 1 \\ 1 & 1 & 1 & 0\end{pmatrix}.\]

Note that we ask the matrix to have entries 
in $\FF$. A natural relaxation would be to allow entries 
in an extension $\mathbb G$ of $\FF$. Actually, we shall show along the way that at least for multilinear polynomials, and most certainly 
for any polynomial, this relaxation is irrelevant. In the case of multilinear polynomials, Corollary~\ref{cor:baseField} shows that if a polynomial is representable, it has an SDR which only uses elements from the field generated by its 
coefficients.

\vspace{1em}
\textbf{Organization.} 
We begin by introducing some relevant algebraic background in
Section~\ref{sec:background}.
Section~\ref{sec:representable} is devoted to prove that SDRs exist for a
large class of polynomials. Then Section~\ref{sec:obstructions} proves the
main results of this paper: Some polynomials are not representable, and we
can characterize the multilinear representable polynomials. Some partial
results towards a full characterization are also given.
Section~\ref{sec:factor} is devoted to more algorithmic results. 
Using the equivalence between representability and factorizability in
certain quotient rings, we develop algorithms for these two tasks.
Section~\ref{sec:alternating} is devoted to the case of Alternating
Determinantal Representations in any characteristic.
Finally, we conclude in Section~\ref{sec:conclusion} by some remaining
open questions.

\vspace{1em}
Experimentations were done using the free open-source mathematics software
system \emph{Sage} \cite{Sage}, they allowed in return to fix a bug in its
determinant method (ticket \#10063). The algorithms presented in this paper
have been implemented and are available at 
\url{http://perso.ens-lyon.fr/bruno.grenet/publis/SymDetReprChar2.sage}.

\section{Algebraic background}\label{sec:background} 
Let us introduce some useful notions and notations. 

\subsection{Polynomials and determinants in characteristic $2$}
Let $\FF$ be any field of characteristic $2$, and let $\FF[x_1,\dots,x_m]$
be the ring of polynomials in $m$ indeterminates over $\FF$. 

Let $\alpha=(\alpha_1,\dots,\alpha_m)\in \mathbb{N}^m$, then the
\emph{primitive monomial} $x^\alpha$ is defined by
$x^\alpha=x_1^{\alpha_1}\dotsb x_m^{\alpha_m}$. 
A \emph{monomial} is a polynomial of the form $c x^\alpha$ for some
$c\in\FF$ and some primitive monomial $x^\alpha$. The constant $c$ is its
\emph{coefficient}, and $x^\alpha$ is its \emph{primitive part}. The value
$\deg x^\alpha=\sum_i\alpha_i$ is its \emph{total degree} and $\deg_i
x^\alpha=\alpha_i$ is its degree \emph{with respect to the variable
$x_i$}.
A polynomial is said to be \emph{multilinear} if its monomials
$c x_1^{\alpha_1}\dotsb x_m^{\alpha_m}$ satisfy $\alpha_i\leq 1$ for all
$i\leq m$.
\newline

Working in characteristic $2$ causes some inconveniences, like the
impossibility to halve.
But, it also simplifies some computations.
First, Frobenius endomorphism ensures that for any polynomials $P_1$ and
$P_2$, we have $(P_1+P_2)^2 = P_1^2 + P_2^2$.
Second, the determinant can easily be computed:
\begin{proposition}\label{prop:detchar2}
The determinant of an $(n\times n)$ symmetric matrix with entries in $\FF[x_1,\dotsc,x_m]$ is
\[\det(M)=\sum_\sigma \prod_{i=1}^n
M_{i,\sigma(i)},\] where $\sigma$ ranges over all involutions from
$\{1,\dots,n\}$ to itself, that is permutations such that
$\sigma^{-1}=\sigma$.
\end{proposition}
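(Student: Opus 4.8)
The plan is to start from the Leibniz formula for the determinant and exploit two features of characteristic $2$: all signs collapse to $1$, and any term occurring twice vanishes. Recall that by definition $\det(M)=\sum_{\sigma\in S_n}\operatorname{sgn}(\sigma)\prod_{i=1}^n M_{i,\sigma(i)}$, the sum ranging over \emph{all} permutations of $\{1,\dots,n\}$. Since $\operatorname{sgn}(\sigma)\in\{+1,-1\}$ and $-1=1$ in $\FF$, the signs all become $1$, so that $\det(M)=\sum_{\sigma\in S_n}\prod_{i=1}^n M_{i,\sigma(i)}$. The statement will therefore follow once I show that every permutation which is \emph{not} an involution contributes nothing to this sum.

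The key step is to compare the term attached to $\sigma$ with the term attached to its inverse $\sigma^{-1}$, using symmetry. Reindexing the product by setting $j=\sigma(i)$, i.e. $i=\sigma^{-1}(j)$, and then applying $M_{k,l}=M_{l,k}$, I get
\[
\prod_{i=1}^n M_{i,\sigma(i)}
=\prod_{j=1}^n M_{\sigma^{-1}(j),\,j}
=\prod_{j=1}^n M_{j,\,\sigma^{-1}(j)}.
\]
Thus the monomial produced by $\sigma$ is identical to the monomial produced by $\sigma^{-1}$. This is the only place where the hypothesis that $M$ is symmetric is used, and it is the heart of the argument.

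I then organize the outer sum according to the map $\tau\mapsto\tau^{-1}$, which is an involution on $S_n$. Its orbits have size $1$ or $2$: the fixed points are exactly the permutations with $\sigma=\sigma^{-1}$, that is the involutions, while every other $\sigma$ is paired with a distinct $\sigma^{-1}$. For such a size-two orbit $\{\sigma,\sigma^{-1}\}$, the computation above shows the two associated terms are equal, so their total contribution is $2\cdot\prod_i M_{i,\sigma(i)}=0$ in characteristic $2$. Only the singleton orbits survive, giving $\det(M)=\sum_{\sigma}\prod_{i=1}^n M_{i,\sigma(i)}$ with $\sigma$ ranging over involutions, as claimed.

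The mathematics here is light, so the main obstacle is bookkeeping rather than depth: I must make sure the reindexing $j=\sigma(i)$ in the product is handled cleanly (a product over $i$ is genuinely unchanged under reindexing since multiplication is commutative), and I must verify that the orbits of $\tau\mapsto\tau^{-1}$ really do partition $S_n$ into involutions and genuine pairs, so that no term is miscounted. Once these two routine points are checked, the cancellation of doubled terms in characteristic $2$ finishes the proof.
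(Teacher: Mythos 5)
Your proof is correct and follows essentially the same route as the paper's: drop the signs since $-1=1$ in characteristic $2$, use symmetry of $M$ to show the terms of $\sigma$ and $\sigma^{-1}$ coincide, and cancel these pairs so that only involutions survive. Your write-up is slightly more explicit about the reindexing of the product and the orbit structure of $\tau\mapsto\tau^{-1}$, but the argument is the same.
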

\begin{proof} The definition of the determinant is \[\det(M)=\sum_\sigma
\operatorname{sgn}(\sigma) \prod_{i=1}^n M_{i,\sigma(i)},\] where $\sigma$
ranges over all permutations of $\{1,\dots,n\}$. 
Actually, the signature of a permutation is either $1$ or $-1$, and those
two elements coincide in characteristic $2$. 
This means that the signature can be removed from the definition.
\newline
Consider $P_\sigma=\prod_i M_{i,\sigma(i)}$ for some permutation such that
$\sigma\neq\sigma^{-1}$. 
Then, $M_{i,\sigma^{-1}(i)}=M_{\sigma^{-1}(i),i}$ as $M$ is symmetric, and
$P_{\sigma^{-1}}=P_\sigma$. 
Thus the products for a permutation and its inverse cancel out in the sum. 
This shows that the sum can be restricted to involutions.
\end{proof}

\subsection{Quotient rings}\label{sec:quotientRings}

Given some polynomials $p_1,\dots,p_k$, we denote by 
$\langle p_1,\dots,p_k\rangle$ the ideal they generate. 
That is, \[\langle p_1,\dots,p_k\rangle = \left\{\sum_{i=1}^k p_iq_i :
q_i\in\FF[x_1,\dots,x_m]\right\}.\]
Given a tuple $\ell=(\ell_1,\dots,\ell_m)\in\FF^m$, we define the ideal
\[\I(\ell)=\langle x_1^2+\ell_1,\dots,x_m^2+\ell_m\rangle.\]
We also define the quotient ring $\Ring(\ell)$ as
$\FF[x_1,\dots,x_m]/\I(\ell)$ and denote by $\pi$ or $\pi_\ell$ the
canonical projection $\FF[x_1,\dots,x_m]\to\Ring(\ell)$.
The restriction of this projection to $\FF$ is one-to-one, hence $\FF$
naturally embeds into $\Ring(\ell)$, and the elements of $\FF\subseteq
\Ring(\ell)$ are called \emph{constants}.
This morphism of rings can be extended to matrices by $\pi(A)_{i,j} =
\pi(A_{i,j})$, and commutes with the determinant: 
$\pi \circ \det = \det \circ \pi$.
An element of $\Ring(\ell)$ is said to be \emph{linear} if it is the
projection of a linear polynomial.

Since the quotient identifies the squares of variables with constants, any
element of $r\in\Ring(\ell)$ has a unique multilinear representative in
$P\in\FF[x_1,\dots,x_m]$: we denote it by $\rho(r)$ or $\rho_\ell(r)$. We
have $\pi\circ\rho=\operatorname{Id}_{\Ring(\ell)}$. 
We denote by $\Mult_\ell$ or $\Mult$
the map $\rho_\ell\circ\pi_\ell$ that sends a polynomial to the
multilinear polynomial obtained by replacing each factor $x_i^2$ by
$\ell_i$.
For instance, let $P(x,y,z)=x^2y+z^3+xz+y$ then 
$\Mult_{(0,0,0)}(P)=xz+y$ and $\Mult_{(1,1,1)}(P)=y+z+xz+y=z+xz$.

The square of any element of $\Ring(\ell)$ belongs to $\FF$.
In particular, an element of $\Ring(\ell)$ is invertible if, and only if,
its square is not zero.
For example, $\pi(x_1x_2+1)$ is invertible if, and only if,
$\ell_1\ell_2\neq 1$.

Given a tuple $\ell=(\ell_1,\dots,\ell_m)\in\FF^m$, we denote by $\ell^2$
the tuple $(\ell_1^2,\dots,\ell_m^2)$ and say that $\ell^2$ is a
\emph{tuple of squares}.
If $\ell^2$ is a tuple of squares, the square of an element $r$ of
$\Ring(\ell^2)$ is the square of a unique element $c$ of $\FF$: we denote it
by $|r|$ or $|r|_{\ell^2}$, and call it the \emph{absolute value} of $r$. 
We remark that $|r_1r_2|=|r_1|\times|r_2|$ and 
$|r_1+r_2|=|r_1|+|r_2|$ for all $r_1,r_2\in\Ring(\ell^2)$. Furthermore,
$r$ is invertible if, and only if, $|r|\neq 0$.

\section{Some representable polynomials} \label{sec:representable} 

We deal with some positive results. Even though the main part of this paper is focused on negative results, we need to be able to represent some class of polynomials in order to give a characterization. 

In order to clarify some proofs, we will use the 
correspondence between permanents and cycle covers in graphs. We refer the reader to~\cite{Diestel} for the definitions concerning graphs.
Let $G$ be a weighted digraph and $M$ its adjacency matrix. We assume 
that the weights of $G$ are elements of $\FF[x_1,\dots,x_m]$. 
A \emph{cycle} is a set of distinct arcs $\{(v_1,v_2),(v_2,v_3),\dotsc,(v_{k-1},v_k)\}$ such that all the $v_i$'s are distinct but $v_1=v_k$. 
A \emph{cycle cover} of $G$ is a set of disjoint cycles such that each vertex of the digraph belongs to exactly one cycle. The \emph{weight} of a cycle cover is the product of the weights of all the arcs it uses. It is easily seen from the definition 
that the permanent of $M$ equals the sum of the weights of all the cycle covers of $G$. Since the characteristic of $\FF$ is two, the permanent of $M$ equals its determinant.

Suppose now that $G$ is symmetric (that is $M$ is symmetric). Proposition
\ref{prop:detchar2} shows that only some special cases of cycle covers can
be considered. More precisely, the determinant of $M$ equals the sum of
the weights of the cycle covers of $G$ corresponding to an involution.
These cycle covers are made of length-$1$ and length-$2$ cycles, and are
called \emph{partial matchings}. 

As $G$ is symmetric, it can actually be considered as an undirected graph. Length-$1$ cycles are \emph{loops}, and length-$2$ cycles are \emph{edges}. The weight of a length-$2$ cycle is the product of the weights of its 
arcs, that is the square of the weight of the edge. Thus consider a partial matching of a graph $G$ with (symmetric) adjacency matrix $M$. It can be viewed as a set $\mu$ of edges such that no vertex belongs to two 
distinct 
edges.  The discussion is summarized by the identity 
\[\det(M)=\sum_\mu \Bigl(\prod_{e\in\mu} w(e)^2 \times \prod_{v\notin\mu} w(v)\Bigr),\]
where $w(e)$ and $w(v)$ represent the weights of an edge $e$ and of a loop on a vertex $v$ respectively, $v\notin\mu$ means that the vertex $v$ is not covered by $\mu$, and $\mu$ ranges over all partial matchings of $G$.
An example is given by Figure~\ref{fig:ex-graph}: The adjacency matrix of the graph is given is the introduction. The only partial matchings are made of one of the three edges, to cover the central vertex, and two loops. By convention, an edge with no indicated weight has weight $1$.

\svg{ex-graph}{Graph representing $xy+xz+yz$.}{%
  \setlength{\unitlength}{60bp}%
  \begin{picture}(1,0.86666667)%
    \put(0,0){\includegraphics[width=\unitlength]{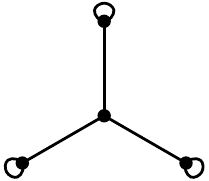}}%
    \put(-0.0260047,-0.03313245){\makebox(0,0)[b]{\smash{$y$}}}%
    \put(1.0372615,-0.04127047){\makebox(0,0)[b]{\smash{$z$}}}%
    \put(0.50039057,0.88657064){\makebox(0,0)[b]{\smash{$x$}}}%
  \end{picture}%
  }

In the following, if $M$ is a symmetric matrix, we denote by $G(M)$ the graph whose adjacency matrix 
is $M$. Conversely, given a graph $G$, we denote by $M(G)$ its adjacency matrix.
By a slight abuse of language, we shall say that a graph \emph{represents} a polynomial when its adjacency matrix is an SDR of the polynomial. In the same way, we write $\det(G)$ instead of $\det(M(G))$ to simplify the notations.
If $U=\{v_1,\dotsc,v_k\}$ is a subset of the vertices of a graph $G$, then $G\setminus U$ represents the induced subgraph of $G$ obtained by deleting the vertices of $U$ in $G$. For an edge $e$ of $G$, the graph $G-e$ is the graph obtained by deleting $e$ from $G$, but keeping its both extremities.

\begin{lemma}\label{lem:prod}
Let $P$ and $Q$ 
be two 
representable polynomials. Then $(P\times Q)$ is representable.
\end{lemma}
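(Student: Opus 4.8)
The plan is to use the block-diagonal construction. Suppose $M$ is an $n\times n$ symmetric matrix representing $P$ and $N$ is a $k\times k$ symmetric matrix representing $Q$, so that $\det(M)=P$, $\det(N)=Q$, and every entry of $M$ and of $N$ lies in $\FF\cup\{x_1,\dots,x_m\}$. I would then form the $(n+k)\times(n+k)$ matrix
\[
M'=\begin{pmatrix} M & 0 \\ 0 & N \end{pmatrix},
\]
in which the two off-diagonal blocks are filled with zeros, and claim that $M'$ represents $P\times Q$.

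Three things have to be checked, all routine. First, $M'$ is symmetric: its diagonal blocks $M$ and $N$ are symmetric by hypothesis, and the off-diagonal blocks are zero, so $M'^{\mathsf T}=M'$. Second, every entry of $M'$ is a constant or a variable: the entries coming from $M$ and $N$ are of this form by hypothesis, and the newly introduced entries are all $0\in\FF$, hence constants. Third, and this is the only computation, $\det(M')=\det(M)\cdot\det(N)=P\times Q$.

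For the determinant identity I would argue in the graph language just set up. The graph $G(M')$ is the disjoint union of $G(M)$ and $G(N)$, since no arc joins the first $n$ vertices to the last $k$ ones. By Proposition~\ref{prop:detchar2}, $\det(M')$ is the sum over the partial matchings of $G(M')$ of their weights; but a partial matching of a disjoint union is exactly a pair consisting of a partial matching of $G(M)$ and one of $G(N)$, and its weight is the product of the two weights, since every loop and every edge stays within a single component. Splitting this sum as a product of two sums then gives $\det(M')=\det(M)\cdot\det(N)=P\times Q$. Alternatively, one may simply invoke the standard multiplicativity of the determinant on block-diagonal matrices.

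I do not expect any genuine obstacle here: this is the easy base case establishing closure of representability under multiplication, and the only ingredient is the factorization of partial matchings across a disjoint union (equivalently, the block-diagonal determinant formula). The one point worth a word is that $M$ and $N$ may share variables, but this causes no difficulty, since a valid SDR only requires each individual entry to be a single constant or variable, a condition that $M'$ plainly satisfies.
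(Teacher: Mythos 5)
Your proof is correct and follows exactly the paper's approach: the paper likewise takes the block-diagonal matrix with blocks $M$ and $N$, viewed as the disjoint union of $G(M)$ and $G(N)$. The additional verifications you spell out (symmetry, admissible entries, multiplicativity of the determinant across components) are exactly the routine details the paper leaves implicit.
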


\begin{proof}
Let $M$ and $N$ be two symmetric matrices representing $P$ and $Q$ respectively. To represent the product by a graph, it is enough to consider the disjoint union of $G(M)$ and $G(N)$. This means that the SDR of $(P\times Q)$ is a block-diagonal matrix with two blocks being $M$ and $N$.
\end{proof}

The first part of the next lemma was proved in \cite{GKKP11}. We give here another proof which is suitable for the second part. 

\begin{lemma}\label{lem:squares}
Let $P\in\FF[x_1,\dots,x_m]$. Then $P^2$ is representable.

Moreover, there exists a graph $G$ that represents $P^2$ with two distinguished vertices $s$ and $t$ and such that $\det(G\setminus\{s,t\})=1$ and $\det(G\setminus\{s\})=\det(G\setminus\{t\})=0$.
\end{lemma}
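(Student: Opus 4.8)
The plan is to reduce to monomials via the Frobenius identity and then build the graph by gluing one ``path gadget'' per monomial at two common terminals $s$ and $t$. Writing $P=\sum_{k=1}^r m_k$ as a sum of monomials, the Frobenius endomorphism gives $P^2=\sum_{k=1}^r m_k^2$, so it suffices to produce a single symmetric graph $G$ whose partial-matching expansion (the identity for $\det(M)$ recorded after Proposition~\ref{prop:detchar2}) evaluates to $\sum_k m_k^2$ while satisfying the three boundary conditions. Throughout I will exploit the fact that, since none of the vertices I introduce carries a loop, any partial matching that leaves a vertex uncovered contributes a factor $w(v)=0$; hence only partial matchings that saturate every vertex of the relevant (sub)graph survive in the determinant.

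For a single monomial $m=c\,x_{i_1}\cdots x_{i_d}$ I use a path on $2d+2$ vertices $s=v_0,v_1,\dots,v_{2d+1}=t$, assigning to the ``odd'' edges $v_0v_1,v_2v_3,\dots,v_{2d}v_{2d+1}$ the weights $c,x_{i_1},\dots,x_{i_d}$ and to the ``even'' (interior) edges $v_1v_2,v_3v_4,\dots$ the weight $1$ (the degenerate case $d=0$ being the single edge $s\!-\!t$ of weight $c$). Covering $s$ forces the edge $v_0v_1$, which in turn forces $v_2v_3$, and so on down the path, so the unique saturating partial matching is the set of odd edges, of weight $m^2$; thus $\det(G)=m^2$. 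Removing both terminals leaves the interior path $v_1,\dots,v_{2d}$, whose only perfect matching is the set of even edges, of weight $1$, so $\det(G\setminus\{s,t\})=1$; removing just one terminal forces the interior matching and strands the other terminal, so $\det(G\setminus\{s\})=\det(G\setminus\{t\})=0$.

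Finally I glue the $r$ gadgets in parallel, identifying all copies of $s$ into one terminal and all copies of $t$ into one terminal while keeping the interior vertices disjoint. The key point to verify is that there are no cross terms: in any saturating partial matching of $G$, the terminal $s$ is matched along exactly one gadget $G_k$; the path-forcing argument above then saturates all of $G_k$ (covering $t$ as well) with weight $m_k^2$, after which every other gadget $G_j$ has both its terminals already used and so must be saturated by its own interior matching of weight $1$. Summing over the gadget $k$ through which $s$ is matched gives $\det(G)=\sum_k m_k^2=P^2$. The same forcing argument yields the boundary identities: deleting $s$ and $t$ decouples the gadgets into their interior paths, each contributing its unique weight-$1$ matching, so $\det(G\setminus\{s,t\})=1$; deleting a single terminal forces every gadget's interior matching and strands the surviving terminal, so $\det(G\setminus\{s\})=\det(G\setminus\{t\})=0$. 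The main obstacle is precisely this no-cross-term verification, which is where characteristic $2$ is used twice over: once to replace the determinant by the sum over involutions (Proposition~\ref{prop:detchar2}), and once, through the absence of loops, to force each surviving matching to activate a single monomial path.
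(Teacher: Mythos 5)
Your proof is correct and follows essentially the same construction as the paper: one path gadget per monomial (variable and coefficient edges alternating with weight-$1$ edges), glued in parallel at the two terminals $s$ and $t$, with Frobenius handling the cross terms and Proposition~\ref{prop:detchar2} reducing the determinant to partial matchings. Your explicit forcing and no-cross-term arguments are slightly more detailed than the paper's parity-based phrasing, but the underlying idea is identical.
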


\begin{proof}
Let $P=\sum_{\alpha\in\mathbb N^m} c_\alpha x^\alpha$ where $x^{\alpha}=x_1^{\alpha_1}\dotsm x_m^{\alpha_m}$. The square of a monomial $c_\alpha x^\alpha$ can be represented by a graph $G_\alpha$ of size $(2\deg(x^\alpha)+2)$. For a variable $x_i$ with exponent $\alpha_i$, we build $\alpha_i$ copies of a graph with two vertices and an edge of weight $x_i$ inbetween. We also build a graph with two vertices and an edge of weight $c_\alpha$ inbetween. These $(\deg (x^\alpha)+1)$ size-$2$ graphs are arranged in a line to build $G_\alpha$: The graphs are arranged in some arbitrary order and an edge of weight $1$ links two consecutive graphs (Figure~\ref{fig:chemin}).
\svg{chemin}{Graph $G_\alpha$ corresponding to some monomial $c_\alpha x^\alpha$ with $\alpha_1\ge 1$ and $\alpha_m\ge 2$.}{%
  \setlength{\unitlength}{154.51407471bp}%
  \begin{picture}(1,0.02529586)%
    \put(0,0){\includegraphics[width=\unitlength]{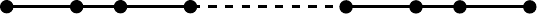}}%
    \put(0.08387192,0.04272353){\makebox(0,0)[b]{\smash{$c_\alpha$}}}%
    \put(0.92525382,0.04272353){\makebox(0,0)[b]{\smash{$x_m$}}}%
    \put(0.28938381,0.04080218){\makebox(0,0)[b]{\smash{$x_1$}}}%
    \put(-0.05545727,-0.00103428){\makebox(0,0)[b]{\smash{$s$}}}%
    \put(1.05847286,-0.00191051){\makebox(0,0)[b]{\smash{$t$}}}%
    \put(0.7096788,0.04272353){\makebox(0,0)[b]{\smash{$x_m$}}}%
  \end{picture}%
  }
The extremities of $G_\alpha$ are named $s$ and $t$. There is no loop in $G_\alpha$, therefore partial matchings are perfect matchings. The only perfect matching of $G_\alpha$ is made of all the edges of weight different from $1$. The weight of such a matching is $c_\alpha^2 (x^\alpha)^2$. The only matching of $G_\alpha\setminus\{s,t\}$ is made of the edges of weight $1$, and has weight $1$. Since $G_\alpha$ has an even number of vertices, $G_\alpha\setminus\{s\}$ and $G_\alpha\setminus\{t\}$ have no perfect matching.

Given a graph $G_\alpha$ for each monomial of $P$, the graph $G$ is the union of these graphs in which all the vertices with name $s$ on the one hand, and all vertices with name $t$ on the other hand, are merged. The perfect matchings of $G$ are then made of a perfect matching of some $G_\alpha$, and perfect matchings of weight $1$ of $G_\beta\setminus\{s,t\}$ for all $\beta\neq\alpha$. The sum of the weights of the matchings of $G$ is $\det(G)=\sum_\alpha c_\alpha^2(x^\alpha)^2=P^2$. Furthermore, the only perfect matching of $G\setminus\{s,t\}$ is made of perfect matchings of $G_\alpha\setminus\{s,t\}$ for all $\alpha$, thus $\det(G\setminus\{s,t\})=1$. By the same parity argument as before, $\det(G\setminus\{s\})=\det(G\setminus\{t\})=0$.
\end{proof}

This allows us to represent in a quite simple way a large class of polynomials.

\begin{proposition}\label{prop:representable}
Let $P(x_1,\dots,x_m)=L_1\times L_2\times \dotsb \times L_k$, where for $1\le i\le k$,
\[L_i(x_1,\dots,x_m)=P_{i0}^2 + x_1 P_{i1}^2 + \dotsb + x_m P_{im}^2\]
for some $P_{ij}\in\FF[x_1,\dots,x_m]$. 
Then $P$ is representable.
\end{proposition}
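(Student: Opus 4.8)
The plan is to reduce to a single factor and then realise it by a graph built around the gadget of Lemma~\ref{lem:squares}. Since $P=L_1\cdots L_k$ is a product, Lemma~\ref{lem:prod} reduces everything to proving that each factor $L_i$ is representable; so I fix $i$, write $Q_j=P_{ij}$, and aim to represent a single $L=Q_0^2+x_1Q_1^2+\cdots+x_mQ_m^2$ (with the convention $x_0=1$).

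The key tool I would establish first is a \emph{substitution principle}: the graph $G$ produced by Lemma~\ref{lem:squares} for $R^2$ behaves exactly like a single edge of weight $R$ joining two loopless terminals. Its four characteristic values $\bigl(\det(G),\det(G\setminus\{s\}),\det(G\setminus\{t\}),\det(G\setminus\{s,t\})\bigr)=(R^2,0,0,1)$ coincide with those of a weight-$R$ edge between its endpoints. I would prove that inside any host graph, decomposing the matching sum of Proposition~\ref{prop:detchar2} according to how the two terminals $s,t$ are covered --- internally by the gadget, or externally by the host --- only the ``both external'' and ``both internal'' classes survive, since the mixed classes carry the factors $\det(G\setminus\{s\})=\det(G\setminus\{t\})=0$. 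These two surviving classes reproduce, respectively, the ``edge unused'' contribution (factor $\det(G\setminus\{s,t\})=1$) and the ``edge used'' contribution (factor $\det(G)=R^2$) of a genuine weight-$R$ edge. Hence the gadget may be glued at its terminals in place of such an edge without changing the determinant.

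With this in hand I would build a template graph $\Gamma$ whose determinant, with the $Q_j$ placed as formal edge-weights, equals $L$; substituting the gadgets then gives the actual SDR. The template is a hub $h$ carrying no loop, to which I attach, for each $j=0,\dots,m$, a module: a path $a_j-b_j-c_j-d_j$ whose middle edge $\{b_j,c_j\}$ has weight $Q_j$, whose two outer edges have weight $1$, with a loop of weight $x_j$ on $d_j$ and an edge $\{h,a_j\}$ of weight $1$. Evaluating $\det(\Gamma)$ via involutions: $h$ is loopless, so it is matched to exactly one $a_a$; the chosen (active) module is then forced into the matching $\{b_a,c_a\}$ leaving the loop at $d_a$ uncovered, contributing $x_aQ_a^2$, whereas every other (inactive) module admits only its unique weight-$1$ perfect matching $\{a_c b_c,\,c_c d_c\}$ (every near-perfect matching of a module exposes a loopless vertex and vanishes). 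Summing over the active module yields $\det(\Gamma)=\sum_a x_aQ_a^2=L$. Finally, replacing each weight-$Q_j$ edge $\{b_j,c_j\}$ by the Lemma~\ref{lem:squares} gadget --- identifying its loopless terminals $s,t$ with the loopless vertices $b_j,c_j$ --- keeps the determinant equal to $L$ by the substitution principle, and produces a symmetric matrix with entries in $\FF\cup\{x_1,\dots,x_m\}$.

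\emph{The main obstacle} is arranging for the active module to contribute exactly the single factor $x_a$ while each inactive module contributes exactly $1$. The naive attempts --- a star graph or an arrow matrix with loops $x_j$ on the leaves, or gluing the square-gadgets directly at one common pair of terminals --- all produce the wrong ``all-but-one'' products $\sum_a Q_a^2\prod_{c\ne a}x_c$ instead of $\sum_a x_aQ_a^2$. Worse, a parity count rules out the tempting shortcut of packaging $x_jQ_j^2$ into one two-terminal gadget: such a gadget would need $\det=x_jQ_j^2$ (odd total degree, forcing an odd number of vertices) together with $\det(\cdot\setminus\{s,t\})=1$ (even degree), which is impossible, so the factor $x_j$ cannot be injected ``at the terminals.'' The resolution, and the real content of the construction, is to carry $x_j$ on an \emph{internal} loop of an even-sized module which is matched away (weight $1$) whenever the module is inactive and exposed (weight $x_j$) only when the hub selects it.
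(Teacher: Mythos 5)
Your proposal is correct and takes essentially the same route as the paper: reduce to one factor via Lemma~\ref{lem:prod}, build a star-shaped graph whose loopless hub selects the active term $x_jQ_j^2$ (inactive legs being forced into weight-$1$ matchings), and splice the Lemma~\ref{lem:squares} gadget in place of each coefficient edge using exactly the $(R^2,0,0,1)$ interface decomposition that the paper carries out inline. The only real difference is cosmetic: the paper's spider has legs of length two with the coefficient edge $\lambda_j$ incident to the hub and the weight-$1$ edge outside, so your length-four legs (and the accompanying worry that the coefficient edge cannot sit next to the hub) are not needed, though they are harmless.
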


\begin{proof}
By Lemma~\ref{lem:prod}, it is sufficient to show how to represent each $L_i$. 
We first prove how to represent a polynomial of the form
\[L(x_1,\dots,x_m)=\lambda_0^2+x_1\lambda_1^2+\dotsb+x_m\lambda_m^2,\]
where the $\lambda_j$'s are constants from $\FF$. 

\svg{linGraph}{Graph representing $L=\lambda_0^2+x_1\lambda_1^2+\dotsb+x_m\lambda_m^2$.}{%
  \setlength{\unitlength}{121.6bp}%
  \begin{picture}(1,0.81578947)%
    \put(0,0){\includegraphics[width=\unitlength]{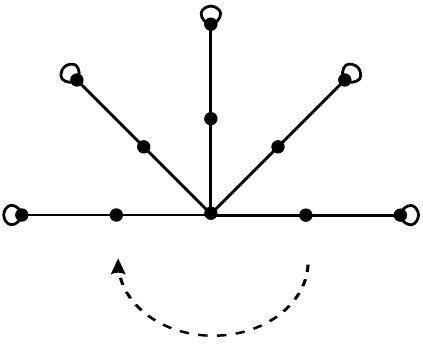}}%
    \put(0.5043773,0.81495814){\makebox(0,0)[b]{\smash{$1$}}}%
    \put(0.89618235,0.67439623){\makebox(0,0)[b]{\smash{$x_1$}}}%
    \put(-0.02038761,0.24336505){\makebox(0,0)[b]{\smash{$x_{m-1}$}}}%
    \put(1.05925323,0.2906023){\makebox(0,0)[b]{\smash{$x_2$}}}%
    \put(0.09509331,0.67917569){\makebox(0,0)[b]{\smash{$x_m$}}}%
    \put(0.55020256,0.44664731){\makebox(0,0)[b]{\smash{$\lambda_0$}}}%
    \put(0.67911895,0.37091422){\makebox(0,0)[b]{\smash{$\lambda_1$}}}%
    \put(0.61383141,0.23334264){\makebox(0,0)[b]{\smash{$\lambda_2$}}}%
    \put(0.40272114,0.23447863){\makebox(0,0)[b]{\smash{$\lambda_{m-1}$}}}%
    \put(0.36524817,0.36057162){\makebox(0,0)[b]{\smash{$\lambda_m$}}}%
  \end{picture}%
  }

The linear polynomial $L$ is represented by the graph $G_L$ given on Figure~\ref{fig:linGraph}. 
We 
prove that it effectively represents $L$: A partial matching has to match the central vertex with some of its neighbors. Once a neighbor is chosen, say in the direction of $x_i$, the loop with weight $x_i$ has to be chosen. Then, there is only one 
choice to have a partial matching which consists 
in covering the remaining vertices by the outside edges. Thus the weight of such a partial matching is $x_i\lambda_i^2$, and the sum over all partial matchings equals $L$.

Now, let $G_{P_i}$ be the graph representing the polynomial $P_i^2$ given by Lemma~\ref{lem:squares}, with its two distinguished vertices $s$ and $t$. By a slight abuse of language, we call $\lambda_i$ the edge that has weight $\lambda_i$ in $G_L$, and denote by $s_i$ and $t_i$ its 
extremities. Let $G_L-\lambda_i$ be the graph obtained from $G_L$ by removing the edge $\lambda_i$. We build a new graph $G_L'$ in which $G_{P_i}$ \emph{replaces} the edge $\lambda_i$: 
The graph $G_L'$ is the disjoint union of $G_L-\lambda_i$ and $G_{P_i}$, in which $s_i$ and $s$ (respectively $t_i$ and $t$) are merged.

A partial matching of $G_L$ either is a partial matching of $G_L-\lambda_i$, or is made of $\lambda_i$ and a partial matching of $G_L\setminus\{s,t\}$. Thus $\det(G_L)=\det(G_L-\lambda_i)+\lambda_i^2\det(G_L\setminus\{s,t\})$. In $G_L'$, a partial matching can also be of two sorts: Either it is made of partial matchings of $G_{P_i}\setminus\{s,t\}$ and $G_L-\lambda_i$, or of partial matchings of $G_{P_i}$ and $G_L\setminus\{s,t\}$. Indeed, no partial matching exists covering $G_{P_i}\setminus\{s\}$ (respectively $G_{P_i}\setminus\{t\}$). Thus 
\begin{align*}
\det(G_L') &=\det(G_{P_i}\setminus\{s,t\})\times\det(G_L-\lambda_i) + \det(G_{P_i})\times\det(G_L\setminus\{s,t\})\\
           &=  1\times\det(G_L-\lambda_i) + P_i^2\times\det(G_L\setminus\{s,t\}).
\end{align*}
This shows that we can replace in $G_L$ each $\lambda_i$ by the graph $G_{P_i}$ to obtain an SDR of
$P_0^2+x_1 P_1^2+\dotsb+x_m P_m^2$.
\end{proof}

In particular, this theorem shows that if $\FF$ is a finite field of characteristic $2$, every linear polynomial is representable since every element in such a field is a quadratic residue. 

\begin{definition}\label{gSDR}
Let $P\in\FF[x_1,\dotsc,x_m]$. A \emph{generalized Symmetric Determinantal Representation} (gSDR) of $P$ is a symmetric matrix $M$ such that $\det(M)=P$ and whose entries are polynomials of $\FF[x_1,\dotsc,x_m]$ such that each diagonal entry is 
either a constant or
of the form $P_0^2+x_1P_1^2+\dotsb+x_mP_m^2$ where $P_1,\dotsc,P_m\in\FF[x_1,\dotsc,x_m]$.
\end{definition}

\begin{theorem}\label{thm:representable}
A polynomial $P\in\FF[x_1,\dotsc,x_m]$ is representable if, and only if, it admits a gSDR. 
\end{theorem}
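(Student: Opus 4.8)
The plan is to prove the two implications separately. The forward direction, that every SDR is a gSDR, is immediate: if $M$ is an SDR then every entry lies in $\FF\cup\{x_1,\dots,x_m\}$, so in particular every entry is a polynomial, and each diagonal entry is either a constant or a variable $x_i$, which is of the admissible form $P_0^2+x_1P_1^2+\dots+x_mP_m^2$ with $P_0=0$, $P_i=1$ and $P_j=0$ for $j\neq i$. Hence $M$ itself is already a gSDR of $P$.

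The substance is the converse. Starting from a gSDR $M$ of $P$, I would build a genuine SDR by locally replacing each forbidden entry with a graph gadget, keeping the determinant unchanged at every step. I work with the graph $G(M)$ and rewrite the involution formula of Proposition~\ref{prop:detchar2} as the partial-matching identity $\det(G)=\sum_\mu\prod_{e\in\mu}w(e)^2\prod_{v\notin\mu}w(v)$. The crucial observation is that each off-diagonal weight $w(e)$ occurs only through its square $w(e)^2$, whereas diagonal entries occur linearly. This splits the forbidden entries into two types, each handled by a construction of Section~\ref{sec:representable}. For an off-diagonal edge $e=(v,w)$ of arbitrary polynomial weight $b$, since $b$ occurs only as $b^2$, I replace $e$ by the graph $G_b$ of Lemma~\ref{lem:squares} representing $b^2$, identifying its distinguished vertices $s,t$ with $v,w$; using $\det(G_b)=b^2$, $\det(G_b\setminus\{s,t\})=1$ and $\det(G_b\setminus\{s\})=\det(G_b\setminus\{t\})=0$, the exact matching analysis of Proposition~\ref{prop:representable}'s edge replacement shows that $\det$ is unchanged. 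After treating every edge, all off-diagonal entries sit inside such gadgets and are thus variables or constants.

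For the diagonal, a constant entry is already admissible, so the remaining case is a diagonal entry $d=P_0^2+x_1P_1^2+\dots+x_mP_m^2$ on a vertex $v$. Here I expand along $v$ as $\det(M)=d\,\det(M\setminus\{v\})+\sum_u M_{v,u}^2\,\det(M\setminus\{v,u\})$ and replace the loop at $v$ by the graph $H_d$ built as in Proposition~\ref{prop:representable}, identifying its central vertex $c$ with $v$. The key is that $H_d$ can be arranged to satisfy $\det(H_d)=d$ together with $\det(H_d\setminus\{c\})=1$; granting this, a case analysis according to whether $v$ is covered inside $H_d$ (contributing $d\,\det(M\setminus\{v\})$), matched to the rest of $M$ (contributing $\det(H_d\setminus\{c\})\sum_u M_{v,u}^2\det(M\setminus\{v,u\})$), or left uncovered (contributing $0$, since $c$ has loop $0$) reproduces the expansion above and preserves $\det$. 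Iterating these two replacements over all edges and vertices yields a symmetric matrix all of whose entries are variables or constants, that is, an SDR of $P$.

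The main obstacle is the diagonal gadget: one must verify that the construction of Proposition~\ref{prop:representable} admits a distinguished vertex $c$ with $\det(H_d\setminus\{c\})=1$. For the base case of constant coefficients this reduces to checking that deleting the central vertex of the graph leaves a graph with a \emph{unique} partial matching, of weight $1$; one then checks that this boundary identity survives the substitution of each constant edge $\lambda_j$ by the square gadget $G_{P_j}$, which works because in $\det(H_d\setminus\{c\})$ no $\lambda_j$-edge is ever used, so each inserted gadget is traversed only through $\det(G_{P_j}\setminus\{s,t\})=1$. The rest is bookkeeping: ensuring that gadgets attached at a shared vertex do not interfere, and that the auxiliary loops introduced, all of weight $0$ or $1$, contribute in the matching sum exactly as intended.
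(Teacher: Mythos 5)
Your proof is correct and follows essentially the same route as the paper: the forward direction is immediate, and for the converse the paper likewise replaces each polynomial-weighted edge by the square gadget of Lemma~\ref{lem:squares} and each non-constant loop by the graph $G_L$ of Proposition~\ref{prop:representable} merged at its central vertex, with exactly your case analysis $\det(G')=\det(G_L)\det(G\setminus\{v\})+\det(G_L\setminus\{v\})\det(G_0)$. The only difference is cosmetic: the paper simply asserts the boundary identity $\det(G_L\setminus\{v\})=1$ that you verify explicitly, so your write-up is if anything slightly more complete on that point.
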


\begin{proof}
An SDR is already a gSDR. 
We once again work with the graph representation instead of the matrix representation. Suppose we have a graph $G$ where the weights of the edges are any polynomials, and the weights of the loops are 
either constants or 
of the form $P_0^2+x_1P_1^2+\dotsb+x_mP_m^2$. We show how we can turn this graph into an SDR.

We use the same technique as in the proof of Proposition~\ref{prop:representable} to replace each edge with weight $P$ in $G$ by the graph $G_P$ which is an SDR of $P^2$. It remains to show how to deal with the loops.

Suppose some vertex $v$ of $G$ has a loop of weight $L=P_0^2+x_1P_1^2+\dotsb+x_mP_m^2$. Consider the graph $G_L$ obtained in Proposition~\ref{prop:representable}, and let $G_0$ be the graph obtained from $G$ by removing the loop on $v$. Then $G$ is replaced by $G_0\cup G_L$, where the central vertex of $G_L$ is merged with $v$. Let $G'$ be this new graph. Note that $\det(G_L\setminus\{v\})=1$.
Then
\begin{align*}
\det(G')&=\det(G_L)\times \det(G\setminus\{v\})+\det(G_L\setminus\{v\})\times \det(G_0)\\
        &= L\times \det(G\setminus\{v\})+ 1\times \det(G_0)=\det(G).
\end{align*}
Repeating this operation for all the loops of the graph yields the result.
\end{proof}

\section{Obstructions to SDR}\label{sec:obstructions} 

This section deals with negative results, showing that some polynomials
have no SDR.  Section~\ref{sec:nec-cdt} is devoted to a necessary
condition that holds for any polynomial. It is followed by a simple
example of a polynomial with no SDR. We prove in
Section~\ref{sec:multilin} that this necessary condition is actually a
characterization when applied to multilinear polynomials. Finally,
Section~\ref{sec:full} gives some partial results towards a full
characterization.

\subsection{A necessary condition} \label{sec:nec-cdt} 

We aim to prove in this section a necessary condition for a polynomial to be representable. We introduce a notion of factorization \emph{modulo} some ideal $\I(\ell)$ to express this condition. 

\begin{definition}
Let $P\in\FF[x_1,\dots,x_m]$. Then $P$ is said \emph{factorizable modulo
$\I(\ell)$} if there exist some linear elements $t_1,\dotsc,t_k$ of
$\Ring(\ell)$ such that \[ \pi_\ell(P) = t_1\times\dotsb\times t_k.\]
\end{definition}

This definition can be restated as follows. A polynomial $P$ is factorizable 
\emph{modulo} $\I(\ell)$ if there exists some linear polynomials $L_1,\dotsc,L_k$
of $\FF[x_1,\dotsc,x_m]$ such that $\pi_\ell(P)=\pi_\ell(L_1\dotsm L_k)$.

\begin{theorem}\label{thm:nc}
Let $P\in\FF[x_1,\dots,x_m]$ be a representable polynomial. Then for every
tuple of squares $\ell^2\in\FF^m$, $P$ is factorizable \emph{modulo}
$\I(\ell^2)$.
\end{theorem}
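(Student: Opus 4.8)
The plan is to argue directly on the projected matrix. Fix a tuple of squares $\ell^2$, write $\Ring=\Ring(\ell^2)$ and $\pi=\pi_{\ell^2}$, and let $M$ be an SDR of $P$. Since $\pi$ commutes with the determinant, $\pi(P)=\det(\pi(M))$, so it suffices to show that the symmetric matrix $N\defeq\pi(M)$, whose entries are constants or linear elements $\pi(x_i)$, has a determinant that factors into linear elements of $\Ring$. The first observation is that Proposition~\ref{prop:detchar2} holds over any commutative ring of characteristic $2$ (the proof uses only symmetry and the pairing of $\sigma$ with $\sigma^{-1}$), so $\det(N)=\sum_\mu\bigl(\prod_{e\in\mu}N_e^2\bigr)\bigl(\prod_{v\notin\mu}N_{vv}\bigr)$ summed over partial matchings $\mu$. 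In this expression every off-diagonal entry occurs only through its square, and for $r\in\Ring$ one has $r^2=|r|^2$; hence replacing each off-diagonal entry $N_{ij}$ by the constant $|N_{ij}|\in\FF$ leaves $\det(N)$ unchanged. I will therefore work inside the class $\mathcal C$ of symmetric matrices over $\Ring$ whose off-diagonal entries are constants and whose diagonal entries are linear, and prove by induction on the size $n$ that the determinant of any matrix in $\mathcal C$ factors into linear elements. Two elementary facts are used repeatedly: a non-unit linear element $r$ (i.e.\ $|r|=0$) satisfies $r^2=|r|^2=0$; and a linear unit $s=|s|(1+n)$ has $n$ a non-unit linear element, so $(1+n)^2=1+n^2=1$ and $s^{-1}=|s|^{-1}(1+n)$ is again linear.

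For the inductive step, suppose first that some diagonal entry of $N\in\mathcal C$ is a unit; after a symmetric permutation assume it is $N_{11}$. Pivoting symmetrically on $N_{11}$ gives $\det(N)=N_{11}\cdot\det(C)$, where $C$ is the $(n-1)\times(n-1)$ symmetric Schur complement with $C_{jk}=N_{jk}+N_{j1}N_{1k}N_{11}^{-1}$. Here $N_{11}^{-1}$ is linear, and since the $N_{j1},N_{1k}$ are constants each diagonal entry $C_{jj}=N_{jj}+N_{j1}^2N_{11}^{-1}$ is linear. Replacing the off-diagonal entries of $C$ by their absolute values (which does not change $\det(C)$) puts $C$ back into $\mathcal C$, so the induction hypothesis makes $\det(C)$ a product of linear elements; multiplying by the linear element $N_{11}$ finishes this case.

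If no diagonal entry is a unit but some off-diagonal entry is, bring it to position $(1,2)$ and pivot on the $2\times2$ block $B=\left(\begin{smallmatrix}u&c\\ c&v\end{smallmatrix}\right)$, where $u=N_{11},v=N_{22}$ are non-unit linear elements and $c=N_{12}\in\FF$ is a unit. Then $\det(N)=\det(B)\cdot\det(C)$, and $\det(B)=uv+c^2$ is factored using the identity $1+UV=(1+U)(1+V)(1+U+V)$, valid whenever $U^2=V^2=0$: with $U=c^{-2}u$ and $V=v$ this gives $\det(B)=c^2(1+c^{-2}u)(1+v)(1+c^{-2}u+v)$, a product of linear elements. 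The crucial point is that $C$ stays in $\mathcal C$: a diagonal entry equals $N_{jj}+(p^2v+q^2u)(uv+c^2)^{-1}$ with $p,q\in\FF$, and since $(uv+c^2)^{-1}=c^{-2}+c^{-4}uv$ while $(p^2v+q^2u)uv=0$ (because $u^2=v^2=0$), this collapses to the linear element $N_{jj}+c^{-2}(p^2v+q^2u)$. Normalizing the off-diagonal entries of $C$ and invoking the induction hypothesis settles this case. Finally, if no entry of $N$ is a unit, then every off-diagonal entry has square $0$, so only the empty matching survives and $\det(N)=\prod_v N_{vv}$ is already a product of linear elements. The three cases are exhaustive and each either reduces the size or concludes directly, so the induction goes through and $\pi(P)=\det(N)$ factors into linear elements, i.e.\ $P$ is factorizable modulo $\I(\ell^2)$.

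The main obstacle is keeping the class $\mathcal C$ stable under pivoting. Schur complements a priori have entries of high degree, and the characteristic-$2$ phenomenon that a symmetric matrix may have no invertible diagonal entry forces the use of $2\times2$ block pivots. What rescues the argument is the interplay of two features special to $\Ring(\ell^2)$: off-diagonal entries influence the determinant only through their squares, so their loss of linearity is harmless after re-normalizing by $|\cdot|$; and non-unit linear elements square to zero, which is exactly what makes the potentially cubic corrections to the diagonal entries collapse back to linear elements. Verifying these two collapses carefully, together with the explicit three-factor decomposition of the $2\times2$ block determinant, is the technical heart of the proof.
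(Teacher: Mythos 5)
Your proof is correct, and although it shares the paper's overall skeleton, it handles the crucial case by a genuinely different mechanism. Like the paper, you project the SDR into $\Ring(\ell^2)$, replace off-diagonal entries by their absolute values (this is exactly the paper's \Clean{} step, Lemma~\ref{nc:diag}, which likewise applies Proposition~\ref{prop:detchar2} over the quotient ring), and induct on the dimension; your $1\times 1$ symmetric Schur pivot on an invertible diagonal entry is the same operation as the paper's \Iso{} procedure (Lemma~\ref{nc:iso}), phrased via elimination rather than row/column additions. The divergence is the case where no diagonal entry is invertible. The paper's Lemma~\ref{nc:square} takes a \emph{nonzero} diagonal entry $A_{1,1}$ with a nonzero off-diagonal entry in its row, borders the matrix with an extra row and column to split off the unit linear factor $A_{1,1}+1$, and thereby produces a same-size matrix that now has an invertible diagonal entry; the remaining degenerate matrices are block-diagonal, a zero-diagonal constant block (contributing a constant) plus a diagonal block. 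You instead pivot on a $2\times 2$ block $\left(\begin{smallmatrix}u&c\\ c&v\end{smallmatrix}\right)$ around a unit off-diagonal constant, factor its determinant $uv+c^2$ through the identity $1+UV=(1+U)(1+V)(1+U+V)$ for square-zero $U,V$ (which is valid), and verify that the Schur complement's diagonal stays linear precisely because non-unit linear elements of $\Ring(\ell^2)$ square to zero; your degenerate case is then a genuinely diagonal matrix. Both routes are sound and yield polynomial-time algorithms. The paper's bordering trick buys operational simplicity --- only row/column additions, no inversion of a non-constant $2\times 2$ block, one factor extracted at a time --- which is what its algorithmic presentation (\Add, \Iso) is built on; your block pivot buys a dimension drop of two in the hard case and a completely explicit three-factor formula for the extracted linear factors, at the cost of the more delicate Schur-complement bookkeeping, which you carry out correctly.
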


For instance, one can recall the representable polynomial $P(x,y,z)=xy+yz+xz$ 
from the introduction. Then $\pi_{(0,0,0)}(P)=\pi_{(0,0,0)}((x+y)(x+z))$ and
$\pi_{(1,1,1)}(P)=\pi_{(1,1,1)}(xyz(x+y+z))$.

The proof of this theorem is of algorithmic nature. We give an algorithm that
takes as inputs an SDR $M$ of some polynomial $P$ and a tuple
of squares $\ell^2$, and returns a factorization of $P$ \emph{modulo} $\I(\ell^2)$.
The general idea is to build the projection $A=\pi(M)$ of $M$ to get a representation
of $\pi(P)$, and then to perform row and column operations to \emph{isolate} some
diagonal entry $A_{i,i}$, that is to cancel out each entry $A_{i,j}$ for $j\neq i$,
keeping $A$ symmetric. We then show that $A_{i,i}$ is a linear element of 
$\Ring(\ell^2)$. Thus we can write $\pi(P)=A_{i,i}\det(A')$ where $A'$ is 
obtained from $A$ by removing its row and column of index $i$. By induction on
the dimensions of $A$, we can conclude that $\pi(P)$ can be factorized as a product
of linear elements.
In what follows, we prove some lemmas that
justify this approach.

Let us fix some tuple of squares $\ell^2$. In the next definition,
we extend the notion of gSDR, originally defined for polynomials, to
elements of $\Ring(\ell^2)$.
\begin{definition}\label{gSDR:R}
Let $r\in\Ring(\ell^2)$. A \emph{generalized Symmetric Determinantal
Representation} (gSDR) of $r$ is a symmetric matrix $A$ such that $A$ has
linear diagonal entries and $\det(A) = r$.
\end{definition}

In a gSDR for a polynomial, the diagonal entries are 
either constants or of the form 
$P_0^2+x_1P_1^2+\dotsb+x_mP_m^2$. The projection of such a polynomial
is a linear element of $\Ring(\ell^2)$. Indeed, for all $i$, 
$\pi(P_i^2)=\pi(P_i)^2$ belongs to $\FF$. Therefore, if we let
$\lambda_i=\pi(P_i)^2$ for all $i$, $\pi(P_0^2+x_1P_1^2+\dotsb+x_mP_m^2)$ is also
the projection of the linear polynomial $\lambda_0+x_1\lambda_1+\dotsb+x_m\lambda_m$.

The previous remark implies in particular the following lemma:

\begin{lemma}\label{nc:proj}
Let $M$ be a gSDR of some polynomial $P\in\FF[x_1,\dots,x_m]$. Then the
matrix $\pi(M)$ is a gSDR of $\pi(P)$.
\end{lemma}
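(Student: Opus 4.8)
The plan is to verify directly the three defining properties of a gSDR of an element of $\Ring(\ell^2)$, as laid out in Definition~\ref{gSDR:R}: that $\pi(M)$ is symmetric, that its determinant equals $\pi(P)$, and that its diagonal entries are linear. Since $M$ is itself a gSDR of $P$, all three should follow from the way $\pi$ acts entrywise on matrices together with facts already established in the excerpt.

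First I would check symmetry. Because $\pi$ is applied coordinatewise, we have $\pi(M)_{i,j}=\pi(M_{i,j})=\pi(M_{j,i})=\pi(M)_{j,i}$, using that $M$ is symmetric; so $\pi(M)$ is symmetric. Next I would handle the determinant: invoking the identity $\pi\circ\det=\det\circ\pi$ recorded in Section~\ref{sec:quotientRings}, we get $\det(\pi(M))=\pi(\det(M))=\pi(P)$, as desired.

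The only step with any content is showing that the diagonal entries of $\pi(M)$ are linear, and this is exactly the computation carried out in the remark immediately preceding the lemma. Since $M$ is a gSDR of $P$, each diagonal entry $M_{i,i}$ is either a constant or of the form $P_0^2+x_1P_1^2+\dotsb+x_mP_m^2$. In the constant case $\pi$ fixes it and constants are linear by definition. In the remaining case, each $\pi(P_j^2)=\pi(P_j)^2$ lies in $\FF$ because the square of any element of $\Ring(\ell^2)$ is a constant; hence, setting $\lambda_j=\pi(P_j)^2$, the projection $\pi(P_0^2+x_1P_1^2+\dotsb+x_mP_m^2)$ equals $\lambda_0+x_1\lambda_1+\dotsb+x_m\lambda_m$, which is the image under $\pi_\ell$ of a linear polynomial and is therefore a linear element of $\Ring(\ell^2)$.

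I do not expect any real obstacle here: the lemma is essentially a repackaging of the preceding remark, and the argument amounts to assembling the entrywise behaviour of $\pi$, the commutation of $\pi$ with the determinant, and the observation that squares project into $\FF$. The only point one must state carefully is that the quotient is taken modulo a tuple of \emph{squares} $\ell^2$, which is what guarantees $\pi(P_j)^2\in\FF$ and hence the linearity of the diagonal entries.
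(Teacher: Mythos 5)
Your proof is correct and follows essentially the same route as the paper, which simply cites the remark preceding the lemma: the projection of each diagonal entry $P_0^2+x_1P_1^2+\dotsb+x_mP_m^2$ is linear because each $\pi(P_j)^2$ lies in $\FF$, while symmetry and $\pi\circ\det=\det\circ\pi$ handle the rest. Your write-up is just a more explicit assembly of exactly those ingredients.
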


Next lemma gives some structure to the gSDR of an element of $\Ring(\ell^2)$.

\begin{lemma}\label{nc:diag} 
Let $A$ be a gSDR of some $r\in\Ring(\ell^2)$. 
Then there exists a gSDR $B$ of $r$ whose non-diagonal entries are
constants. 
\end{lemma}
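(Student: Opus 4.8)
The plan is to exploit the involution formula for the determinant (Proposition~\ref{prop:detchar2}) together with the fact that in $\Ring(\ell^2)$ every square is a constant. The crucial observation is that in that formula a symmetric matrix's off-diagonal entries never occur in isolation: a length-$2$ cycle $\{i,j\}$ of an involution contributes the factor $A_{i,j}A_{j,i}=A_{i,j}^2$, whereas a fixed point $i$ contributes the diagonal entry $A_{i,i}$ to the first power only. Hence the determinant depends on the off-diagonal entries solely through their squares. Since the proof of Proposition~\ref{prop:detchar2} uses only commutativity, symmetry and characteristic~$2$, it applies verbatim to matrices over $\Ring(\ell^2)$.

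Concretely, I would start from the given gSDR $A$ of $r$ and define a new matrix $B$ by leaving the diagonal untouched, $B_{i,i}=A_{i,i}$, and replacing each off-diagonal entry by its absolute value, $B_{i,j}=|A_{i,j}|$ for $i\neq j$. Recall that $\ell^2$ is a tuple of squares, so $|\cdot|_{\ell^2}$ is defined. Then $B$ is symmetric because $|A_{i,j}|=|A_{j,i}|$, its diagonal entries are the (linear) entries $A_{i,i}$, and its off-diagonal entries lie in $\FF$, hence are constants. Thus $B$ already has exactly the shape demanded of the desired gSDR, and only the determinant condition remains to be checked.

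To verify $\det(B)=r$, expand the determinant over the partial matchings $\mu$ (the involutions of Proposition~\ref{prop:detchar2}):
\[\det(B)=\sum_\mu\Bigl(\prod_{\{i,j\}\in\mu}B_{i,j}^2\times\prod_{i\notin\mu}B_{i,i}\Bigr).\]
Each squared off-diagonal factor satisfies $B_{i,j}^2=|A_{i,j}|^2=A_{i,j}^2$ by the very definition of the absolute value, while the diagonal factors are unchanged. Comparing term by term with the identical expansion of $\det(A)$, we get $\det(B)=\det(A)=r$, so $B$ is the required gSDR.

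The only real content is the observation of the first paragraph, that off-diagonal entries enter the characteristic-$2$ determinant only through their squares; once this is noticed, replacing each off-diagonal entry by its absolute value is forced and the verification is routine. I do not expect a genuine obstacle here. In particular, no induction, dimension reduction, or symmetric row/column elimination is needed---the kind of approach one is initially tempted to try, but which the squaring phenomenon renders unnecessary.
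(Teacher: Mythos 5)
Your proof is correct and follows essentially the same route as the paper: both replace each off-diagonal entry by its absolute value and observe, via the involution expansion of Proposition~\ref{prop:detchar2}, that off-diagonal entries enter the determinant only through their squares, so that $|A_{i,j}|^2=A_{i,j}^2$ leaves the determinant unchanged. Your write-up is somewhat more explicit than the paper's (which compresses the term-by-term comparison into the remark that any monomial divisible by $A_{i,j}$ is divisible by $A_{i,j}^2$), but the idea is identical.
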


\begin{proof}
Suppose that $A_{i,j}=A_{j,i}=\pi(P)$, $i\neq j$, for some polynomial $P$. 
Since the determinant of $A$ equals $\sum_\sigma\prod_i A_{i,\sigma(i)}$
where $\sigma$ ranges over the involutions (by
Proposition~\ref{prop:detchar2}), if $A_{i,j}$ divides a monomial in 
$\det(A)$, then so does $A_{i,j}^2$. Thus, if $\pi(P)$ divides a monomial, 
so does $\pi(P)^2$. 
If we replace $A_{i,j}$ and $A_{j,i}$ by the absolute value $|\pi(P)|\in
\FF$, the determinant of $A$ is unchanged as $\pi(P)^2=|\pi(P)|^2$ by
definition. 
This proves the lemma, as $B$ can be obtained by replacing each
non-diagonal entry by its absolute value.
\end{proof}

We now define the main tools we use to prove the theorem. These are simple
algorithmic transformations that we apply 
on the symmetric matrix representing an element $r\in\Ring(\ell^2)$ such that
the determinant remains unchanged and the matrix becomes diagonal. All of
these depend on the tuple $\ell^2$, even though it is not explicitly given
as an argument to simplify the notations.

Let $\Clean$ be the algorithm that replaces each non diagonal entry
$A_{i,j}$ by its absolute value $|A_{i,j}|$ as in
Lemma~\ref{nc:diag}. We define two other algorithms, $\Add_{i,j,\alpha}$
(Algorithm~\ref{algo:add}) and $\Iso_i$ (Algorithm~\ref{algo:iso}).

\begin{algorithm2e}[htbp]
\DontPrintSemicolon
\caption{$\Add_{i,j,\alpha}(A)$}
\label{algo:add}
$n\gets$ dimension of $A$\;
\lFor{$k=1$ to $n$}{$A_{j,k}\gets A_{j,k} + \alpha A_{i,k}$} \tcp*{$R_j\gets R_j + \alpha R_i$}
\lFor{$k=1$ to $n$}{$A_{k,j}\gets A_{k,j} + \alpha A_{k,i}$} \tcp*{$C_j\gets C_j+\alpha C_i$}
\Return $\Clean(A)$\;
\end{algorithm2e}

\begin{algorithm2e}[htbp]
\DontPrintSemicolon
\caption{$\Iso_i(A)$}
\label{algo:iso}
$n\gets$ dimension of $A$\;
\For{$j=1$ to $n$}{
    \If{$j\neq i$}{
        $\alpha\gets A_{i,j}\times |A_{i,i}|^{-1}$\;
        $A\gets \Add_{i,j,\alpha}(A)$\;
    }
}
\Return $A$\;
\end{algorithm2e}

\begin{lemma}
Let $A$ be a gSDR of some $r\in\Ring(\ell^2)$. Then $\Add_{i,j,\alpha}(A)$
is a gSDR of $r$ whose non-diagonal entries are constants.
\end{lemma}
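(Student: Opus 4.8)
The plan is to read off $\Add_{i,j,\alpha}$ as a congruence transformation followed by the clean-up step $\Clean$, and then to verify the three defining properties of a gSDR with constant non-diagonal entries one at a time. The two loops performing $R_j \gets R_j + \alpha R_i$ and $C_j \gets C_j + \alpha C_i$ replace $A$ by $EAE^{T}$, where $E = I + \alpha e_{ji}$ is the elementary matrix adding $\alpha$ times row $i$ to row $j$. Because this is a congruence, symmetry is preserved, and since $\det(E) = 1$ we get $\det(EAE^{T}) = \det(E)^{2}\det(A) = \det(A) = r$, so the determinant is untouched by the elementary part. The final call to $\Clean$ also preserves the determinant, by the argument of Lemma~\ref{nc:diag}: in the involution expansion of Proposition~\ref{prop:detchar2}, each non-diagonal entry can only occur inside a length-$2$ cycle, contributing its square, and replacing an entry by its absolute value leaves that square unchanged. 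Hence $\det(\Add_{i,j,\alpha}(A)) = r$.

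It then remains to inspect the entries. Only row $j$ and column $j$ are modified, so every diagonal entry other than the $(j,j)$ one is left alone and stays linear; the interesting computation is the $(j,j)$ entry. Composing the two operations gives $A_{j,j} + \alpha A_{i,j} + \alpha A_{j,i} + \alpha^{2} A_{i,i}$, and here is where characteristic $2$ helps: by symmetry $A_{i,j} = A_{j,i}$, so the two cross terms cancel and the new entry is simply $A_{j,j} + \alpha^{2} A_{i,i}$. Taking $\alpha \in \Ring(\ell^{2})$, its square $\alpha^{2}$ lies in $\FF$, so $\alpha^{2} A_{i,i}$ is a constant multiple of the linear element $A_{i,i}$; being a sum of two linear elements, $A_{j,j} + \alpha^{2} A_{i,i}$ is again linear. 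Thus all diagonal entries remain linear, i.e. the elementary part already outputs a gSDR of $r$. Finally $\Clean$ touches only non-diagonal entries, replacing each by its absolute value in $\FF$, so the output has constant non-diagonal entries while its linear diagonal is left intact.

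The step I expect to require the most care is the interplay between $\Clean$ and the two preceding requirements, since $\Clean$ must simultaneously make the non-diagonal entries constant, not disturb the just-established linearity of the diagonal, and leave the determinant equal to $r$; all three hinge on the absolute-value identity $|s|^{2} = s^{2}$ together with the involution description of the determinant. The characteristic-$2$ cancellation of the cross terms $\alpha A_{i,j} + \alpha A_{j,i}$ in the $(j,j)$ entry is the other point worth stating explicitly, as it is exactly what keeps the diagonal linear without ever having to take an absolute value there; in odd characteristic this term would survive and the argument would break.
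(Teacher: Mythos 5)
Your proof is correct and follows essentially the same route as the paper's: both verify that the row-and-column addition preserves the determinant and symmetry, compute the $(j,j)$ entry as $A_{j,j}+\alpha^2 A_{i,i}$ (using the characteristic-$2$ cancellation and the fact that $\alpha^2\in\FF$ keeps it linear), and then invoke $\Clean$ for the constant non-diagonal entries. Your packaging of the two loops as a congruence $A\mapsto EAE^{T}$ with $\det(E)=1$ is merely a tidier formulation of the paper's entry-wise argument, not a different method.
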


\begin{proof}
The algorithm adds $\alpha$ times the $i$\ith{} row to the $j$\ith{} one,
and then $\alpha$ times the $i$\ith{} column to the $j$\ith{} one.  These 
two operations do not change the determinant. Furthermore, only
entries of the $j$\ith{} row and column are changed. But with those two
operations, $A_{j,k}$ is replaced by $A_{j,k}+\alpha A_{i,k}$ while $A_{k,j}$
is replaced by $A_{k,j}+\alpha A_{k,i}$ for $j\neq k$. As initially
$A_{i,k}=A_{k,i}$ and $A_{j,k}=A_{k,j}$, $A$ remains symmetric. Furthermore,
$A_{j,j}$ is first replaced by $A_{j,j}+\alpha A_{i,j}$, and finally by
$(A_{j,j}+\alpha A_{i,j})+\alpha (A_{i,j}+\alpha A_{i,i})=A_{j,j}+\alpha^2
A_{i,i}$. Thus $A_{j,j}$ remains linear. This shows that $A$
remains a gSDR of $r$ after the first two operations. 
Eventually, $\Clean$ 
is applied to 
a gSDR and we obtain the second property.
\end{proof}

\begin{lemma}\label{nc:iso} 
Let $A$ be a gSDR of some $r\in\Ring(\ell^2)$. If there exists an index
$i$ such that $|A_{i,i}|\neq 0$, then
$A'=\Iso_i(A)$ is a gSDR of $r$. Furthermore $A'_{i,j}=A'_{j,i}=0$ for all
$j\neq i$.
\end{lemma}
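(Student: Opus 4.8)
The plan is to establish the two assertions separately, in both cases by tracking which entries each row/column operation touches as the loop of $\Iso_i$ proceeds. The first observation, underlying everything, is that $\Iso_i$ only ever calls $\Add_{i,j,\alpha}$ with $j\neq i$, and such a call modifies \emph{only} row $j$ and column $j$. Since $j\neq i$, the entire row $i$ and column $i$, and in particular the pivot $A_{i,i}$, are left untouched throughout the whole loop. Consequently $|A_{i,i}|\neq 0$ persists at every iteration, so each $\alpha=A_{i,j}\times|A_{i,i}|^{-1}$ is well defined. Invoking the preceding lemma on $\Add_{i,j,\alpha}$ (which states that $\Add_{i,j,\alpha}$ sends a gSDR of $r$ to a gSDR of $r$) and inducting over the iterations of the loop then immediately yields that $A'=\Iso_i(A)$ is a gSDR of $r$.

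For the vanishing of the off-diagonal entries, I would analyse a single iteration for a fixed $j\neq i$. The row operation $R_j\gets R_j+\alpha R_i$ alters row $j$ only, hence leaves both $A_{i,j}$ and $A_{i,i}$ unchanged; the subsequent column operation $C_j\gets C_j+\alpha C_i$ then replaces $A_{i,j}$ by $A_{i,j}+\alpha A_{i,i}=A_{i,j}+A_{i,j}|A_{i,i}|^{-1}A_{i,i}$, after which $\Clean$ overwrites this entry by its absolute value. The heart of the argument is the characteristic-$2$ cancellation: using $|r_1r_2|=|r_1||r_2|$, $|r_1+r_2|=|r_1|+|r_2|$, and $|c|=c$ for a constant $c$ (so that $\bigl||A_{i,i}|^{-1}\bigr|=|A_{i,i}|^{-1}$), one computes
\[ \bigl|\,A_{i,j}+A_{i,j}|A_{i,i}|^{-1}A_{i,i}\,\bigr| \;=\; |A_{i,j}| + |A_{i,j}|\,|A_{i,i}|^{-1}\,|A_{i,i}| \;=\; |A_{i,j}| + |A_{i,j}| \;=\; 0. \]
Thus $\Clean$ sets this entry to $0$, and by symmetry of the matrix $A_{j,i}=0$ as well. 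The point is that $\alpha=A_{i,j}|A_{i,i}|^{-1}$ is engineered precisely so that the absolute value of the updated entry is $|A_{i,j}|+|A_{i,j}|$, which vanishes in characteristic $2$; note also that this computation uses no assumption that $A_{i,j}$ is a constant, so it applies already at the first iteration.

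Finally I would check that later iterations do not resurrect a zeroed entry: once $A_{i,j}=0$, any subsequent call $\Add_{i,j',\alpha'}$ with $j'\neq i$ and $j'\neq j$ touches only row $j'$ and column $j'$, and the entry in row $i$, column $j$ lies in neither, while the closing $\Clean$ preserves the constant $0$ since $|0|=0$. Hence after the loop has run over all $j\neq i$, every off-diagonal entry of row $i$ (and, by symmetry, of column $i$) equals $0$, giving $A'_{i,j}=A'_{j,i}=0$ for all $j\neq i$. The only genuinely non-bookkeeping step is the absolute-value identity displayed above; the remainder is careful tracking of which entries are affected by each operation.
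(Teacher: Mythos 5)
Your proof is correct and follows essentially the same route as the paper: the gSDR property is inherited iteration-by-iteration from the lemma on $\Add_{i,j,\alpha}$, and the key cancellation is that $\Clean$ sends the updated entry $A_{i,j}+\alpha A_{i,i}$ to its absolute value $|A_{i,j}|+|A_{i,j}|\,|A_{i,i}|^{-1}|A_{i,i}|=0$, which is exactly the paper's observation that $A_{i,j}(1+|A_{i,i}|^{-1}A_{i,i})$ has zero square and is therefore erased by $\Clean$. One slip to fix in the write-up: the opening claim that the \emph{entire} row $i$ and column $i$ are left untouched is false (the entries $A_{i,j}$, $j\neq i$, are precisely what the loop modifies, as your own later analysis shows); what is true, and is all you actually use, is that the pivot $A_{i,i}$ is never altered, since it lies in neither row $j$ nor column $j$ and $\Clean$ fixes diagonal entries.
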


\begin{proof}
The matrix $\Iso_i(A)$ is a gSDR of $r$ since $\Add_{i,j,\alpha}(A)$ is a
gSDR of $r$ (for all $j$ and $\alpha$). 
Now, 
let $\alpha = A_{i,j}\times |A_{i,i}|^{-1}$ for some $j$ such that
$A_{i,j}\neq 0$ and consider the action of $\Add_{i,j,\alpha}$ 
on the $i$\ith{} row of $A$. 
The only altered entry is $A_{i,j}$, when the $i$\ith{} column multiplied
by $\alpha=A_{i,j}\times|A_{i,i}|^{-1}$ is added to the $j$\ith{} one,
and then by $\Clean(A)$. 
So $A_{i,j}$ is replaced by $A_{i,j}(1+|A_{i,i}|^{-1}\times
A_{i,i})$. Since $|A_{i,i}|^2=A_{i,i}^2$ by definition, $A_{i,j}$ is
replaced by $0$ during $\Clean(A)$. The same is true on the $i$\ith{}
column. Thus, 
if $A'=\Iso_i(A)$, $A'_{i,i}$ is 
the only nonzero entry in the $i$\ith{} row and column of $A'$.
\end{proof}

We recall that an element of $\Ring(\ell^2)$ is invertible
if, and only if, its square is nonzero.

\begin{lemma}\label{nc:square}
Let $A$ be a gSDR of some $r\in\Ring(\ell^2)$ such that 
no diagonal entry is invertible.
If there exists a nonzero diagonal entry, say $A_{1,1}$, and a nonzero
entry 
$A_{1,j}$ for $j>1$, then one can build a new gSDR of the same dimensions
$\tilde A$, representing some $\tilde r\in\Ring(\ell^2)$ such that
$r=(A_{1,1}+1)\times \tilde r$, where moreover $\tilde A$ contains 
some invertible diagonal entries.
\end{lemma}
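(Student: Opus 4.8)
The plan is to exploit the nonzero off-diagonal entry $A_{1,j}$ as a pivot to simplify the first row and column, and then to turn the $j$-th diagonal entry into an invertible one by a single controlled change of $A_{j,j}$, which will produce exactly the announced factor $A_{1,1}+1$. Throughout, write $a:=A_{1,1}$ and $b:=A_{1,j}$. I would first apply $\Clean$, so that every non-diagonal entry is a constant; then $b$ is a nonzero element of $\FF$, hence invertible (one has $|b|=b\neq 0$), while the hypothesis that no diagonal entry is invertible gives $a^2=0$.

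\emph{Clearing step.} For each index $k\neq 1,j$ I would apply $\Add_{j,k,\alpha}$ with the constant $\alpha := A_{1,k}\times b^{-1}$. Since $\Add_{j,k,\alpha}$ preserves the gSDR property and the value $\det=r$ (as established above for $\Add$, for every $\alpha$), this keeps $A$ a gSDR of $r$; and because $\Add_{j,k,\alpha}$ alters only row $k$ and column $k$, it leaves $A_{1,1}=a$ and $A_{1,j}=b$ untouched while replacing $A_{1,k}$ by $A_{1,k}+\alpha b=0$. Running over all $k\neq 1,j$, the first row becomes $[a,0,\dots,b,\dots,0]$ with $b$ in position $j$, and by symmetry so does the first column.

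\emph{The modification.} Now define $\tilde A$ by replacing the single entry $A_{j,j}$ by $A_{j,j}+b^2$, leaving everything else unchanged. This new entry is still linear, and its absolute value is $b^2\neq 0$, so it is invertible: $\tilde A$ indeed contains an invertible diagonal entry. Writing $A^{(1)}$ and $A^{(1,j)}$ for the matrices obtained by deleting row and column $1$, respectively rows and columns $1$ and $j$, I would expand the determinant along the cleared first row using Proposition~\ref{prop:detchar2}: the only surviving involutions either fix $1$ or swap $1$ and $j$, giving $r=a\det(A^{(1)})+b^2\det(A^{(1,j)})$. Changing $A_{j,j}$ to $A_{j,j}+b^2$ affects only the block $A^{(1)}$, and by linearity of the determinant in that entry, $\det(\tilde A^{(1)})=\det(A^{(1)})+b^2\det(A^{(1,j)})$; hence $\tilde r:=\det(\tilde A)=r+ab^2\det(A^{(1,j)})$.

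It then remains to check the factorization. Using $a^2=0$ one gets $ar=ab^2\det(A^{(1,j)})$, and since $(a+1)a=a^2+a=a$,
\[(a+1)\tilde r=(a+1)r+ab^2\det(A^{(1,j)})=ar+r+ab^2\det(A^{(1,j)})=r,\]
which is the desired identity $r=(A_{1,1}+1)\tilde r$. The main point to get right is the clearing step: it is what forces the determinant into the two-term shape above, and it relies on $A_{1,j}$ being an \emph{invertible} constant, both to serve as a pivot and to make $A_{j,j}+A_{1,j}^2$ invertible. The final collapse of terms is then driven entirely by the non-invertibility $A_{1,1}^2=0$.
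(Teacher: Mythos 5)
Your proof is correct, but it takes a genuinely different route from the paper's. The paper never alters an entry ``by hand'': it writes $A_{1,1}=1+(A_{1,1}+1)$, embeds $A$ in an $(n+1)\times(n+1)$ bordered matrix $B$ with $\det(B)=\det(A)$, and applies $\Iso_2$ to the entry $A_{1,1}+1$, which is invertible precisely because $A_{1,1}^2=0$. That single isolation step does both jobs at once: it exhibits $A_{1,1}+1$ as a $1\times 1$ diagonal block, so the factorization $r=(A_{1,1}+1)\times\tilde r$ is read off a block decomposition of a determinant-preserving transform, and as a side effect it replaces each $A_{j,j}$ by $A_{j,j}+A_{1,j}^2(A_{1,1}+1)$, whose square is $A_{1,j}^4\neq 0$, producing the invertible diagonal entries; deleting the isolated row and column returns to dimension $n$. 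You instead stay in dimension $n$ throughout: you clear row and column $1$ with $\Add$ operations pivoting on the off-diagonal constant $b=A_{1,j}$, then modify the single entry $A_{j,j}\mapsto A_{j,j}+b^2$ --- a step that is \emph{not} determinant-preserving --- and verify $r=(a+1)\tilde r$ by direct computation from the two-term expansion $r=a\det(A^{(1)})+b^2\det(A^{(1,j)})$ together with the characteristic-$2$ identities $a^2=0$ and $(a+1)a=a$. The paper's route is more structural (it reuses the $\Iso$ machinery and never needs to expand a determinant), while yours is more elementary and yields the explicit correction term $\tilde r=r+ab^2\det(A^{(1,j)})$; the price is your preliminary clearing step, which the paper gets for free from $\Iso$, and which, as you rightly note, is what makes the two-term expansion valid. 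One caveat you share with the paper: both arguments implicitly read the hypothesis ``$A_{1,j}$ nonzero'' on a matrix whose off-diagonal entries are constants (equivalently, assume $A_{1,j}$ invertible). If $A_{1,j}$ were a nonzero element with $A_{1,j}^2=0$, your initial $\Clean$ would annihilate the pivot, and likewise the paper's $A''$ would acquire no invertible diagonal entry; so this is not a defect you introduced, but it is worth stating the assumption explicitly.
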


\begin{proof}
Let us write $A_{1,1}$ as $1+(A_{1,1}+1)$. Let 
\[B=\begin{pmatrix}
    1 & 1        & 0       & \dots & 0         \\
    1 & A_{1,1}+1& A_{1,2} & \dots & A_{1,n}   \\
    0 & A_{2,1}  &         &       &           \\
    \vdots&\vdots&         & A'    &           \\
    0 & A_{n,1}  &         &       &       
    \end{pmatrix}\]
where $A'$ is obtained from $A$ by removing its 
first row and column. Then 
$\det(B)=\det(A)$. Indeed, adding the first row of $B$ to 
the second one, and the first column to the second one yields
the matrix 
\[
    \begin{pmatrix}
    1 & 0       & 0       & \dots & 0         \\
    0 & A_{1,1} & A_{1,2} & \dots & A_{1,n}   \\
    0 & A_{2,1} &         &       &           \\
    \vdots&\vdots&        & A'    &           \\
    0 & A_{n,1} &         &       &       
    \end{pmatrix}\]
whose determinant equals $\det(A)$.

Now
\[\Iso_2(B)= 
    \begin{pmatrix}
    A_{1,1} & 0  & A_{1,2} & \dots & A_{1,n}   \\
    0 & A_{1,1}+1& 0       & \dots & 0         \\
    A_{2,1} & 0  &         &       &           \\
    \vdots&\vdots&         & A''   &           \\
    A_{n,1} & 0  &         &       &       
    \end{pmatrix}\]
still has the same determinant. For each $j>1$, $A_{j,j}$ is replaced by $A_{j,j}+A_{1,j}^2(A_{1,1}+1)$ in $A''$. Since no diagonal entry is invertible by hypothesis, $A_{j,j}^2=0$ for all $j$ 
and $(A_{1,1}+1)^2=1$. Thus $A''$ contains some diagonal entries whose square is nonzero, that is some invertible diagonal entries. 
Actually, this holds since we supposed that some $A_{1,j}$ is nonzero. 
Now, the determinant of this matrix equals
\[(A_{1,1}+1)\times\det \begin{pmatrix}
    A_{1,1} & A_{1,2} & \dots & A_{1,n} \\
    A_{2,1} &         &       &         \\
    \vdots  &         & A''   &         \\
    A_{n,1} &         &       &       
    \end{pmatrix}.\]
Therefore, $A$ can be replaced by this new matrix $\tilde A$, of the same
dimensions, with some invertible diagonal entries. 
Then
$\tilde A$ is a gSDR for some $\tilde r\in\Ring(\ell^2)$ such that
$r=(A_{1,1}+1)\times\tilde r$.
\end{proof}

We now have all the ingredients to prove the theorem.

\begin{proof}[Proof of Theorem~\ref{thm:nc}]
Let us first restate the theorem. We aim to prove that if
$P\in\FF[x_1,\dots,x_m]$ has a gSDR, then its projection $r = \pi(P)$ can
be written as $t_1\times\dotsb\times t_k$ where
$t_1$, \dots, $t_k$ are linear elements of $\Ring(\ell^2)$. 
Suppose we are given a gSDR $M$ of some polynomial $P$. Then we have a gSDR
$A=\pi(M)$ of $r=\pi(P)$ by Lemma~\ref{nc:proj}. 
Thus we have to prove that given a gSDR $A$ of some $r\in\Ring(\ell^2)$, we
can find some linear elements $t_1,\dots,t_k$ of $\Ring(\ell^2)$ such that
$r=t_1\times\dotsb\times t_k$. 

First note that if $A$ does not satisfy the conditions of
Lemma~\ref{nc:square}, then we can already conclude. Indeed, this means
that each diagonal entry is either zero, or is the only nonzero entry in
its row and column. By reordering the rows and columns, we can get a
block-diagonal matrix with two blocks: The first one has zero diagonal
entries and the second one is diagonal. Therefore, since the determinant
of $A$ is the product of the determinants of these two blocks, we get a
constant times a product of linear elements. In other words, the
factorization is found.

So let $A$ be a gSDR of some $r\in\Ring(\ell^2)$
satisfying the hypotheses of Lemma~\ref{nc:square}. 
We build a gSDR $\tilde A$ of some $\tilde r\in\Ring(\ell^2)$ such that
$r=t\times\tilde r$ for some linear element $t\in \Ring(\ell^2)$, and such
that $\tilde A$ has at least one invertible diagonal entry $\tilde A_{i,i}$.
If $A$ already satisfied the property, then $\tilde A=A$ and $t=1$. 
Now, the $i$\ith{} row and column of $A'=\Iso_i(\tilde A)$ have as only
nonzero entry $A'_{i,i}$ by Lemma~\ref{nc:iso}. Thus, removing the
$i$\ith{} row and column to $A'$ 
yields a gSDR $B$ of some element $s\in\Ring(\ell^2)$ such that $\tilde
r=A'_{i,i}\times s$. 

This shows that from a gSDR of dimensions $(n\times n)$ of some
$r\in\Ring(\ell^2)$, we can build a gSDR of dimensions $(n-1)\times (n-1)$
of some $s\in\Ring(\ell^2)$ such that $r=t\times t'\times s$ 
where $t$ and $t'$ are linear. 

We can now use 
induction to prove that if $r\in\Ring(\ell^2)$ has a gSDR, then it can be
written as $t_1\times\dotsb\times t_k$ for some linear elements
$t_1,\dotsc,t_k\in\Ring(\ell^2)$. Indeed, if $A$ is a $(1\times 1)$ gSDR
of $r$, then $r$ is linear. 
\end{proof}

The proof of Theorem~\ref{thm:nc} is of algorithmic nature. 
It is easily seen that the underlying algorithm runs in time polynomial in the dimensions of the input gSDR. (More precisely, the complexity of the algorithm is $\mathcal O(n^3)$.)

\subsection{An example}\label{sec:example} 
Let us consider the 
polynomials in $\FF_2[x,y,z]$, where $\FF_2$ denotes the field with two
elements.
The ring $\Ring(1,1,1)$ has 256 elements, 136 of which can be written as
the product of linear polynomials, 120 of which can not. The element
$\pi(xy+z)$ is one of those. Therefore, Theorem \ref{thm:nc} tells us that
the polynomial $xy+z$ can \emph{not} be represented as the determinant of
a symmetric matrix with entries in $\FF_2 \cup \{x,y,z\}$.

\subsection{Multilinear polynomials}\label{sec:multilin} 

In this section, we show that the necessary condition of
Theorem~\ref{thm:nc} is actually a characterization when applied to
multilinear polynomials. This relies on the following structural lemma. It
is valid for any polynomial, even non-multilinear.

\begin{lemma}\label{lem:multilin}
Let $P\in\FF[x_1,\dots,x_m]$ be a representable polynomial. Then there
exists an SDR $M$ of $P$ such that each variable appears at most once on
the diagonal. 
\end{lemma}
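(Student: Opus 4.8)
The plan is to start from an arbitrary SDR of $P$ and, whenever some variable labels two loops, to kill one of them by an elementary congruence, then to repair the off-diagonal damage using the loop-free gadgets of Lemma~\ref{lem:squares}. Concretely, I view the SDR as a symmetric matrix $M$ with $\det(M)=P$ and suppose $M_{a,a}=M_{b,b}=x_i$ for two indices $a\neq b$. I apply the symmetric row/column operation $R_b\gets R_b+R_a$, $C_b\gets C_b+C_a$; this is $\Add_{a,b,1}$ without the final $\Clean$, now carried out over $\FF[x_1,\dots,x_m]$ instead of over $\Ring(\ell^2)$. As in the analysis of Algorithm~\ref{algo:add}, it amounts to a congruence $M\mapsto EME^{\top}$ with $\det(E)=1$, so it preserves both the symmetry of $M$ and its determinant, and it replaces $M_{b,b}$ by $M_{b,b}+1^2\cdot M_{a,a}=x_i+x_i=0$. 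Thus the loop on $b$ vanishes, while the loop on $a$, the whole of row and column $a$ except the entry $(a,b)$, and every other diagonal entry of $M$ remain untouched.

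The price is that the off-diagonal entries of row and column $b$ become the linear polynomials $M_{b,k}+M_{a,k}$ (and $M_{a,b}$ becomes $M_{a,b}+x_i$), which in general are sums of several variables and constants and hence are not valid SDR entries. The key point is that these are all \emph{off-diagonal}, and an off-diagonal entry contributes to the determinant only through its square (Proposition~\ref{prop:detchar2}). I therefore restore the SDR form exactly as in the edge-replacement step of Proposition~\ref{prop:representable} and Theorem~\ref{thm:representable}: each off-diagonal entry equal to a genuine linear polynomial $w$ is read as an edge of weight $w$ and is replaced by the gadget $G_w$ of Lemma~\ref{lem:squares}, which represents $w^2$ and satisfies $\det(G_w\setminus\{s,t\})=1$, so that merging $s,t$ with the two extremities of the edge preserves the determinant. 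Since all diagonal entries of the intermediate matrix are already single variables or constants, only the off-diagonal entries require repair, so I invoke \emph{only} this edge-replacement and never the loop-replacement gadget $G_L$ of Proposition~\ref{prop:representable}; crucially, the gadgets $G_w$ contain no loops whatsoever, so the repair creates only new diagonal zeros and never places a variable on the diagonal.

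Combining the two steps, from an SDR in which $x_i$ labels $k\ge 2$ loops I obtain an SDR of the same polynomial $P$ in which $x_i$ labels $k-1$ loops and in which no other variable has gained a diagonal occurrence (the congruence alters only row and column $b$, and the adjoined gadgets are loop-free). Iterating this reduction over all variables and all their surplus loops — formally, by induction on the total number $\sum_i \max\bigl(0,\,n_i-1\bigr)$ of surplus diagonal occurrences, where $n_i$ is the number of loops labelled $x_i$ — yields an SDR of $P$ in which every variable appears at most once on the diagonal.

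The step I expect to demand the most care is verifying that the repair is simultaneously faithful and loop-free. I must make sure that $G_w$ really preserves the determinant when $w$ is an arbitrary linear polynomial and not merely a monomial, and that identifying its endpoints $s,t$ with the extremities of the replaced edge neither creates a loop nor introduces a second occurrence of a variable on the diagonal. Both facts follow from the loop-free structure of $G_w$ together with the identities $\det(G_w)=w^2$ and $\det(G_w\setminus\{s,t\})=1$ of Lemma~\ref{lem:squares}, but this is precisely the place where a careless argument could silently reintroduce the diagonal occurrences we are trying to eliminate.
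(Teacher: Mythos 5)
Your proof is correct, and its core move is exactly the paper's: the paper's entire proof of Lemma~\ref{lem:multilin} consists of the symmetric congruence $R_{i_2}\gets R_{i_2}+R_{i_1}$, $C_{i_2}\gets C_{i_2}+C_{i_1}$, which in characteristic $2$ replaces the duplicated diagonal entry by $x_i+x_i=0$, iterated until every variable occurs at most once on the diagonal. Where you go beyond the paper is the repair step, and this is worth comparing. The paper simply declares the post-congruence matrix to be ``a new SDR'', which is not literally true under the strict definition (entries in $\FF\cup\{x_1,\dots,x_m\}$): as you observe, the off-diagonal entries of the modified row and column become sums such as $M_{i_2,k}+M_{i_1,k}$, so the resulting matrix is only a gSDR in the sense of Definition~\ref{gSDR}. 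This looseness is harmless for the paper's purposes, since every later use of the lemma (in Theorem~\ref{thm:multilin} and Lemma~\ref{lem:xi}) immediately applies $\pi$ and $\Clean$, turning those off-diagonal sums into constants; only the diagonal needs to be honest. Your extra step --- replacing each offending off-diagonal entry $w$ by the loop-free gadget $G_w$ of Lemma~\ref{lem:squares}, exactly as in the gSDR-to-SDR conversion of Theorem~\ref{thm:representable} --- restores a strict SDR without placing any variable on the diagonal, so your argument proves the lemma as literally stated; it is, if anything, more careful than the paper's own proof. The only detail to spell out in the repair is the vanishing of the cross terms, i.e.\ $\det(G_w\setminus\{s\})=\det(G_w\setminus\{t\})=0$, so that matchings entering the gadget through only one endpoint contribute nothing; this is precisely the parity property Lemma~\ref{lem:squares} supplies, and with it your induction on the number of surplus diagonal occurrences goes through.
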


\begin{proof}
Let $M$ be any SDR for $P$, that is $\det(M)=P$, and $M$ has entries 
in $\FF\cup\{x_1,\dots,x_m\}$. 
Suppose 
that for some other $i$, $x_i$ appears (at least) twice on the diagonal, as entries
$M_{i_1,i_1}$ and $M_{i_2,i_2}$. 
Consider the matrix obtained after adding the row 
of index $i_1$ to the row 
of index $i_2$,
and the column of index $i_1$ to the column of index $i_2$. 
As already mentioned, the only altered diagonal entry
is $M_{i_2,i_2}$ and it is now equal to $M_{i_2,i_2}+M_{i_1,i_1}=2x_i=0$. Therefore, we obtain a new SDR with one occurrence of $x_i$ on the diagonal replaced by zero. We can repeat this for each variable until each variable appears at most once on the diagonal.
\end{proof}

We can use this lemma to obtain the desired characterization, when $\FF$ is a finite field of characteristic $2$.

\begin{theorem}\label{thm:multilin}
Let $P\in\FF[x_1,\dots,x_m]$ be a multilinear polynomial where $\FF$ is a finite field of characteristic $2$. 
Then the three following statements are equivalent:
\begin{itemize}
\item[$(i)$] $P$ is representable;
\item[$(ii)$] For every tuple of squares $\ell^2\in\FF^m$, $P$ is
factorizable \emph{modulo} $\I(\ell^2)$;
\item[$(iii)$] There exists a tuple of squares $\ell^2\in\FF^m$ such that
$P$ is factorizable \emph{modulo} $\I(\ell^2)$.
\end{itemize}
\end{theorem}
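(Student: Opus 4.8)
The plan is to prove the cycle of implications $(i)\Rightarrow(ii)\Rightarrow(iii)\Rightarrow(i)$. The first implication $(i)\Rightarrow(ii)$ is already established: it is exactly Theorem~\ref{thm:nc}, which says that any representable polynomial is factorizable modulo $\I(\ell^2)$ for \emph{every} tuple of squares $\ell^2$. The implication $(ii)\Rightarrow(iii)$ is trivial, since over a finite field of characteristic $2$ every element is a square, so tuples of squares $\ell^2$ exist (indeed every tuple $\ell\in\FF^m$ is a tuple of squares), and a universal statement implies the corresponding existential one. The entire content of the theorem is therefore the converse implication $(iii)\Rightarrow(i)$: from a single factorization modulo some $\I(\ell^2)$, I must reconstruct an honest SDR of the multilinear polynomial $P$.

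For $(iii)\Rightarrow(i)$, suppose $\pi_{\ell^2}(P)=t_1\times\dots\times t_k$ where each $t_j$ is a linear element of $\Ring(\ell^2)$. The idea is to lift this factorization back to $\FF[x_1,\dots,x_m]$ and then invoke the positive results of Section~\ref{sec:representable}. Each linear element $t_j$ is the projection of a linear polynomial $L_j=\lambda_{j0}+x_1\lambda_{j1}+\dots+x_m\lambda_{jm}$ with constants $\lambda_{ji}\in\FF$. Because $\FF$ is finite of characteristic $2$, every element is a quadratic residue, so I may write each coefficient as a square: $\lambda_{j0}=\mu_{j0}^2$ and $\lambda_{ji}=\mu_{ji}^2$. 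Then $L_j=\mu_{j0}^2+x_1\mu_{j1}^2+\dots+x_m\mu_{jm}^2$ has exactly the shape required by Proposition~\ref{prop:representable}, so the product $L_1\times\dots\times L_k$ is representable. The key relation is $\pi_{\ell^2}(L_1\cdots L_k)=\pi_{\ell^2}(P)$, which by the definition of $\pi$ means $L_1\cdots L_k$ and $P$ differ by an element of the ideal $\I(\ell^2)$; equivalently $\Mult_{\ell^2}(L_1\cdots L_k)=\Mult_{\ell^2}(P)$.

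The crux, and the step I expect to be the main obstacle, is to pass from ``$L_1\cdots L_k$ is representable and agrees with $P$ modulo $\I(\ell^2)$'' to ``$P$ itself is representable.'' Here I would use the full strength of the $\Prep$/\Merge-type surgery implicit in Section~\ref{sec:representable}: the polynomial $Q=L_1\cdots L_k$ admits an SDR $M$, and applying $\Mult_{\ell^2}$ amounts to replacing each occurrence of $x_i^2$ by the constant $\ell_i^2$. The plan is to modify the SDR of $Q$ so that its determinant becomes the multilinear representative $P=\rho_{\ell^2}(\pi_{\ell^2}(P))$. Concretely, I would show that whenever a term $x_i^2$ would be generated in the determinant expansion (via a length-$2$ cycle on an edge of weight $x_i$), one can adjust the corresponding edge weight from $x_i$ to a constant-squared contribution $\ell_i$-worth, using the identity $\det(M)=\sum_\mu(\prod_{e\in\mu}w(e)^2\prod_{v\notin\mu}w(v))$ from the matching interpretation; since $P$ is multilinear, no variable survives squared, so every squared variable must be forced to its constant value $\ell_i^2$, and this is precisely what the quotient does. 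The delicate point is to verify that this replacement can be realised at the level of the symmetric matrix while keeping entries in $\FF\cup\{x_1,\dots,x_m\}$ and keeping the matrix symmetric; Lemma~\ref{lem:multilin} guarantees we may assume each variable appears at most once on the diagonal, which controls the diagonal contributions and makes the bookkeeping of which squared terms arise tractable.
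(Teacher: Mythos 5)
Your proposal is correct and follows essentially the same route as the paper's proof: lift the factors to linear polynomials, represent their product via Proposition~\ref{prop:representable} (using that every coefficient is a square in a finite field of characteristic $2$), normalize the diagonal with Lemma~\ref{lem:multilin}, replace each off-diagonal variable entry $x_i$ by the constant $\ell_i$, and conclude by multilinearity that the resulting determinant equals $P$. The ``delicate point'' you flag is resolved in the paper exactly by the mechanism you sketch: this entrywise replacement is the \Clean{} operation (Lemmas~\ref{nc:proj} and~\ref{nc:diag}), valid because off-diagonal entries enter the determinant of a symmetric matrix only through their squares, and then two multilinear polynomials with the same projection modulo $\I(\ell^2)$ must coincide.
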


\begin{proof}
The implication $(i)\implies(ii)$ is a special case of Theorem~\ref{thm:nc}, and $(ii)\implies(iii)$ is evident. Let us prove that $(iii)\implies(i)$.

Let $\ell^2\in\FF^m$ such that $\pi_{\ell^2}(P) = t_1\times\dotsb\times
t_k$, where each $t_i$ is a linear element of $\Ring(\ell^2)$.
For $i\leq k$, $\rho_{\ell^2}(t_i)$ is a linear polynomial, hence by
Proposition~\ref{prop:representable}, we know that $Q =
\rho_{\ell^2}(t_1)\times\dots\times\rho_{\ell^2}(t_k)$ has an SDR $M$. 
By Lemma~\ref{lem:multilin}, there exists an SDR $N$ of $Q$ such that
each variable appears at most once on the diagonal.
Hence, Lemma~\ref{nc:proj} and Lemma~\ref{nc:diag} ensure that
$\pi_{\ell^2}(P) = \pi_{\ell^2}(Q)$ has a gSDR $A$ such that each
$A_{i,i}$ is linear and each $A_{i,j}$ is constant for $i\neq j$.
Let $O$ be the matrix defined by $O_{i,j} = \rho_{\ell^2}(A_{i,j})$.
Since each variable appears once on the diagonal, $\det(O)$ is a
multilinear polynomial.
We have $\pi_{\ell^2}(P) = \pi_{\ell^2}(Q) = 
\det(A) = \pi_{\ell^2}(\det(O))$.
Since both $P$ and $\det(O)$ are multilinear, we have $P = \det(O)$, hence
$P$ is representable.
\end{proof}

If $\FF$ is infinite, a similar characterization can be obtained. To this end, 
the conclusion of Theorem~\ref{thm:nc} can be reinforced as follows: If $P$ is
representable, then there exist linear polynomials $L_1$, \dots, $L_k$ \emph{whose
coefficients are quadratic residues in $\FF$} such that 
$\pi_{\ell^2}(P)=\pi_{\ell^2}(L_1\dotsm L_k)$. One can check that the proof of
Theorem~\ref{thm:nc} actually is a proof of this stronger statement. The converse
is proved using Proposition~\ref{prop:representable}.

\subsection{Towards a full characterization}\label{sec:full} 

Theorem~\ref{thm:nc} is valid for any polynomial. Thus we have a necessary
condition for all polynomials. The characterization for multilinear
polynomials relies on the fact that $\rho(\pi(P))=P$ in this case. If we
are working with a non-multilinear polynomial $P$, the projection of $P$
\emph{modulo} some ideal $\I(\ell)$ can dramatically change the structure
of the polynomial. In particular, if we have a polynomial $P=x_1^2\times
Q$ for some multilinear polynomial $Q$, then $\Mult_{(1,\dots)} P= Q$ but
$\Mult_{(0,\dots)} P=0$. Thus, it is certainly impossible to go back from
the projection \emph{modulo} $\I(0,\dots)$ to $P$. To come up with this
issue, we look at some new specific ideal for the projection.
In this section, the field $\FF$ is supposed to be finite. With the
same arguments as for Theorem~\ref{thm:multilin}, the results of this
section can be extended to any field of characteristic $2$.

Let $\FF(\xi_1,\dots,\xi_m)$ be the field of fractions in $m$
indeterminates over $\FF$, and 
$\I(\xi^2)=\langle x_1^2+\xi_1^2,\dots,x_m^2+\xi_m^2\rangle$. For
$P\in\FF[x_1,\dots,x_m]$, we can consider the multilinear polynomial
$\Mult_{\xi^2} P = \rho_{\xi^2}(\pi_{\xi^2}(P))$
and apply Theorem~\ref{thm:multilin} about multilinear polynomials. In
particular, $\Mult_{\xi^2} P $ is representable if, and only if, it is
factorizable. 

The problem we face is that our constructions use inverse of elements in
the base field. This means that we have an equivalence between
factorization and SDR for multilinear polynomials in
$\FF(\xi_1,\dotsc,\xi_m)[x_1,\dotsc,x_m]$ but the factorization or the SDR
we build can use rational fractions in the $\xi_i$'s. To partly avoid this
problem, we have to restrict the ideals we are working with to ideals of
the form $\I(\ell^2)$ for $\ell\in\FF^m$. Unfortunately, it is not
sufficient. Nevertheless, we are able to prove some partial results.

\begin{lemma}\label{lem:xi}
Let $P\in\FF[x_1,\dots,x_m]$. Then $P$ is representable if, and only if,
$\Mult_{\xi^2} P$ has an SDR with non-diagonal entries in $\FF[\xi_1,\dots,\xi_m]$.
\end{lemma}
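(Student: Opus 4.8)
The plan is to prove Lemma~\ref{lem:xi} by establishing both implications, with the key technical fact being that projecting modulo $\I(\xi^2)$ and then lifting back via $\rho_{\xi^2}$ is an honest round-trip because $\Mult_{\xi^2} P$ is multilinear, just as in the proof of Theorem~\ref{thm:multilin}. The forward direction should follow quickly from the machinery already in place, while the converse is where the real content lies, because there I must manufacture an SDR of $P$ itself from an SDR of its multilinearization. First I would observe that $\FF(\xi_1,\dots,\xi_m)$ is an infinite field of characteristic $2$, so that every element is a quadratic residue and Proposition~\ref{prop:representable} applies over this field.

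\textbf{The forward direction.}
Assume $P$ is representable, say $\det(M) = P$ for a symmetric $M$ with entries in $\FF\cup\{x_1,\dots,x_m\}$. Applying Theorem~\ref{thm:nc} with $\ell^2 = \xi^2$ (which is a tuple of squares in $\FF(\xi_1,\dots,\xi_m)^m$), I get that $\pi_{\xi^2}(P)$ factors as a product of linear elements of $\Ring(\xi^2)$. Since $\Mult_{\xi^2} P = \rho_{\xi^2}(\pi_{\xi^2}(P))$ is multilinear over the infinite field $\FF(\xi_1,\dots,\xi_m)$, Theorem~\ref{thm:multilin} (in its form for infinite fields, invoked just after its proof) gives that $\Mult_{\xi^2} P$ is representable. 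The one point requiring care is the claim that the resulting SDR has non-diagonal entries in $\FF[\xi_1,\dots,\xi_m]$ rather than in the fraction field: this is exactly where I would invoke the strengthened conclusion of Theorem~\ref{thm:nc}, namely that the linear factors can be taken with coefficients that are quadratic residues, together with the $\Clean$ construction of Lemma~\ref{nc:diag}, which replaces non-diagonal entries by absolute values lying in the coefficient field generated by the $\xi_i$; I would trace through that the denominators introduced by $\Iso_i$ affect only diagonal entries, keeping the off-diagonal entries polynomial in the $\xi_i$.

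\textbf{The converse, and the main obstacle.}
For the converse, suppose $\Mult_{\xi^2} P$ has an SDR $N$ with non-diagonal entries in $\FF[\xi_1,\dots,\xi_m]$. The idea is to reverse the substitution $x_i^2 \mapsto \xi_i^2$: since the off-diagonal entries are polynomials in the $\xi_i$ and each diagonal entry is linear in the $x_i$ with constant term a polynomial in the $\xi_i$, I would replace every occurrence of $\xi_i$ by $x_i$ throughout $N$ to obtain a symmetric matrix $\tilde N$ over $\FF[x_1,\dots,x_m]$. The determinant computation via Proposition~\ref{prop:detchar2} commutes with this substitution, so $\det(\tilde N)$ should recover $P$ up to the identifications $x_i^2 = \xi_i^2$; the point is that performing the back-substitution $\xi_i \mapsto x_i$ on $\Mult_{\xi^2} P = \det(N)$ undoes the multilinearization and returns $P$ exactly, because $P$ and $\det(\tilde N)$ have the same image under $\pi_{\xi^2}$ after re-separating variables. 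The hard part will be making this substitution argument rigorous: I must verify that the entries of $N$ genuinely only involve $\xi_i$ to the first power in the off-diagonal part and at most first power on the diagonal after the $\Mult$ operation, so that the naive replacement $\xi_i \to x_i$ produces a valid SDR rather than a matrix with forbidden entries, and that no cancellation or degree collapse occurs when reversing the Frobenius-style identification. I would handle this by carefully tracking that $\det(N)$ as a polynomial in both the $x_i$ and $\xi_i$ is, by construction of $\Mult_{\xi^2}$, exactly $P$ with each square $x_i^2$ rewritten as $\xi_i^2$, so the substitution is a clean bijection on monomials and the resulting $\tilde N$ is a genuine SDR of $P$.
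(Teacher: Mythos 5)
Your converse direction is essentially the paper's own proof: replacing each $\xi_i$ by $x_i$ is a ring homomorphism that inverts $\Mult_{\xi^2}$ and commutes with the determinant, and that is all that is needed; the monomial-by-monomial verification you worry about is unnecessary. The genuine gap is in your forward direction. You route the argument through factorization (Theorem~\ref{thm:nc} over $\FF(\xi_1,\dots,\xi_m)$, then reconstruction of an SDR as in Theorem~\ref{thm:multilin} and Proposition~\ref{prop:representable}), and to get off-diagonal entries in $\FF[\xi_1,\dots,\xi_m]$ you claim that ``the denominators introduced by $\Iso_i$ affect only diagonal entries.'' That claim is false: $\Iso_i$ calls $\Add_{i,j,\alpha}$ with $\alpha=A_{i,j}\times|A_{i,i}|^{-1}$, and $\Add_{i,j,\alpha}$ replaces $A_{j,k}$ by $A_{j,k}+\alpha A_{i,k}$ for every $k$, so the denominator $|A_{i,i}|^{-1}$ lands in off-diagonal positions, and it also enters the extracted linear factors through $A_{j,j}\gets A_{j,j}+\alpha^2 A_{i,i}$; applying $\Clean$ (taking absolute values) does not remove denominators. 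The strengthened form of Theorem~\ref{thm:nc} does not rescue this either: quadratic residues in $\FF(\xi_1,\dots,\xi_m)$ are arbitrary elements of $\FF(\xi_1^2,\dots,\xi_m^2)$, such as $1/\xi_1^2$, and when Proposition~\ref{prop:representable} converts a factor $\lambda_0^2+x_1\lambda_1^2+\dotsb+x_m\lambda_m^2$ into a graph, the off-diagonal weights are the $\lambda_i$ themselves, rational in the $\xi_i$ in general. This loss of control over denominators is precisely the obstruction the paper describes in Section~\ref{sec:full} (``our constructions use inverse of elements in the base field''), and it is the very reason the full characterization remains a conjecture; it cannot be fixed by tracing through the algorithms.

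The paper's forward direction avoids the factorization detour altogether, and you should too: take an SDR $M$ of $P$ itself, whose entries already lie in $\FF\cup\{x_1,\dots,x_m\}$; normalize it with Lemma~\ref{lem:multilin} so that each variable appears at most once on the diagonal; then apply $\Clean$ with respect to $(\xi_1^2,\dots,\xi_m^2)$. Each off-diagonal entry is replaced by its absolute value, which is the entry itself when it lies in $\FF$ and is $\xi_i$ when the entry is $x_i$; hence all off-diagonal entries lie in $\FF\cup\{\xi_1,\dots,\xi_m\}\subseteq\FF[\xi_1,\dots,\xi_m]$, and no inverse is ever taken. By Lemma~\ref{nc:diag} the determinant is unchanged modulo $\I(\xi^2)$, and since the variables $x_i$ now occur only on the diagonal, each at most once, the determinant is multilinear in the $x_i$'s and therefore equals $\Mult_{\xi^2}P$ exactly, as required.
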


\begin{proof} Let us remark at first that $\Mult_{\xi^2}$ is a bijection
from $\FF[x_1,\dotsc,x_m]$ to the set of multilinear polynomials with
coefficients in $\FF[\xi_1,\dotsc,\xi_m]$. Indeed, its inverse
$\Mult_{\xi^2}^{-1}$ simply consists in mapping each $\xi_i$ to $x_i$. 

Using Lemma~\ref{lem:multilin},
we can transform any (g)SDR to an SDR such that each variable appears
exactly once on the diagonal. 

Let $M$ be an SDR of $P$. We can apply the procedure $\Clean$ to $M$ (with
respect to the tuple $(\xi_1^2,\dotsc,\xi_m^2)$). This yields an SDR of
$\Mult_{\xi^2} P$ as proved by Lemma~\ref{nc:proj}. Conversely, if we have
an SDR $M'$ of $\Mult_{\xi^2} P$, we can replace each $\xi_i$ by $x_i$ to
get an SDR of $P$. This corresponds to applying $\Mult_{\xi^2}^{-1}$ to
each entry.  As this function is compatible with the addition and
multiplication, the new matrix $M$ we obtain satisfy $\det(M)=
\Mult_{\xi^2}^{-1}(\det(M'))=P$.
\end{proof}

\begin{theorem}
Let $P\in\FF[x_1,\dots,x_m]$.
\begin{itemize}
\item If $P$ is representable, then for every tuple of squares
$\ell^2\in\FF^m$, 
$\Mult_{\xi^2} P$ can be factorized as a product of linear polynomials
\emph{modulo} $\I(\ell^2)$, and the linear polynomials have coefficients
in $\FF(\xi_1,\dotsc,\xi_m)$. 
\item If $\Mult_{\xi^2} P$ can be factorized as a product of linear
polynomials \emph{modulo} $\I(\ell^2)$ for some tuple of squares $\ell^2$,
and if the linear polynomials have coefficients in
$\FF[\xi_1,\dots,\xi_m]$, then $P$ is representable.
\end{itemize}
\end{theorem}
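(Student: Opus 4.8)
The plan is to transport the multilinear dictionary already built in Section~\ref{sec:multilin} to the rational function field $K=\FF(\xi_1,\dots,\xi_m)$, which has characteristic $2$, and to use Lemma~\ref{lem:xi} as the bridge between representability of $P$ over $\FF$ and representability of the multilinear polynomial $\Mult_{\xi^2}P$ over $K$. Both bullets then become instances of results already proved: the necessary condition Theorem~\ref{thm:nc} and the sufficiency implication $(iii)\Rightarrow(i)$ of Theorem~\ref{thm:multilin}, both read over $K$. The entire subtlety is to keep track of the field in which the coefficients of the factors, and the entries of the matrices, live.

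\textbf{First statement.} If $P$ is representable then, by the forward direction of Lemma~\ref{lem:xi}, the multilinear polynomial $\Mult_{\xi^2}P$ has an SDR over $K$; here I only need that it is representable over $K$. Since $K$ is a field of characteristic $2$ and Theorem~\ref{thm:nc} holds over any such field, I would apply it over $K$: for every tuple of squares $\ell^2$, and in particular for those ranging over $\FF^m\subseteq K^m$, the element $\pi_{\ell^2}(\Mult_{\xi^2}P)$ factors into linear elements of $\Ring(\ell^2)$ over $K$. Lifting these linear elements to linear representatives produces linear polynomials whose coefficients lie in $K=\FF(\xi_1,\dots,\xi_m)$, which is exactly the asserted conclusion.

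\textbf{Second statement.} I would replay the argument $(iii)\Rightarrow(i)$ of Theorem~\ref{thm:multilin} over $K$ and then invoke the converse direction of Lemma~\ref{lem:xi}. From $\pi_{\ell^2}(\Mult_{\xi^2}P)=\pi_{\ell^2}(L_1\cdots L_k)$ with each $L_i$ linear over $\FF[\xi_1,\dots,\xi_m]$, I would build an SDR of $\Mult_{\xi^2}P$ over $K$ by representing the product with Proposition~\ref{prop:representable}, bringing each variable onto the diagonal at most once through Lemma~\ref{lem:multilin}, and then projecting and applying $\Clean$ and $\rho_{\ell^2}$ as in the proof of Theorem~\ref{thm:multilin}. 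The decisive bookkeeping point is that $\ell\in\FF^m$: if $r=\sum_S c_S\prod_{i\in S}x_i$ is the multilinear representative of an entry, then the absolute value used by $\Clean$ is $|r|_{\ell^2}=\sum_S c_S\prod_{i\in S}\ell_i$, an $\FF$-linear combination of the coefficients $c_S$, so $\Clean$ maps $\FF[\xi_1,\dots,\xi_m]$ into itself without ever creating a denominator; likewise the row and column additions of Lemma~\ref{lem:multilin} stay inside $\FF[\xi_1,\dots,\xi_m]$. This is precisely why the hypothesis demands polynomial rather than merely rational coefficients and $\ell^2\in\FF^m$ rather than $\ell^2\in K^m$: it guarantees that the final matrix $O=\rho_{\ell^2}(A)$ is an SDR of $\Mult_{\xi^2}P$ whose off-diagonal entries remain in $\FF[\xi_1,\dots,\xi_m]$, after which the converse direction of Lemma~\ref{lem:xi}, substituting each $\xi_i$ by $x_i$, yields an SDR of $P$.

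\textbf{Main obstacle.} The hard step is representing the individual linear factors over $K$. Proposition~\ref{prop:representable} represents a linear form only when its coefficients are quadratic residues, and over the infinite field $K$ this is not automatic: an element of $\FF[\xi_1,\dots,\xi_m]$ is a square precisely when it lies in $\FF[\xi_1^2,\dots,\xi_m^2]$. Now $\Mult_{\xi^2}P$ has all of its coefficients in $\FF[\xi_1^2,\dots,\xi_m^2]$, since every $x_i^2$ was replaced by $\xi_i^2$, so its own coefficients are squares; but the factors $L_i$ need not be. The crux is therefore to show that an $\FF[\xi_1,\dots,\xi_m]$-factorization of a polynomial whose coefficients are already squares can be realized by a symmetric determinant whose off-diagonal entries stay in $\FF[\xi_1,\dots,\xi_m]$, controlling the square roots introduced by Proposition~\ref{prop:representable} so that they neither leave $\FF[\xi_1,\dots,\xi_m]$ nor produce denominators. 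It is exactly this gap, between a factorization with polynomial coefficients and one with square coefficients, that blocks the clean equivalence of Theorem~\ref{thm:multilin} from carrying over and restricts the result to the two one-directional statements above.
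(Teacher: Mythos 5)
Your proof of the first bullet is correct and coincides with the paper's: Lemma~\ref{lem:xi} transports representability of $P$ to representability of $\Mult_{\xi^2}P$ over $\FF(\xi_1,\dots,\xi_m)$, and Theorem~\ref{thm:nc}, which holds over any field of characteristic $2$, then yields the factorization with coefficients in $\FF(\xi_1,\dots,\xi_m)$.

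The second bullet you do not actually prove, and you say so yourself: your ``main obstacle'' paragraph is the missing step, not a remark about one. Moreover, the step does not merely resist your plan --- it is impossible within it. By Proposition~\ref{prop:detchar2}, off-diagonal entries of a symmetric matrix enter the determinant only through their squares; hence any symmetric matrix whose off-diagonal entries lie in $\FF[\xi_1,\dots,\xi_m]$ and whose diagonal entries are variables or elements of $\FF$ --- the only kind of matrix your construction (after $\Clean$ and $\rho_{\ell^2}$) can hand to Lemma~\ref{lem:xi} --- has a determinant all of whose coefficients lie in $\FF[\xi_1^2,\dots,\xi_m^2]$. A single factor $L_i$ with a coefficient in $\FF[\xi_1,\dots,\xi_m]\setminus\FF[\xi_1^2,\dots,\xi_m^2]$, which the hypothesis permits, therefore admits \emph{no} representation of the required shape, even though the product $L_1\cdots L_k\equiv\Mult_{\xi^2}P$ does have square coefficients. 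Any correct proof must exploit this cancellation across factors: either ``repair'' the factorization so that every factor has quadratic-residue coefficients, or represent the product globally (say over $\FF[\xi_1^{1/2},\dots,\xi_m^{1/2}]$ followed by a descent back to $\FF[\xi_1,\dots,\xi_m]$); you supply neither. Two further remarks. First, to be fair to you, the paper's own proof is no better on this point: it invokes Theorem~\ref{thm:multilin} over $\FF(\xi_1,\dots,\xi_m)$ and verifies only that no inverses of coefficients are taken, never that the coefficients are squares --- a condition that is automatic over a finite field but is exactly what fails over $\FF(\xi_1,\dots,\xi_m)$ --- so you have located a genuine soft spot in the paper's argument. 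Second, your closing diagnosis is inaccurate: the theorem is split into two one-directional statements because of the mismatch between $\FF(\xi_1,\dots,\xi_m)$-coefficients in the first bullet and $\FF[\xi_1,\dots,\xi_m]$-coefficients in the second, as the paper states immediately after the theorem, not because of the squares issue, which the paper never acknowledges.
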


To obtain a full characterization, we would need to prove that in the first statement, we can obtain linear factors with coefficients in $\FF[\xi_1,\dotsc,\xi_m]$. 

\begin{proof}

The first statement only consists in applying Theorem~\ref{thm:nc} to $P\in\FF(\xi_1,\dotsc,\xi_m)$.

For the second statement, suppose $\Mult_{\xi^2} P\equiv
L_1\times\dotsb\times L_k\mod\I(\ell^2)$ for some $\ell^2$, and each $L_i$
has coefficients in $\FF[\xi_1,\dotsc,\xi_m]$. Using
Theorem~\ref{thm:multilin}, we can build a matrix representing
$\Mult_{\xi^2} P$. Since the $L_j$'s have as coefficients some polynomials
in the $\xi_i$'s, and since the transformations of
Lemma~\ref{lem:multilin} used in the proof of Theorem~\ref{thm:multilin}
use no inverse of any of the coefficients, we get an SDR of $\Mult_{\xi^2}
P$ the non-diagonal entries of which are polynomials in the $\xi_i$'s.
Using Lemma~\ref{lem:xi}, we conclude that $P$ is representable.
\end{proof}

\section{Factorization} \label{sec:factor} 

Section~\ref{sec:multilin} gives a characterization of representable multilinear polynomials in terms of the factorization of the polynomials into linear polynomials \emph{modulo} an ideal $\I(\ell)$. We give in this section an algorithm to decide this problem. Its running time is polynomial in the number of monomials of the polynomial. In this section, $\FF$ is a finite field of characteristic $2$.

\subsection{Preliminary results} 

In the previous section, we worked with elements of the quotient ring $\Ring(\ell)$ for some tuple $\ell$. The algorithms presented in this section deal with multilinear polynomials $P\in\FF[x_1,\dotsc,x_m]$.
Theorem~\ref{thm:multilin} is the basic tool. Since $P=\Mult_\ell(P)$ for any $\ell$, it can be reformulated as follows:
A multilinear polynomial is representable if, and only if, for every tuple of squares $\ell^2$, there exist linear polynomials $L_1$, \dots, $L_k$ such that $P=\Mult_{\ell^2}(L_1\dotsm L_k)$. It is equivalent to say that $\pi_{\ell^2}(P)=\pi_{\ell^2}(L_1\dotsm L_k)$. Moreover, as seen before this existence does not depend on the tuple $\ell^2$. This motivates the following definition.

\begin{definition}
We say that a multilinear polynomial $P$ is \emph{factorizable} if there exist a tuple of squares $\ell^2$ 
and 
linear polynomials $L_1$, \dots, $L_k$ such that
\[P= \Mult_{\ell^2}(L_1\times\dotsb\times L_k).\]
\end{definition}

The algorithm heavily relies on the fact that the possibility to factorize a polynomial \emph{modulo} $\I(\ell^2)$ does not depend on $\ell^2$. Actually two tuples are used, $\bar 0=(0,\dotsc,0)$ and $\bar 1=(1,\dotsc,1)$. To simplify the notations, these tuples are respectively denoted by $0$ and $1$. Moreover $\pi_0$, $\rho_0$ and $\Mult_0$ on the one hand, and $\pi_1$, $\rho_1$ and $\Mult_1$ on the other hand, are the functions defined in Section~\ref{sec:quotientRings}. In the same way, let
\[\I_0=\I(\bar 0)=\langle x_1^2,\dots,x_m^2\rangle \quad\text{and}\quad \I_1=\I(\bar 1)=\langle x_1^2+1,\dots,x_m^2+1\rangle\]
and $\Ring_0$ and $\Ring_1$ be defined by analogy.

We shall sometimes write that $P$ is \emph{factorizable modulo $\I$} for $\I=\I_0$ or $\I_1$ instead of simply factorizable to emphasize the fact that we are working specifically with the ideal $\I$. 
Let $P$ be a multilinear polynomial. We define its \emph{linear part} $\lin(P)$ as the sum of all its monomials of degree at most $1$. For instance $\lin(xyz+xy+x+z+1)=x+z+1$. Furthermore, we write $\partial P/\partial x_i$ for the partial derivative of $P$ with respect to the variable $x_i$. For a multilinear polynomial, this equals the quotient in the euclidean division of $P$ by $x_i$.

To show how to test the factorizability of a multilinear polynomial, we
proceed in two steps. We first show how to test the factorizability of a
polynomial $P$ whose monomial of lowest degree has degree exactly $1$ (we
say that $P$ has \emph{valuation} $1$). 
To this end, we show  that $P$ is factorizable if, and only if, 
$P=\Mult_0(\lin(P)\times\frac{1}{\alpha_i}\frac{\partial P}{\partial x_i})$
where $\alpha_ix_i$ is a nonzero monomial of $\lin(P)$ (Lemmas~\ref{lem:oneFactor} 
and~\ref{lem:factor}).
The second step proves that given any multilinear
polynomial $P$, we can compute a polynomial $Q$ of the same degree 
whose valuation is $1$
such that $P$ is factorizable if, and only if, $Q$ is also. There are two
cases, covered by Lemmas~\ref{lem:fullPoly} and \ref{lem:LinPart}. This
will allow us to describe an algorithm using alternatively those two steps
to test factorizability.

\begin{lemma}\label{lem:oneFactor}
Let $P$ be a multilinear polynomial 
of valuation $1$. 
If there exists some linear polynomials $L_1$, \dots, $L_k$ such that
\[P= \Mult_0(L_1\times\dotsb\times L_k),\]
then there exist an index $j$ and a constant $\alpha$ such that $\lin(P)=\alpha L_j$.
\end{lemma}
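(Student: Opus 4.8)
The plan is to compare the low-degree parts of $P$ and of the product $L_1\cdots L_k$, exploiting the fact that $\Mult_0$ replaces each $x_i^2$ by $0$ and hence only deletes non-multilinear monomials, leaving every monomial of degree at most $1$ untouched. Writing each factor as $L_s=c_s+\ell_s$, with $c_s\in\FF$ its constant term and $\ell_s$ its homogeneous linear part, I would first record that $\lin(\Mult_0(L_1\cdots L_k))=\lin(L_1\cdots L_k)$, since any monomial of degree at most $1$ is automatically multilinear and therefore survives the projection. Consequently $\lin(P)$ equals the degree-$\le 1$ part of $L_1\cdots L_k$, whose constant coefficient is $\prod_s c_s$ and whose linear part is $\sum_s \ell_s\prod_{t\neq s}c_t$.

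Next I would feed in the hypothesis that $P$ has valuation $1$ in two complementary ways. Since $P$ has no constant term, $\prod_s c_s=0$, which forces at least one $c_j$ to vanish because $\FF$ is a field. At the same time $\lin(P)\neq 0$, as $P$ carries a genuine degree-$1$ monomial. These two facts together pin down exactly how many of the constant terms $c_s$ may vanish.

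The heart of the argument is a short case analysis on $Z=\{s:c_s=0\}$. If $|Z|\ge 2$, then each product $\prod_{t\neq s}c_t$ still contains a zero factor, so the entire linear part $\sum_s\ell_s\prod_{t\neq s}c_t$ collapses to $0$, contradicting $\lin(P)\neq 0$; combined with $|Z|\ge 1$ this leaves $|Z|=1$, say $Z=\{j\}$. Then every summand with $s\neq j$ carries the factor $c_j=0$ and drops out, so $\lin(P)=\bigl(\prod_{t\neq j}c_t\bigr)\,\ell_j$. Setting $\alpha=\prod_{t\neq j}c_t\neq 0$ and noting that $c_j=0$ makes $L_j=\ell_j$, we obtain $\lin(P)=\alpha L_j$, as required.

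The only real subtlety, and the step I would handle most carefully, is this interplay between the two consequences of valuation $1$: it is the non-vanishing of $\lin(P)$ that rules out two or more factors with zero constant term, while the vanishing of the constant term of $P$ guarantees at least one. Everything else is routine bookkeeping of how the low-degree part of a product of affine linear forms splits, together with the elementary observation that $\Mult_0$ acts as the identity in degrees $0$ and $1$.
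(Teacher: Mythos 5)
Your proof is correct and follows essentially the same route as the paper's: both observe that $\Mult_0$ preserves the degree-$\le 1$ part, use the vanishing constant term to locate a factor $L_j$ with $L_j(0)=0$, and identify $\lin(P)$ as $L_j$ scaled by the constant coefficient of the product of the remaining factors. Your explicit case analysis on the set $Z$ of factors with zero constant term just makes visible what the paper leaves implicit (that $\lin(P)\neq 0$ forces exactly one such factor), so no substantive difference.
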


\begin{proof}
Suppose that $P=\Mult_0(L_1\dotsm L_k)$ and let $Q=L_1\dotsm L_k$. In particular, $Q(0)=0$ and $\lin(P)=\lin(Q)$. Thus there exists $j$ such that $L_j(0)=0$. In other words, $L_j$ is a sum of degree-$1$ monomials. The linear part of $P$ being nonzero, the polynomial $Q/L_j$ has a constant coefficient $\alpha\in\FF$. A degree-$1$ monomial of $Q$ is the product of a monomial of $L_j$ by $\alpha$. This means that $\lin(P)=\lin(Q)=\alpha L_j$. 
\end{proof}

We now prove that we can efficiently test if some linear polynomial $L$ can appear in the factorization of $P$ \emph{modulo} $\I_0$. 

\begin{lemma}\label{lem:factor}
Let $P$ be a multilinear polynomial and $L$ be a linear polynomial with no constant coefficient, having a nonzero monomial $\alpha_i x_i$. 
If there exists a multilinear polynomial $Q$ such that $P=\Mult_0(L\times Q)$, then
\[P=\Mult_0\left(L\times\frac{1}{\alpha_i}\frac{\partial P}{\partial x_i}\right).\]
\end{lemma}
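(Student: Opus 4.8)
The plan is to move the whole statement into the quotient ring $\Ring_0 = \FF[x_1,\dotsc,x_m]/\I_0$, where the hypothesis $P=\Mult_0(L\times Q)$ becomes the purely multiplicative relation $\pi_0(P)=\ell\,q$, with $\ell=\pi_0(L)$ and $q=\pi_0(Q)$ (using $\pi_0\rho_0=\operatorname{Id}$ so that $\pi_0(P)=\pi_0(LQ)=\ell q$). The argument then rests on two elementary observations. First, since $L=\sum_j\alpha_j x_j$ has no constant term, Frobenius gives $L^2=\sum_j\alpha_j^2 x_j^2$, whence $\ell^2=\pi_0(L^2)=0$: the image of $L$ squares to zero in $\Ring_0$. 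Second, the partial derivative $\partial_i$ descends to $\Ring_0$. Indeed, in characteristic $2$ one has $\partial_i(x_k^2)=2x_k\delta_{ik}=0$ for every $k$, so $\partial_i(\I_0)\subseteq\I_0$, and $\partial_i$ induces a derivation $\bar\partial_i$ on $\Ring_0$ satisfying $\pi_0\circ\partial_i=\bar\partial_i\circ\pi_0$ and $\bar\partial_i(\ell)=\alpha_i$.

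With these two facts the computation is immediate. Applying $\bar\partial_i$ to $\pi_0(P)=\ell q$ and using the Leibniz rule gives $\bar\partial_i\pi_0(P)=\alpha_i q+\ell\,\bar\partial_i(q)$. Multiplying by $\ell$ and invoking $\ell^2=0$ annihilates the second term, leaving $\ell\,\bar\partial_i\pi_0(P)=\alpha_i\,\ell q=\alpha_i\,\pi_0(P)$. Dividing by the nonzero constant $\alpha_i$ and rewriting $\bar\partial_i\pi_0(P)=\pi_0(\partial P/\partial x_i)$ yields $\pi_0(P)=\tfrac{1}{\alpha_i}\,\pi_0\!\bigl(L\,\tfrac{\partial P}{\partial x_i}\bigr)$. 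Finally I would apply $\rho_0$ to both sides: since $P$ is multilinear, $\rho_0\pi_0(P)=\Mult_0(P)=P$, while the right-hand side becomes $\Mult_0\!\bigl(L\,\tfrac{1}{\alpha_i}\tfrac{\partial P}{\partial x_i}\bigr)$, which is exactly the desired identity.

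I expect the only delicate point to be the bookkeeping that justifies that $\partial_i$ passes to the quotient and commutes with $\pi_0$; this is precisely where characteristic $2$ is genuinely used, through $\partial_i(x_k^2)=0$. If one prefers to avoid introducing derivations on $\Ring_0$ altogether, the same proof runs through a direct expansion: write $L=\alpha_i x_i+L'$ and $Q=x_iA+B$ with $A,B,L'$ free of $x_i$, expand $\Mult_0(LQ)$ to read off $\partial P/\partial x_i=\alpha_i B+\Mult_0(L'A)$, and then expand $\Mult_0(L\cdot\tfrac{1}{\alpha_i}\partial_i P)$ the same way. The single term that must cancel is $\tfrac{1}{\alpha_i}\Mult_0\bigl((L')^2A\bigr)$, and it vanishes for exactly the reason above: $(L')^2$ is a sum of squares $x_j^2$, each of which $\Mult_0$ sends to $0$. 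Either route makes clear that the conceptual crux is that the ``square part'' of $L$ is annihilated modulo $\I_0$.
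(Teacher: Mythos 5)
Your proof is correct and is essentially the paper's argument in quotient-ring clothing: the paper differentiates the explicit identity $P = LQ + x_1^2p_1 + \dotsb + x_m^2p_m$, multiplies by $L/\alpha_i$, and applies $\Mult_0$, relying on exactly your two facts --- that $\partial_i$ preserves $\I_0$ (i.e.\ $\partial_i(x_j^2 p_j) = x_j^2\,\partial_i p_j$ in characteristic $2$) and that $L^2$ is a sum of squares because $L$ has no constant term. The only difference is bookkeeping: you pass to $\Ring_0$ first and differentiate via the induced derivation $\bar\partial_i$, whereas the paper carries the ideal terms explicitly and kills them at the end with $\Mult_0$.
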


\begin{proof}
Suppose that $P=\Mult_0(L\times Q)$. This means that there exist polynomials $p_1$, \dots, $p_m$ such that
\[P=L\times Q+x_1^2p_1+\dotsb+x_m^2p_m.\]
Moreover, $\partial(x_j^2p_j)/\partial x_i=x_j^2\partial p_j/\partial x_i$ for all $j$. For $j=i$, this comes from the fact that $\partial x_i^2/\partial x_i=2x_i=0$. Since $L$ contains the monomial $\alpha_i x_i$, $\partial L/\partial x_i=\alpha_i$. Therefore,
\[\frac{\partial P}{\partial x_i}=\alpha_i Q + L\frac{\partial Q}{\partial x_i}+\frac{\partial p_1}{\partial x_i} x_1^2+\dotsb+\frac{\partial p_m}{\partial x_i} x_m^2.\]
Since $\alpha_i\neq 0$, the previous equality can be multiplied by $L/\alpha_i$ to obtain
\[L\times\frac{1}{\alpha_i}\frac{\partial P}{\partial x_i}=L\times Q + \frac{L^2}{\alpha_i}\frac{\partial Q}{\partial x_i}+\frac{L}{\alpha_i}\left(\frac{\partial p_1}{\partial x_i} x_1^2+\dotsb+\frac{\partial p_m}{\partial x_i} x_m^2\right).\]
Since $L^2$ is a sum of squares,
\[\Mult_0\left(L\times\frac{1}{\alpha_i}\frac{\partial P}{\partial x_i}\right)=\Mult_0(L\times Q)=P.\]
\end{proof}

In the second step we prove that given any multilinear polynomial $P$, we
can find a polynomial $Q$ of valuation $1$ such that $P$ is factorizable
if, and only if, $Q$ is also. There are two distinct cases. At first, we focus 
on \emph{full} polynomials, that is without zero coefficient. An $m$-variate
multilinear polynomial is full if it has $2^m$ nonzero monomials. In particular,
if every coefficient equals $1$, then the polynomial can be factorized as
$\prod_i(1+x_i)$. In the general case, such a factorization does not necessarily 
exist.

In the following lemma, given a full polynomial, a new polynomial is produced 
that is either not full, or has less variables. In any case, the number of monomials
decreases.

\begin{lemma}\label{lem:fullPoly}
Let $P$ be a multilinear polynomial in $m$ variables with exactly $2^m$ monomials.
Then there exists a linear polynomial $L$ such that $Q\defeq \Mult_0(P\times L)$ is nonzero
and has less than $2^m$ monomials. 
Moreover, $P$ is factorizable if, and only if, $Q$ is factorizable. 
\end{lemma}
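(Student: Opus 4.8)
The plan is to exhibit an explicit degree-one multiplier $L$ with nonzero constant term, because such an $L$ is invertible modulo $\I_0$ and, crucially, its inverse is again \emph{linear}; this is precisely what makes the passage from $P$ to $Q$ reversible at the level of factorizations. Index the monomials of a multilinear polynomial by subsets of $\{1,\dots,m\}$, writing $x^S=\prod_{i\in S}x_i$ and $P=\sum_S P_S\,x^S$. The hypothesis that $P$ has $2^m$ monomials means $P_S\neq 0$ for every $S$. For a linear $L=\lambda_0+\sum_i\lambda_i x_i$ a direct computation, using that $x_i\cdot x^S$ survives $\Mult_0$ exactly when $i\notin S$, gives the coefficient of $x^S$ in $Q=\Mult_0(P\times L)$ as
\[
Q_S \;=\; \lambda_0 P_S \;+\; \sum_{i\in S}\lambda_i\,P_{S\setminus\{i\}}.
\]

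I would then take the simplest candidate, $L=1+c\,x_1$ for a constant $c$ to be fixed. Here the coefficients of monomials not containing $x_1$ are unchanged ($Q_S=P_S$ for $1\notin S$), while $Q_S=P_S+c\,P_{S\setminus\{1\}}$ when $1\in S$. Choosing $c=P_{\{1\}}\,P_\emptyset^{-1}$, which is legitimate and nonzero since $P_\emptyset,P_{\{1\}}\neq 0$ by fullness, forces $Q_{\{1\}}=0$. Hence $Q$ has at most $2^m-1$ monomials; and since its constant term is $Q_\emptyset=P_\emptyset\neq 0$, the polynomial $Q$ is nonzero. This settles the first assertion (here $m\geq 1$, so that there is indeed a variable $x_1$ to use).

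For the ``moreover'' part the key observation is that $\pi_0(L)$ is invertible with a linear inverse. Indeed, in $\Ring_0$ every $x_i^2$ vanishes, so $\pi_0(L)^2=\pi_0\bigl((1+c x_1)^2\bigr)=\pi_0(1+c^2x_1^2)=1$; thus $\pi_0(L)$ is its own inverse and is in particular linear. Now $\pi_0(Q)=\pi_0(P)\,\pi_0(L)$. If $\pi_0(P)=t_1\cdots t_k$ with each $t_j$ a linear element of $\Ring_0$, then $\pi_0(Q)=t_1\cdots t_k\,\pi_0(L)$ is again a product of linear elements, so $Q$ is factorizable; conversely, if $\pi_0(Q)=s_1\cdots s_j$ with the $s_j$ linear, then $\pi_0(P)=s_1\cdots s_j\,\pi_0(L)^{-1}$, and since $\pi_0(L)^{-1}=\pi_0(L)$ is linear, $P$ is factorizable. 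Recalling that a multilinear polynomial is factorizable exactly when its image under $\pi_0$ is a product of linear elements of $\Ring_0$ (this property being independent of the chosen tuple of squares, as noted in the reformulation preceding the definition of factorizability), this proves the equivalence.

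The only genuinely delicate point is this invertibility-with-linear-inverse phenomenon: it is special to $\I_0$ (and, more generally, to tuples of squares via the absolute value), and it is exactly what guarantees that multiplying a factorization by $\pi_0(L)$ can be undone without leaving the class of products of linear elements. Everything else reduces to the coefficient identity for $Q_S$ above, so I do not expect any real obstacle beyond stating that computation cleanly and verifying that the chosen $c$ keeps $L$ invertible while killing one monomial.
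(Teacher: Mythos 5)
Your proof is correct and follows essentially the same route as the paper: your multiplier $L=1+cx_1$ with $c=P_{\{1\}}P_\emptyset^{-1}$ is exactly the paper's choice $L=p_ix_i+p_0$ rescaled by $p_0^{-1}$, and in both cases the coefficient of the chosen variable cancels in characteristic $2$ while the constant term survives, so $Q$ is nonzero with fewer monomials. The reversibility argument is also the same idea in two guises: the paper multiplies back by $L$ and uses $\Mult_0(L\times Q)=p_0^2P$ since $L^2\equiv p_0^2 \pmod{\I_0}$, while you phrase this as $\pi_0(L)$ being invertible in $\Ring_0$ with a linear (indeed self-) inverse.
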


\begin{proof}
Let $x_i$ be any variable of $P$ and $L=p_i x_i+p_0$ where $p_i$ is the coefficient of $x_i$ in $P$ and $p_0$ is its constant coefficient. 
Then the constant coefficient of $Q$ is $p_0^2$ and thus $Q$ is nonzero, and the coefficient of $x_i$ in $Q$ is $p_0 p_i+p_i p_0=0$. 
Thus $Q$ has less monomials than $P$. 

By definition, $Q$ is factorizable if $P$ is also. Moreover, $\Mult_0(L\times Q)=\Mult_0(L^2\times P)=\Mult_0(p_0^2P)=p_0^2P$ since $P$ is multilinear. Thus if $Q$ is factorizable, then so is $P$.
\end{proof}

It remains to deal with the case where $P$ does not possess all the possible monomials. In this lemma, we consider the ideal $\I_1$ instead of $\I_0$ as before.

\begin{lemma}\label{lem:LinPart}
Let $P$ be a multilinear polynomial over $m$ variables with at most $(2^m-1)$ monomials. 
Then there exists a primitive monomial $x^\alpha$ such that $Q\defeq \Mult_1(x^\alpha\times P)$ has valuation $1$. 
Moreover, $P$ is factorizable if, and only if, $Q$ is factorizable.
\end{lemma}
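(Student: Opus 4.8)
The plan is to identify the action of multiplication by a monomial modulo $\I_1$ with a translation on the Boolean hypercube, and then to read off both the valuation condition and the factorizability equivalence from this picture. First I would note that, working modulo $\I_1$, every primitive monomial is congruent to a squarefree one (since $x_i^2\equiv 1$), so it suffices to look for $x^\alpha$ of the form $x^S=\prod_{i\in S}x_i$ for some $S\subseteq\{1,\dots,m\}$. Identifying each squarefree monomial with the corresponding subset of $\{1,\dots,m\}$ (a vertex of the hypercube $\{0,1\}^m$), a direct computation gives $x^S\cdot x^T\equiv x^{S\triangle T}\pmod{\I_1}$, so $Q=\Mult_1(x^S\times P)$ has support exactly $\{S\triangle T : x^T\text{ a monomial of }P\}$, with the same coefficients. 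In other words, multiplication by $x^S$ modulo $\I_1$ simply translates the support of $P$ by $S$ in the hypercube.

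Under this identification, $Q$ has a constant term if and only if $S$ is a monomial of $P$, and $Q$ has a degree-$1$ monomial if and only if some monomial $x^T$ of $P$ satisfies $|S\triangle T|=1$, i.e.\ $S$ is adjacent in the hypercube to a monomial of $P$. Hence $Q$ has valuation $1$ precisely when $S$ lies outside the support of $P$ but is adjacent to it. To produce such an $S$ I would invoke connectivity of the hypercube: assuming $P\neq 0$, its support is nonempty, and since $P$ has at most $2^m-1$ monomials the support is a proper subset of the $2^m$ vertices; a connected graph with a nonempty proper vertex subset has an edge joining that subset to its complement, and the endpoint lying in the complement is the desired $S$. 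This settles the first assertion.

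For the equivalence I would first record that $\Mult_1(x^S\times-)$ is an involution on multilinear polynomials: since $(x^S)^2\equiv 1\pmod{\I_1}$ and $\pi_1$ is a ring morphism with $\pi_1\circ\rho_1=\mathrm{Id}$, one obtains $\Mult_1(x^S\times Q)=P$. The key observation is then that $x^S=\prod_{i\in S}x_i$ is itself a product of linear polynomials. Recalling that factorizability does not depend on the chosen tuple of squares, I read ``factorizable'' throughout as ``factorizable modulo $\I_1$''. Then, if $P=\Mult_1(L_1\dotsm L_k)$ for linear polynomials $L_j$, the identity $Q=\Mult_1(x^S\times P)=\Mult_1\bigl((\textstyle\prod_{i\in S}x_i)\,L_1\dotsm L_k\bigr)$ exhibits $Q$ as a product of linear factors modulo $\I_1$; symmetrically, using $P=\Mult_1(x^S\times Q)$, any such factorization of $Q$ yields one of $P$.

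The argument has no serious obstacle; the only points needing care are the reduction to squarefree $x^\alpha$ and the clean translation of ``valuation $1$'' into the two hypercube conditions (no monomial at $S$, but a monomial at hypercube-distance $1$ from $S$), together with the appeal to ideal-independence of factorizability, which lets the whole equivalence be carried out modulo the single ideal $\I_1$.
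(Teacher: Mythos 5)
Your proof is correct, and its ``Moreover'' part coincides with the paper's: both rest on the identity $\Mult_1\bigl(x^\alpha\times\Mult_1(x^\alpha\times P)\bigr)=P$ (since $(x^\alpha)^2\equiv 1$ modulo $\I_1$), on the fact that $x^\alpha$ is itself a product of linear polynomials, and on the independence of factorizability from the chosen tuple of squares (Theorem~\ref{thm:multilin}), so that everything can be read modulo $\I_1$. Where you genuinely differ is in producing $x^\alpha$. The paper argues by cases on the valuation of $P$ and makes explicit, degree-minimal choices: for valuation $>1$ it takes a minimal monomial of $P$ divided by one of its variables, and for valuation $0$ a minimal monomial absent from $P$; minimality is what certifies, in each case, that the chosen monomial is absent from the support of $P$ while one of its neighbours (at Hamming distance $1$) is present. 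You instead observe that these two conditions are exactly what ``$Q$ has valuation $1$'' means after the translation $T\mapsto S\triangle T$, and you obtain such an $S$ abstractly from connectivity of the hypercube, the support being a nonempty proper subset of $\{0,1\}^m$. Your route unifies the paper's two cases and makes the geometric content transparent; the paper's explicit choices are the ones implemented at Lines~\ref{line:val0} and~\ref{line:val2} of \Prep{} (Algorithm~\ref{algo:prep}), so they double as a concrete, canonical computation rule (though a boundary vertex can also be found in polynomial time by scanning the neighbours of the support). A minor point in your favour: you flag the hypothesis $P\neq 0$ explicitly, which the lemma does need (for $P=0$ no $x^\alpha$ works) and which holds where the lemma is applied, since \Prep{} disposes of linear polynomials, including $0$, before reaching this case.
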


\begin{proof}
If $P$ already has valuation $1$, we can take 
$\alpha=(0,\dotsc,0)$.

If $P$ has valuation greater than $1$, 
let $x^{\beta}$ be a nonzero primitive monomial of $P$ of minimal degree. Let $i$ be some index such that $\beta_i\neq0$ 
and define $x^\alpha=x^\beta/x_i$. Then $\Mult_1(x^\alpha x^\beta)=x_i$. Moreover, $\Mult_1(x^\alpha x^\gamma)=1$ if, and only if, $\alpha=\gamma$. Since $\deg x^\alpha<\deg x^\beta$, the coefficient of $x^\alpha$ in $P$ is zero. Thus $x^\alpha$ satisfies the lemma.

If $P$ has valuation $0$, 
let $x^\alpha$ be a primitive monomial of minimal degree whose coefficient in $P$ is zero. Such a monomial exists since $P$ has at most $(2^m-1)$ monomials. Then $\Mult_1(x^\alpha\times P)$ has no constant coefficient. 
Furthermore, by the minimality of $x^\alpha$, every monomial of smaller degree has a nonzero coefficient in $P$. This is in particular the case of the monomials $x^\alpha/x_i$ for every variable $x_i$ that divides $x^\alpha$. Since $\Mult_1(x^\alpha (x^\alpha/x_i)) =x_i$,  $x^\alpha$ satisfies the lemma. 

To finish the proof, we remark that $\Mult_1(x^\alpha Q)=\Mult_1((x^\alpha)^2 P)=P$. Therefore, $P$ is factorizable if, and only if, $Q$ is factorizable. 
\end{proof}

For the proof of the next corollary, one needs to remark a simple fact. If a multilinear polynomial $P$ is representable, then $\partial P/\partial x_i$ is also representable for any variable $x_i$. Indeed, suppose that $M$ is an SDR of $P$ with each variable appearing at most once on the diagonal. Suppose that $M_{j,j}=x_i$. Then the determinant of the matrix obtained by removing the row and column $j$ from $M$ equals $\partial P/\partial x_i$.

\begin{corollary}\label{cor:baseField}
Let $P\in\FF[x_1,\dotsc,x_m]$ be a multilinear polynomial and $\mathbb G/\FF$ a field extension. If $P$ has an SDR with entries in $\mathbb G\cup\{x_1,\dotsc,x_m\}$, then it has an SDR with entries in $\FF\cup\{x_1,\dotsc,x_m\}$.
\end{corollary}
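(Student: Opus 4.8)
The plan is to translate the statement into the language of factorization developed in Section~\ref{sec:factor} and then \emph{descend} from $\mathbb G$ to $\FF$, exploiting that the whole reduction scheme of that section never leaves the field generated by the coefficients of $P$.

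First I would reduce to the case where $\mathbb G$ is \emph{finite}. An SDR of $P$ over $\mathbb G$ uses only finitely many constants $c_1,\dots,c_r\in\mathbb G$, and the identity $\det(M)=P$ amounts, coefficient by coefficient, to a finite system of polynomial equations over $\FF$ satisfied by $(c_1,\dots,c_r)$. This system has a solution in an extension of $\FF$, hence by the weak Nullstellensatz together with $\overline{\FF}=\bigcup_n\FF_{2^n}$ it has a solution in a finite extension $\FF'/\FF$; specialising the $c_i$ to such a solution turns $M$ into an SDR of $P$ over $\FF'$. So I may assume $\mathbb G$ is finite, which is crucial: it makes Theorem~\ref{thm:multilin} available \emph{over $\mathbb G$}, so that over $\mathbb G$ representability, factorizability modulo $\I_0$, and factorizability modulo $\I_1$ all coincide. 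By Theorem~\ref{thm:nc} applied over $\mathbb G$ with $\ell^2=0$, the hypothesis gives $\pi_0(P)=\pi_0(L_1\cdots L_k)$ with the $L_i$ linear over $\mathbb G$; the goal becomes to produce such a factorization with the $L_i$ linear over $\FF$, after which Theorem~\ref{thm:multilin} over $\FF$ returns an SDR of $P$ over $\FF$.

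I would prove the descent by induction on a suitable measure (say the pair $(\deg P,\ \#\text{monomials})$ ordered lexicographically), running the reductions of Section~\ref{sec:factor} but checking that every object produced stays over $\FF$ while factorizability is preserved over $\mathbb G$. If $P$ has valuation $1$, Lemma~\ref{lem:oneFactor} forces $\lin(P)$, which lies over $\FF$, to be up to a constant one of the factors, and Lemma~\ref{lem:factor} then yields the explicit $\FF$-identity $P=\Mult_0\bigl(\lin(P)\times\frac{1}{\alpha_i}\frac{\partial P}{\partial x_i}\bigr)$. The cofactor $\frac{1}{\alpha_i}\frac{\partial P}{\partial x_i}$ has coefficients in $\FF$ and strictly smaller degree; it is representable over $\mathbb G$ by the remark preceding the statement (the chosen $x_i$ occurs on the diagonal of an SDR of $P$ by Lemma~\ref{lem:multilin}), followed by appending a $1\times1$ block carrying the constant $\frac{1}{\alpha_i}$ to rescale the determinant. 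Hence, by Theorem~\ref{thm:multilin} over $\mathbb G$, the cofactor is still factorizable over $\mathbb G$, and the induction applies. If instead $P$ does not have valuation $1$, I would normalise it using Lemmas~\ref{lem:fullPoly} and~\ref{lem:LinPart}, which replace $P$ by a strictly smaller $Q$ obtained by multiplying $P$ by an $\FF$-linear polynomial or monomial and reducing. The equivalences they assert rest on the field-independent identities $\Mult_0(L\times Q)=p_0^2P$ and $\Mult_1(x^\alpha\times Q)=P$, so ``$P$ factorizable $\iff$ $Q$ factorizable'' holds over $\FF$ and over $\mathbb G$ at once; with Theorem~\ref{thm:multilin} over the finite field $\mathbb G$ this keeps the invariant ``representable over $\mathbb G$'' alive across the switch between $\I_0$ and $\I_1$. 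Composing all the extracted $\FF$-identities down to the base case ($1\times1$, a linear polynomial) exhibits $P=\Mult_0(\ell_1\cdots\ell_n)$ with each $\ell_j$ linear over $\FF$.

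The hard part is precisely the bookkeeping of this descent: one must ensure that the mere \emph{existence} of a factorization over the extension $\mathbb G$ forces the canonical $\FF$-witnesses $\lin(P)$ and $\frac{1}{\alpha_i}\frac{\partial P}{\partial x_i}$ to succeed at every step, and that the recursion keeps each reduced polynomial representable over $\mathbb G$ even while alternating between $\I_0$ and $\I_1$. Both difficulties dissolve once $\mathbb G$ is finite, which is exactly what the specialisation step buys: Theorem~\ref{thm:multilin} over $\mathbb G$ then reconciles the factorizability statements (supplied by Lemmas~\ref{lem:oneFactor}--\ref{lem:LinPart}) with the representability statements (consumed by the derivative remark). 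The only remaining point needing a line of justification is that the variable $x_i$ drawn from $\lin(P)$ really sits on the diagonal of an SDR of $P$; this holds because a variable contributing a degree-one monomial to a symmetric determinant in characteristic $2$ cannot occur only off the diagonal, where it would contribute solely through its square.
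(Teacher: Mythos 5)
Your overall route coincides with the paper's own proof: work over $\mathbb G$, use Theorem~\ref{thm:multilin} to pass back and forth between representability and factorizability, normalise to valuation $1$ with Lemmas~\ref{lem:fullPoly} and~\ref{lem:LinPart}, extract the canonical $\FF$-witnesses $\lin(P)$ and $\frac{1}{\alpha_i}\frac{\partial P}{\partial x_i}$ via Lemmas~\ref{lem:oneFactor} and~\ref{lem:factor} together with the derivative remark, and recurse. Two of your additions are genuine improvements in rigour over the paper's write-up: the specialisation step reducing to finite $\mathbb G$ (Theorem~\ref{thm:multilin} is only stated for finite fields, and the paper applies it over $\mathbb G$ without comment), and the parity argument showing that the chosen $x_i$ must actually occur on the diagonal of an SDR of $P$.

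There is, however, one genuine gap: your induction is not well-founded as organised. The lexicographic measure $(\deg P, \text{number of monomials})$ can strictly \emph{increase} under the normalisation of Lemma~\ref{lem:LinPart}: multiplication by $x^\alpha$ modulo $\I_1$ permutes the primitive monomials, so the number of monomials never changes there, while the degree can go up. Concretely, for $P=1+x_1x_2+x_3x_4$ the lemma may take $x^\alpha=x_1$ and produce $Q=\Mult_1(x_1P)=x_1+x_2+x_1x_3x_4$, moving the measure from $(2,3)$ to $(3,3)$; your claim that the normalisation yields a ``strictly smaller'' $Q$ is false. Worse, no measure intrinsic to the polynomial can work when the normalising monomial involves a variable absent from $P$: starting from $1+x_3x_4$ in $\FF[x_1,\dots,x_4]$, normalisation can yield $x_1+x_1x_3x_4$, whose derivative with respect to $x_1$ is $1+x_3x_4$ again, so the very same polynomial recurs after one full round. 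The paper's proof avoids this by inducting on the \emph{number of variables}: normalisation never enlarges the ambient variable set, whereas each derivative step removes $x_i$ from the ambient ring permanently, so at most $m$ rounds occur (the normalisation phase inside a round terminates separately, cf.\ Lemma~\ref{lem:prep}). Substituting this measure for yours repairs the argument; everything else you wrote goes through essentially verbatim.
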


\begin{proof}
Let us first consider $\mathbb G$ as the base field. By Theorem~\ref{thm:multilin}, $P$ is representable if, and only if, it is factorizable. Using Lemmas~\ref{lem:fullPoly} and~\ref{lem:LinPart}, one can suppose that $P$ has valuation $1$. Moreover, using Lemmas~\ref{lem:oneFactor} and~\ref{lem:factor} and the remark before the corollary, one can deduce that $P$ is representable if, and only if, $P=\Mult_0(\lin(P)\times\frac{1}{\alpha_i}\frac{\partial P}{\partial x_i})$ and $\partial P/\partial x_i$ is representable. 

Now, $\partial P/\partial x_i$ is a polynomial with coefficients in $\FF$, which has an SDR with entries in $\mathbb G\cup\{x_1,\dotsc,x_m\}$. Moreover, it has one variable less than $P$. Therefore, we can prove the corollary by induction on the number of variables.
\end{proof}

\subsection{An algorithm for factorizability} 

The previous lemmas yield a polynomial time algorithm to decide whether some polynomial $P$ is factorizable.
We first give an algorithm $\Prep$ (Algorithm~\ref{algo:prep}) corresponding to Lemmas~\ref{lem:fullPoly} and~\ref{lem:LinPart}. 

\begin{algorithm2e}[htbp]
\DontPrintSemicolon
\caption{$\Prep(P)$}
\label{algo:prep}
\KwIn{A multilinear polynomial $P$}
\KwOut{A multilinear polynomial $Q$ of valuation $1$ or linear}
\BlankLine
\lIf{$P$ is linear}{\Return $P$}\;
\tcp*[l]{Lemma~\ref{lem:fullPoly}:}
\uElseIf{$P$ is full}{
    $x_i\gets$ some variable of $P$\;
    $p_i\gets$ coefficient of $x_i$ in $P$\;
    $p_0\gets$ constant coefficient of $P$\;
    $P\gets\Mult_0(P\times(p_ix_i+p_0))$\; \label{line:full}
    \Return $\Prep(P)$\;  
    }
\tcp*[l]{Lemma~\ref{lem:LinPart}:}
\uElseIf{$P$ has valuation $0$}{
    $x^\alpha\gets$ minimal monomial with a zero coefficient in $P$\;
    \Return $\Mult_1(x^\alpha P)$\;                 \label{line:val0}
    }
\uElseIf{$P$ has valuation $>1$}{
    $x^\alpha\gets$ minimal monomial of $P$, divided by one of its variables\;
    \Return $\Mult_1(x^\alpha P)$\;                 \label{line:val2}
}
\lElse{\Return $P$}\;
\end{algorithm2e}

\begin{lemma}\label{lem:prep}
Let $P\in\FF[x_1,\dotsc,x_m]$ be a multilinear polynomial. Then $Q=\Prep(P)$ is either linear, or has valuation $1$. Moreover, $P$ is factorizable if, and only if, $Q$ is also.

The algorithm runs in time polynomial in the number of variables and the number of monomials of $P$.
\end{lemma}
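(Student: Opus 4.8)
The plan is to analyze the recursive algorithm $\Prep$ and verify two things: correctness (the output is linear or has valuation $1$, and factorizability is preserved) and termination within polynomial time. The correctness follows almost immediately from the lemmas already established, so the heart of the argument is a careful bookkeeping of how the number of monomials and the number of variables evolve across recursive calls, together with establishing that the output satisfies the stated normal form.

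\textbf{Correctness.} First I would observe that $\Prep$ returns in one of four places, and I would check the invariant in each case. If $P$ is linear, it is returned unchanged and the conclusion holds trivially (and factorizability is preserved vacuously). The two branches invoking Lemma~\ref{lem:LinPart} (lines~\ref{line:val0} and~\ref{line:val2}) return $\Mult_1(x^\alpha P)$ where, by that lemma, the choice of $x^\alpha$ guarantees the result has valuation $1$ and is factorizable if and only if $P$ is; these are terminal returns, so the invariant holds directly. The final $\Else$ branch is reached precisely when $P$ is neither linear, nor full, nor of valuation $0$ or $>1$, which forces $P$ to have valuation exactly $1$, so returning $P$ is correct. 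The only branch that recurses is the ``full'' case (Lemma~\ref{lem:fullPoly}), and here I would argue by induction: the lemma guarantees that $Q=\Mult_0(P\times(p_ix_i+p_0))$ is nonzero, is factorizable if and only if $P$ is, and the recursive call $\Prep(Q)$ inherits the invariant by the induction hypothesis, so composing the two factorizability equivalences closes the loop.

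\textbf{Termination and complexity.} The subtle point—and the one I expect to need the most care—is the ``full'' branch, because it is the only one that recurses, and I must show it cannot loop forever and in fact contributes only polynomially many calls. By Lemma~\ref{lem:fullPoly}, the polynomial $Q$ produced in line~\ref{line:full} is nonzero and has strictly fewer than $2^m$ monomials, i.e. it is no longer full on the same variable set. Thus the recursive call cannot re-enter the ``full'' branch with the same number of variables; more precisely, each time we pass through line~\ref{line:full} the number of monomials strictly decreases, so this branch is taken at most $2^m$ times in succession, but in fact I would want a cleaner bound. The key observation is that after one full-step, the next call either terminates (linear, valuation $0$, or valuation $>1$, or valuation $1$) or is again full with strictly fewer monomials. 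Since the number of monomials is a nonnegative integer that strictly decreases at every full-step and is never increased by a terminal step, the total number of recursive calls is bounded by the initial number of monomials.

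\textbf{Per-call cost and the main obstacle.} Each individual call does a constant number of arithmetic operations per monomial: testing linearity, fullness, and valuation each scan the monomial list once, and computing $\Mult_0(P\times(p_ix_i+p_0))$ or $\Mult_1(x^\alpha P)$ multiplies by a sparse factor and reduces, costing $\mathcal O(\text{(\# monomials)})$ field operations, with the reduction $\Mult_0$ or $\Mult_1$ implementable monomial-by-monomial. Hence the total running time is polynomial in the number of variables $m$ and the number of monomials of $P$. I expect the main obstacle to be the termination bound for the full branch rather than correctness: one must be careful that the monomial count genuinely serves as a strictly decreasing termination measure across the recursion, since the other (terminal) branches can alter the monomial count in either direction but are invoked only once, so they do not threaten termination. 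Making this monotonicity argument airtight—so that the ``full'' branch is the only recursive branch and it strictly reduces a nonnegative integer measure—is the crux of the polynomial-time claim.
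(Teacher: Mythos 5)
Your proposal is correct and takes essentially the same approach as the paper: correctness is delegated to Lemmas~\ref{lem:fullPoly} and~\ref{lem:LinPart}, and termination rests on the observation that the full branch is the only recursive one and strictly decreases a nonnegative integer measure. The only (harmless) difference is the choice of measure: you use the monomial count, bounding the number of recursive calls by the number of monomials of $P$, whereas the paper notes that a full polynomial with fewer than $2^m$ monomials must involve fewer variables, bounding the recursion depth by $m$; both bounds yield the stated polynomial running time.
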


\begin{proof}
The correctness is ensured by Lemmas~\ref{lem:fullPoly} and~\ref{lem:LinPart}. We only have to prove its termination and the running time estimate. There is a recursive call when $P$ is full. From Lemma~\ref{lem:fullPoly}, $\Mult_0(P\times(p_0x_i+p_i))$ then has at most $(2^m-1)$ monomials. Either this new polynomial is full, but the number of variables decreased, or the condition ``$f$ is full'' is not satisfied anymore and there is no new recursive call. Therefore, the number of recursive calls is bounded by the number of variables. This proves both the termination and the complexity analysis.
\end{proof}

Let us now describe the algorithm $\IsFactor$ (Algorithm~\ref{algo:factor}) corresponding to Lemmas~\ref{lem:oneFactor} and~\ref{lem:factor}.

\begin{algorithm2e}[htbp]
\DontPrintSemicolon
\caption{$\IsFactor(P)$}
\label{algo:factor}
\KwIn{A multilinear polynomial $P$}  
\KwOut{Is $P$ factorizable?}
\BlankLine
$P\gets\Prep(P)$\; 
\lIf{$P$ is linear}{\Return\textbf{True}}\;     \label{line:linear}
\BlankLine
\Else{  
    \tcp*[l]{Lemmas~\ref{lem:oneFactor} 
        \& \ref{lem:factor}:}
    $\alpha_ix_i\gets$ some nonzero monomial of $\lin(P)$\;          \label{line:recB} 
    $P_0\gets \frac{\partial P}{\partial x_i}$\;    \label{line:quotient}
    
    \eIf{$P=\Mult_0(\frac{1}{\alpha_i}\lin(P)\times P_0)$}{ \label{line:div}
        \Return $\IsFactor(P_0)$\;              \label{line:call}
    }{
        \Return \textbf{False}                  \label{line:recE}
    }
}
\end{algorithm2e}

\begin{theorem}\label{thm:factor}
Let $P\in\FF[x_1,\dotsc,x_m]$ be a multilinear polynomial. Then the algorithm $\IsFactor(P)$ answers \textbf{True} if, and only if, $P$ is factorizable.

The algorithm runs in time polynomial in the number $m$ of variables and the number of monomials of $P$.
\end{theorem}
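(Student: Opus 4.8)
The plan is to establish correctness by induction on the number $m$ of variables, handling the running time separately at the end. The algorithm opens with a call to $\Prep(P)$, so by Lemma~\ref{lem:prep} I may assume the polynomial entering the main branch is either linear or of valuation $1$, and that passing from $P$ to $\Prep(P)$ preserves factorizability. If $\Prep(P)$ is linear, the algorithm returns \textbf{True} on line~\ref{line:linear}, which is correct because a single linear polynomial is trivially factorizable (take $k=1$, $\ell^2=0$), and Lemma~\ref{lem:prep} transports this back to the original $P$. The substantive case is therefore valuation $1$.

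For the converse direction (algorithm answers \textbf{True} $\Rightarrow P$ factorizable) I would proceed directly. If the test on line~\ref{line:div} succeeds and the recursive call $\IsFactor(P_0)$ returns \textbf{True}, then by the induction hypothesis $P_0=\partial P/\partial x_i$ is factorizable, say $P_0=\Mult_0(M_1\times\dotsb\times M_l)$ with each $M_j$ linear. Feeding this into the identity $P=\Mult_0(\frac{1}{\alpha_i}\lin(P)\times P_0)$ certified by the passing test, and using that $\pi_0$ is a ring homomorphism with $\pi_0\circ\rho_0=\operatorname{Id}$ (so nested occurrences of $\Mult_0$ may be ``flattened'' across products), yields $P=\Mult_0(\frac{1}{\alpha_i}\lin(P)\times M_1\times\dotsb\times M_l)$, an honest factorization of $P$ into linear polynomials. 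Lemma~\ref{lem:prep} then returns factorizability to the original input. This direction requires nothing beyond the test having fired and the inductive conclusion.

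For the forward direction ($P$ factorizable $\Rightarrow$ \textbf{True}), two facts must be checked: that the test on line~\ref{line:div} passes for the chosen monomial $\alpha_i x_i$, and that the recursion on $P_0$ succeeds. Since $P$ has valuation $1$ it carries no constant term, so $\lin(P)$ is a nonzero sum of degree-$1$ monomials; writing a factorization $P=\Mult_0(L_1\dotsm L_k)$, Lemma~\ref{lem:oneFactor} supplies an index $j$ and a constant $\alpha$ with $\lin(P)=\alpha L_j$. Regrouping the product as $P=\Mult_0(\lin(P)\times Q)$ for a suitable multilinear $Q$ (obtained by pushing the remaining factors through $\pi_0$ and $\rho_0$) and then invoking Lemma~\ref{lem:factor} with $L=\lin(P)$ and the monomial $\alpha_i x_i$ produces precisely the tested identity $P=\Mult_0(\frac{1}{\alpha_i}\lin(P)\times\partial P/\partial x_i)$, valid for \emph{any} admissible choice of $\alpha_i x_i$. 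For the recursion, I claim $P_0=\partial P/\partial x_i$ is itself factorizable: by Theorem~\ref{thm:multilin} factorizability and representability coincide for multilinear polynomials, and the remark preceding Corollary~\ref{cor:baseField} shows that a representable multilinear polynomial has representable partial derivatives; hence $P_0$ is representable, thus factorizable, and the inductive hypothesis makes $\IsFactor(P_0)$ return \textbf{True}.

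The main obstacle is exactly this last point. It is tempting to deduce factorizability of $P_0$ by cancelling $\lin(P)$ from the relation $\pi_0(\lin(P))\pi_0(P_0)=\pi_0(\lin(P))\pi_0(\alpha_i Q)$, but $\pi_0(\lin(P))$ is \emph{not} invertible in $\Ring_0$ (it has no constant term, so its square vanishes modulo $\I_0$) and is in fact a genuine zero divisor, so the cancellation is illegitimate; routing the argument through representability and the derivative remark is what circumvents this. Finally, for the running time I would note that each recursive call of $\IsFactor$ is on $P_0=\partial P/\partial x_i$, a polynomial in the remaining $m-1$ variables, bounding the recursion depth by $m$; within one call $\Prep$ runs in polynomial time by Lemma~\ref{lem:prep}, while forming the derivative and the single product $\Mult_0(\frac{1}{\alpha_i}\lin(P)\times P_0)$ costs time polynomial in the number of monomials. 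Combining these yields the stated polynomial bound.
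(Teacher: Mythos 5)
Your proof is correct and follows the same overall scheme as the paper's (reduce via \Prep{}, then recurse on $\partial P/\partial x_i$ using Lemmas~\ref{lem:oneFactor} and~\ref{lem:factor}, with termination from the drop in the number of variables), but it is genuinely more complete than the paper's own one-line justification, and the extra care is warranted. The paper's proof says only that ``correctness follows from Lemmas~\ref{lem:oneFactor} and~\ref{lem:factor}''; those two lemmas indeed show that when $P$ is factorizable the test on line~\ref{line:div} passes for any admissible choice of $\alpha_i x_i$, but they do \emph{not} by themselves show that the recursion target $P_0=\partial P/\partial x_i$ is again factorizable -- and without that, the forward direction collapses, since $\IsFactor(P_0)$ would return \textbf{False} on a factorizable input $P$. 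You correctly pinpoint why the naive fix fails: one cannot cancel $\pi_0(\lin(P))$ from $\pi_0(\lin(P))\pi_0(P_0)=\pi_0(\lin(P))\pi_0(\alpha_i Q)$ because $\lin(P)$, having no constant term, is a zero divisor in $\Ring_0$. Your detour -- $P$ factorizable $\Rightarrow$ representable (Theorem~\ref{thm:multilin}) $\Rightarrow$ $\partial P/\partial x_i$ representable (the remark before Corollary~\ref{cor:baseField}, applicable since $x_i$ occurs in $P$) $\Rightarrow$ $\partial P/\partial x_i$ factorizable -- is exactly the missing ingredient; notably, it is the same combination the paper itself uses in the proof of Corollary~\ref{cor:baseField}, just never imported into the proof of this theorem. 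Two implicit appeals in your write-up deserve a flag but are harmless: when you write a factorization of $P$ or of $P_0$ specifically modulo $\I_0$, you are invoking the ideal-independence of factorizability (Theorem~\ref{thm:multilin}, as restated at the start of Section~\ref{sec:factor}), and your ``flattening'' of nested $\Mult_0$'s across products is justified by $\pi_0$ being a ring homomorphism with $\pi_0\circ\rho_0=\operatorname{Id}$, as you say. The complexity analysis matches the paper's.
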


\begin{proof}
The correctness follows from Lemmas~\ref{lem:oneFactor} and~\ref{lem:factor}. The termination is ensured by the fact that $\partial P/\partial x_i$ has less variables than $P$. This bounds the number of iterations by $m$.
\end{proof}

\subsection{An algorithm for the representation} 

Algorithm~\ref{algo:factor} only tells us if a polynomial is factorizable, but does not give us a factorization. The reason for this is that we change several times the ideal we are working with. Nevertheless, we proved in Section~\ref{sec:multilin} that given the factorization of a multilinear polynomial \emph{modulo} some ideal $\I$, we can find a symmetric matrix representing the polynomial. We use this in the following to show how to modify Algorithm~\ref{algo:factor} in order to get a Symmetric Determinantal Representation. Using the results of Section~\ref{sec:nec-cdt}, we are then able to factorize any factorizable multilinear polynomial \emph{modulo} any ideal $\I$.

\begin{lemma}\label{lem:merge}
Given two SDRs $M_P$ and $M_Q$ of two multilinear polynomials $P$ and $Q$ respectively, one can build an SDR $\Merge_b(M_P,M_Q)$ of $\Mult_b(P\times Q)$ ($b\in\{0,1\}$) in time polynomial in the dimensions of $M_P$ and $M_Q$.
\end{lemma}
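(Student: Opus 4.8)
The plan is to avoid constructing an ad hoc gluing gadget for each shared variable (a naive merge that identifies the two loops carrying $x_i$ reintroduces an $x_i^2$ term that cannot be collapsed to $b$ by a single valid entry), and instead to reuse the constructive machinery already developed. The key observation is that an SDR already encodes a factorization modulo any ideal $\I_b$, which can be extracted by the algorithm underlying Theorem~\ref{thm:nc}, and that such factorizations can simply be concatenated.

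First I would fix $b\in\{0,1\}$ and form the projections $\pi_b(M_P)$ and $\pi_b(M_Q)$, which are gSDRs of $\pi_b(P)$ and $\pi_b(Q)$ by Lemma~\ref{nc:proj}. Running the $\Iso$/$\Clean$ procedure from the proof of Theorem~\ref{thm:nc} on each of them produces, in time polynomial in the dimensions, linear elements $t_1,\dots,t_k$ and $s_1,\dots,s_l$ of $\Ring_b$ such that $\pi_b(P)=t_1\dotsb t_k$ and $\pi_b(Q)=s_1\dotsb s_l$. Since $\pi_b$ is a ring morphism, $\pi_b(\Mult_b(P\times Q))=\pi_b(P\times Q)=t_1\dotsb t_k\, s_1\dotsb s_l$, so the concatenated list is a factorization of the multilinear polynomial $\Mult_b(P\times Q)$ into linear elements of $\Ring_b$.

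I would then feed this factorization into the constructive implication $(iii)\Rightarrow(i)$ of Theorem~\ref{thm:multilin}: lift each factor to a linear polynomial $\rho_b(t_i)$ (resp.\ $\rho_b(s_j)$), which is representable by Proposition~\ref{prop:representable} because every element of the finite field $\FF$ is a square; multiply the resulting SDRs via Lemma~\ref{lem:prod}; then apply Lemma~\ref{lem:multilin}, Lemma~\ref{nc:proj} and Lemma~\ref{nc:diag}, and finally the lift $\rho_b$ applied entrywise. This produces a symmetric matrix $\Merge_b(M_P,M_Q)$ whose determinant is exactly $\Mult_b(P\times Q)$, with every step polynomial in the dimensions of $M_P$ and $M_Q$.

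The main obstacle is this last step: one must check that the entrywise lift really recovers $\Mult_b(P\times Q)$ and not merely some polynomial with the same projection modulo $\I_b$. This is precisely the multilinearity argument of Theorem~\ref{thm:multilin}: ensuring via Lemma~\ref{lem:multilin} that each variable occurs at most once on the diagonal guarantees that the determinant of the lifted matrix is multilinear, so that the equality of projections upgrades to an equality of multilinear polynomials. A secondary point to verify is the polynomial size bound on the output: the number $k+l$ of linear factors is bounded by the dimensions of $M_P$ and $M_Q$, and each factor is represented by a graph of size $O(m)$, so the total dimension remains polynomial.
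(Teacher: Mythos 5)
Your proof is correct, but it takes a genuinely different and more roundabout route than the paper. The paper's proof is a three-line direct construction: form the block-diagonal matrix $N$ from $M_P$ and $M_Q$ (so $\det(N)=P\times Q$ as in Lemma~\ref{lem:prod}), apply Lemma~\ref{lem:multilin} so that each variable appears at most once on the diagonal, and then reduce the matrix entrywise modulo $\I_b$ (project, replace off-diagonal entries by their absolute values as in \Clean{}/Lemma~\ref{nc:diag}, and lift back); the resulting determinant is multilinear and has projection $\pi_b(P\times Q)$, hence equals $\Mult_b(P\times Q)$. Note that this is exactly the ``rebuild'' half of your own argument --- the constructive $(iii)\Rightarrow(i)$ step of Theorem~\ref{thm:multilin} --- applied directly to $\operatorname{diag}(M_P,M_Q)$, which is already an SDR of $P\times Q$. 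Your detour through the $\mathcal{O}(n^3)$ factorization-extraction algorithm of Theorem~\ref{thm:nc}, concatenation of the factor lists, and re-representation of each lifted factor via Proposition~\ref{prop:representable} is therefore unnecessary: the naive-merge obstacle you worry about (duplicate loops creating an uncollapsible $x_i^2$) is precisely what Lemma~\ref{lem:multilin} dissolves, via the characteristic-$2$ cancellation $x_i+x_i=0$ on the diagonal, before any squares of variables can survive anywhere except in off-diagonal entries, where \Clean{} turns them into constants. What the comparison buys: the paper's route keeps the output dimension at $n_P+n_Q$ and never needs to represent arbitrary linear polynomials (it only reuses entries of $M_P$ and $M_Q$), whereas your route inflates the dimension to roughly $(k+l)(m+1)$ and leans on finiteness of $\FF$ (every element being a square) to apply Proposition~\ref{prop:representable} to the lifted factors --- harmless here since the section assumes $\FF$ finite, and still polynomial time, so the lemma is established either way.
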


\begin{proof}
The algorithm is based on Lemma~\ref{lem:multilin}. Let $N$ be the block-diagonal matrix made of $M_P$ and $M_Q$. Clearly, this matrix represents $P\times Q$. Using Lemma~\ref{lem:multilin}, one can build a matrix $N'$ such that each variable appears at most once on the diagonal and such that $\det(N')=P\times Q$. Then $\Merge_b(M_P,M_Q)=\Mult_b(N')$ represents $\Mult_b(P\times Q)$.
\end{proof}

\begin{theorem}
There is an algorithm $\SymDet$ that given as input a multilinear polynomial $P\in\FF[x_1,\dots,x_m]$ 
returns an SDR of $P$ if one exists. 
This algorithm runs in time polynomial in $m$ and the number of monomials.
\end{theorem}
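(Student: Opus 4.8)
The plan is to make the decision procedure $\IsFactor$ constructive: $\SymDet$ follows exactly the same recursion, but at each reduction step it records the linear polynomial or monomial that is peeled off and uses $\Merge$ (Lemma~\ref{lem:merge}) to rebuild an SDR of the current polynomial from an SDR of the reduced one. The recursion bottoms out at linear polynomials, which are representable by Proposition~\ref{prop:representable} — since $\FF$ is finite of characteristic $2$, every element is a square, so Proposition~\ref{prop:representable} applies to \emph{every} linear polynomial — and at primitive monomials $x^\alpha=x_{i_1}\dotsm x_{i_d}$, which are represented by the diagonal matrix $\mathrm{diag}(x_{i_1},\dots,x_{i_d})$.

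The key observation is that each factorization-preserving reduction used by $\Prep$ and $\IsFactor$ comes, upon inspection of its proof, with an explicit identity of the form $P=\Mult_b(F\times P')$, where $F$ is linear or a monomial, $P'$ is the reduced polynomial, and $b\in\{0,1\}$ is the relevant ideal. Concretely, in the \emph{full} case of Lemma~\ref{lem:fullPoly} one has $P=\Mult_0\bigl(\tfrac{1}{p_0^2}(p_ix_i+p_0)\times Q\bigr)$ with $Q=\Mult_0(P\times(p_ix_i+p_0))$; in the valuation $0$ and valuation $>1$ cases of Lemma~\ref{lem:LinPart} one has $P=\Mult_1(x^\alpha\times Q)$ with $Q=\Mult_1(x^\alpha P)$; and in the valuation-$1$ step of Lemmas~\ref{lem:oneFactor} and~\ref{lem:factor} one has $P=\Mult_0\bigl(\tfrac{1}{\alpha_i}\lin(P)\times\tfrac{\partial P}{\partial x_i}\bigr)$. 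In every case the first factor is linear or a monomial, hence has a base-case SDR (the constant scalings $p_0^{-2}$ and $\alpha_i^{-1}$ are harmless, as any linear polynomial over the finite field $\FF$ is representable).

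Putting these together, $\SymDet(P)$ proceeds as follows. If $P$ is linear, return its SDR from Proposition~\ref{prop:representable}. Otherwise apply the appropriate reduction to obtain $P'$ and its factor $F$, recursively compute $M'=\SymDet(P')$, and return $\Merge_b(M_F,M')$, where $M_F$ is the base-case SDR of $F$; by Lemma~\ref{lem:merge} this is an SDR of $\Mult_b(F\times P')=P$. The only step that can fail is the valuation-$1$ test $P\stackrel{?}{=}\Mult_0\bigl(\tfrac{1}{\alpha_i}\lin(P)\times\tfrac{\partial P}{\partial x_i}\bigr)$ of $\IsFactor$; when it fails we report that $P$ has no SDR, and a failure of any recursive call is propagated upward. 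Correctness is inherited from $\IsFactor$ (Theorem~\ref{thm:factor}) together with the equivalence between factorizability and representability for multilinear polynomials (Theorem~\ref{thm:multilin}): the construction succeeds exactly when $P$ is factorizable, which is exactly when $P$ is representable.

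Finally, the running time. The recursion tree coincides with that of $\IsFactor$, so there are polynomially many reduction steps (Lemma~\ref{lem:prep}, Theorem~\ref{thm:factor}), and each step performs one $\Mult_b$ computation, one base-case construction, and one $\Merge_b$, all polynomial in $m$ and the number of monomials. The point I expect to be the real obstacle is bounding the \emph{dimension} of the matrix produced: each $\Merge_b$ adds the dimensions of its two arguments, so one must check that every base-case SDR from Proposition~\ref{prop:representable} has dimension polynomial in $m$ and that only polynomially many of them are ever merged, so that the accumulated dimension stays polynomial. Here Lemma~\ref{lem:multilin}, used inside $\Merge$, is what keeps repeated variables off the diagonal and prevents the dimension from degenerating; verifying that this size bound indeed holds across the entire recursion is the crux of the argument.
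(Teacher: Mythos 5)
Your construction is essentially identical to the paper's proof: make $\IsFactor$ constructive by recording at each reduction step the peeled-off factor $F$ and the bit $b$ with $P=\Mult_b(F\times P')$, represent each factor via Proposition~\ref{prop:representable}, and reassemble with $\Merge_b$ (Lemma~\ref{lem:merge}); the paper merely does this bookkeeping iteratively, accumulating a list of pairs $(L_i,b_i)$ and then folding $\Merge$ over it, which is the same computation as your recursion. The dimension bound you single out as the crux is routine rather than an obstacle: the number of recorded factors is polynomial since it is bounded by the number of reduction steps, each factor's SDR has dimension $O(m)$, and $\Merge_b$ only forms a block-diagonal sum (the operations of Lemma~\ref{lem:multilin} change entries, not dimensions), so the accumulated dimension stays polynomial --- a point the paper leaves implicit by saying the running time is controlled by those of $\Prep$, $\IsFactor$ and $\Merge$.
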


\begin{proof}
The algorithm $\SymDet$ is made of two steps. The first one is a modification of the algorithm $\IsFactor$ such that it returns a list of factors instead of \textbf{True} when $P$ is factorizable. The second one is the construction of an SDR of $P$ from this list of factors, using the algorithm $\Merge$ of Lemma~\ref{lem:merge}. 

In algorithms $\Prep$ and $\IsFactor$, to test if $P$ is factorizable, it is written as $P=\Mult_b(L\times Q)$ where $L$ is either linear or a monomial and $b \in\{0,1\}$, and then one tests whether $Q$ is factorizable. These algorithms are modified to retain the couples $(L,b)$ each time such an operation is performed. More precisely, we add a global variable $\L$ containing a list of pairs of the form $(L,b)$. Let us now describe how $\Prep$ and $\IsFactor$ modify this variable.

In $\Prep$, Line~\ref{line:full}, the couple $((p_0x_i+p_i)/p_i^2,0)$ is added to $\L$. Indeed, a recursive call is performed with the polynomial $Q=\Mult_0(P\times(p_0x_i+p_i))$. But $\Mult_0(Q\times(p_0x_i+p_i)/p_i^2)$ equals $\Mult_0(P\times(p_0^2x_i^2+p_i)^2/p_i^2)=P$. In the same way, the couple $(x^\alpha,1)$ is added to $\L$ at Lines~\ref{line:val0} and~\ref{line:val2}. To finish, the couple $(\lin(f)/\alpha_i,0)$ is added to $\L$ at Line~\ref{line:call} of $\IsFactor$.

When $\IsFactor$ answers \textbf{True}, then $P$ is linear. Instead of this answer, the new algorithm adds the couple $(P,0)$ to $\L$ (the bit $0$ is arbitrary and unused) and returns $\L$. At this stage, we have a list $\L$ of couples $(L_1,b_1)$, \dots, $(L_k,b_k)$. Let $P_k=L_k$ and for $i$ from $(k-1)$ down to $1$, $P_i=\Mult_{b_i}(L_i\times P_{i+1})$. From the construction of $\L$, $P=P_1$. An SDR for $P$ is built as follows: For all $i$, an SDR $N_i$ of $L_i$ is built using Proposition~\ref{prop:representable}. Then, let $M_k=N_k$ and for $i$ from $(k-1)$ down to $1$, let $M_i=\Merge_{b_i}(N_i,M_{i+1})$. If $\det(M_{i+1})=P_{i+1}$ and $\det(N_i)=L_i$, Lemma~\ref{lem:merge} shows that $\det(M_i)=P_i$. To conclude, the algorithm returns $M=M_1$. 

The running time of the algorithm is controlled by the running times of $\Prep$, $\IsFactor$, and $\Merge$.
\end{proof}

\section{A characteristic-free result: Alternating Determinantal
Representations}\label{sec:alternating} 
Symmetric matrices correspond to symmetric bilinear forms. We saw that in
this context, there is a big difference depending on whether the
characteristic of the underlying field is $2$ or not.
As explained to us by Mathieu Florence \cite{Flo2012}, the related notion
of alternating forms is known to be \emph{characteristic-free}.
An \emph{alternating form} is a bilinear form $\varphi: V\times V \to \FF$
such that $\varphi(v,v) = 0$ for any $v$ in the vector space
$V$.
The matrix associated to alternating forms are the anti-symmetric matrix
with zero diagonal entries. 
Hence, we should expect an homogeneous result concerning Alternating
Determinantal Representations. It turns out to be the case:
\begin{theorem}\label{thm:alternating}
Let $\FF$ be some field and $P\in\FF[x_1,\dots,x_m]$ be a polynomial.
Then, $P$ can be written as the determinant of an alternating matrix with
entries in $\FF\cup\{x_1,\dots,x_m\}$ if, and only if, $P$ is a square.
\end{theorem}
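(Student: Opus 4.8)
The plan is to reduce everything to the classical theory of Pfaffians, which is defined over $\mathbb Z$ and is therefore characteristic-free. The single fact doing all the work is the polynomial identity $\det(A)=\operatorname{Pf}(A)^2$, valid for every alternating matrix $A$ of even size over any commutative ring, together with the observation that an alternating matrix of odd size is always singular. Since the entries of our matrices lie in the commutative ring $\FF[x_1,\dots,x_m]$, the Pfaffian $\operatorname{Pf}(A)$ is itself a polynomial in $\FF[x_1,\dots,x_m]$.

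For the direction ``representable $\Rightarrow$ square'', let $A$ be an alternating matrix with $\det(A)=P$. If $A$ has even size, then $P=\det(A)=\operatorname{Pf}(A)^2$ is visibly a square. If $A$ has odd size, I would argue that $\det(A)=0=0^2$: in characteristic $\neq 2$ this follows from $\det(A)=\det(A^{\mathsf T})=\det(-A)=(-1)^{\dim A}\det(A)=-\det(A)$, whereas in characteristic $2$ an alternating matrix is a symmetric matrix with zero diagonal, so by Proposition~\ref{prop:detchar2} every involution of an odd set fixes some index $i$ and hence contributes a factor $A_{i,i}=0$. Either way $P$ is a square.

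For the converse, write $P=Q^2$ and aim to build an alternating matrix $A$, with entries in $\FF\cup\{x_1,\dots,x_m\}$, whose Pfaffian equals $Q$; then $\det(A)=Q^2=P$. Here I would recycle the graph gadget of Lemma~\ref{lem:squares}: expanding $Q=\sum_\alpha c_\alpha x^\alpha$, the same path-like construction yields a graph whose perfect matchings are in bijection with the monomials of $Q$, the matching for $c_\alpha x^\alpha$ carrying weight $c_\alpha x^\alpha$ and all connecting edges having weight $1$. Reading the edge weights as the above-diagonal entries of an alternating matrix (the entries below the diagonal being their negatives, which is the natural convention and reduces to the symmetric notion in characteristic $2$), the perfect-matching expansion of the Pfaffian gives $\operatorname{Pf}(A)=\sum_\alpha \pm c_\alpha x^\alpha$, exactly one term per monomial.

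The main obstacle is precisely the collection of signs $\pm$ appearing in this Pfaffian expansion, which are invisible in the symmetric (determinant) computation of Lemma~\ref{lem:squares} because there each matching is counted with weight squared. I would need to check that the sign attached to each monomial can be corrected without spoiling the entry condition. The key point is that each perfect matching uses exactly one constant-weighted edge $c_\alpha$ (together with several weight-$1$ connectors), so any unwanted sign can be absorbed by replacing that constant by $-c_\alpha\in\FF$, or by flipping a connector weight to $-1\in\FF$; no variable entry ever has to be negated, which is what keeps the below-diagonal entries admissible as well. Verifying that these sign corrections can be made consistently—so that every monomial of $Q$ ends up with coefficient exactly $c_\alpha$—is the one genuinely delicate step, and it is also the place where the block-diagonal multiplicativity $\operatorname{Pf}(A_1\oplus A_2)=\operatorname{Pf}(A_1)\operatorname{Pf}(A_2)$ can be exploited to keep the bookkeeping local to each monomial gadget.
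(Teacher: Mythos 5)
Your proof of the forward direction coincides with the paper's: over the commutative ring $\FF[x_1,\dots,x_m]$ one has $\det(A)=\operatorname{Pf}(A)^2$ for even size, and your explicit disposal of the odd-size case (the sign argument in characteristic $\neq 2$, fixed points of involutions via Proposition~\ref{prop:detchar2} in characteristic $2$) is a detail the paper leaves implicit. The converse is where you genuinely diverge. The paper never builds a Pfaffian gadget: it invokes Valiant's completeness theorem to obtain an arbitrary (non-symmetric, non-alternating) matrix $N$ with entries in $\FF\cup\{x_1,\dots,x_m\}$ and $\det(N)=Q$, and then takes the alternating block matrix $M=\bigl(\begin{smallmatrix}0 & N\\ -N^{T} & 0\end{smallmatrix}\bigr)$, for which $\det(M)=\det(N)^2=Q^2=P$; all sign bookkeeping disappears. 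Your route also works, and the step you flag as ``genuinely delicate'' is in fact immediate from an observation you yourself make: in the merged gadget graph of Lemma~\ref{lem:squares} built for $Q$, the perfect matchings are in bijection with the monomials $c_\alpha x^\alpha$, and the matching $\mu_\alpha$ is the \emph{only} one using the constant edge $c_\alpha$ (every other matching covers gadget $\alpha$'s interior by its weight-$1$ connectors). So writing $\operatorname{Pf}(A)=\sum_\alpha \varepsilon_\alpha c_\alpha x^\alpha$ with $\varepsilon_\alpha\in\{\pm1\}$ and replacing the entry $c_\alpha$ by $\varepsilon_\alpha c_\alpha$ rescales exactly one term of the expansion and no other; the corrections are independent, so consistency is automatic. (Drop your fallback of flipping a connector to $-1$: a connector of gadget $\beta$ lies in every matching $\mu_\alpha$ with $\alpha\neq\beta$, so those flips are entangled across monomials.) What each approach buys: the paper gets a three-line reduction to a known theorem; yours is self-contained and explicit, at the price of invoking the signed perfect-matching expansion of the Pfaffian. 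Finally, both constructions commit the same mild abuse relative to the literal statement: below the diagonal they produce entries $-x_k$, so in characteristic $\neq 2$ ``entries in $\FF\cup\{x_1,\dots,x_m\}$'' must be read as a condition on the above-diagonal entries --- your stated convention, and implicitly the paper's as well, since $-N^{T}$ has entries $-x_k$.
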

\begin{proof}
Let $P$ be the determinant of an alternating matrix $M$ with entries in
$\FF\cup\{x_1,\dots,x_m\}$.  If we consider $M$ as a matrix over the
commutative ring $\FF[x_1,\dots,x_m]$, we see that $P=\det(M)$ is the
square of the Pfaffian of $M$ which is an element of $\FF[x_1,\dots,x_m]$
(\cite{Lang2002} XV, \textsection 9, page 588).
\newline
Conversely, let $P=Q^2$ be the square of an element of
$\FF[x_1,\dots,x_m]$. As proved in \cite{Val79-complete}, there exists a
square matrix $N$ with entries in $\FF\cup\{x_1,\dots,x_m\}$ such that
$\det(N) = Q$. The matrix 
\[M=\begin{pmatrix}0 & N \\ -N^T & 0 \end{pmatrix}\]
is alternate and satisfies $\det(M) = (\det(N))^2 = Q^2 = P$.
\end{proof}

\section{Concluding remarks}\label{sec:conclusion} 

We proved in this paper that in characteristic $2$, some polynomials do not admit any SDR. In the case of multilinear polynomials, we gave a complete characterization as well as algorithms to deal with these representations. We discovered some tight relations between the ability to find an SDR and to factorize the polynomial \emph{modulo} some square polynomials. Thus we showed that the factorization in these quotient rings can be performed in polynomial time.

The main remaining open question is of course to get a full characterization of representable polynomials. 

\begin{conjecture}
A polynomial $P\in\FF[x_1,\dotsc,x_m]$ is representable if, and only if, for some (equivalently any) tuple of squares $\ell\in\FF^m$, $\Mult_\xi P$ is factorizable \emph{modulo} $\I(\ell)$ into linear polynomials $L_1,\dotsc,L_k\in\FF[\xi_1,\dots,\xi_m][x_1,\dots,x_m]$. 
\end{conjecture}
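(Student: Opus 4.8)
The plan is to reduce the conjecture to a single descent statement over the UFD $R\defeq\FF[\xi_1,\dots,\xi_m]$, isolating precisely the gap left open by the two-part theorem of Section~\ref{sec:full}. The backward implication is already available: if $\Mult_\xi P$ factors into linear polynomials with coefficients in $R$ modulo $\I(\ell)$, then the second part of that theorem, via Theorem~\ref{thm:multilin} and Lemma~\ref{lem:xi}, shows $P$ is representable (the transformations of Lemma~\ref{lem:multilin} use no inverse of a coefficient, so polynomial coefficients are preserved). Everything therefore rests on the forward implication: representability of $P$ must yield a factorization whose linear factors have coefficients in $R$, not merely in $K\defeq\FF(\xi_1,\dots,\xi_m)$.

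First I would obtain the rational factorization and clear denominators. By the first part of the Section~\ref{sec:full} theorem (that is, Theorem~\ref{thm:nc} applied over $K$), representability gives linear forms $t_1,\dots,t_k\in K[x_1,\dots,x_m]$ with $\Mult_\xi P\equiv t_1\cdots t_k\pmod{\I(\ell^2)}$. Choosing common denominators $d_i\in R$ and setting $M_i\defeq d_i t_i\in R[x_1,\dots,x_m]$, $D\defeq\prod_i d_i$, and using that $\Mult_{\ell^2}$ is $K$-linear, I obtain the polynomial-coefficient relation $D\cdot\Mult_\xi P=\Mult_{\ell^2}(M_1\cdots M_k)$, a genuine equality of multilinear polynomials over $R$. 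The whole problem is now to remove the spurious scalar $D$.

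The natural framework for the descent is the local Artinian ring $A_R\defeq R[y_1,\dots,y_m]/(y_1^2,\dots,y_m^2)$, where $y_i\defeq x_i+\ell_i$, so that $y_i^2=x_i^2+\ell_i^2=0$ in the quotient; thus $\Ring(\ell^2)\cong A_K=A_R\otimes_R K$. In these coordinates the absolute value $|r|$ is exactly evaluation at $y=0$ (reduction modulo the maximal ideal $\mathfrak m=(y_1,\dots,y_m)$), and a linear element $t_i$ is a unit if, and only if, $|t_i|\neq0$. I would prove a Gauss-type lemma for $A_R$: an element admitting a factorization into linear forms over $K$ admits one over $R$. The proposed mechanism is to treat each irreducible $p\mid D$ separately and reduce the relation $\Mult_{\ell^2}(M_1\cdots M_k)\equiv0\pmod p$ to the domain $R/(p)$; multiplicativity of $|\cdot|$ then forces some factor $M_{i_0}$ to fall into $\mathfrak m$ modulo $p$, and I would use the nilpotent structure of $A_{R/(p)}$ together with the primitivity (content $1$) one can impose on the $M_i$ to conclude that $p$ in fact divides the content of $M_{i_0}$, so that $p$ cancels and $M_{i_0}/p$ remains linear over $R$. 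Iterating over the prime factorization of $D$ clears $D$ entirely.

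The hard part is exactly this last descent, and it is where I expect the argument to resist. Because $A_R$ is non-reduced, Gauss's lemma and multiplicativity of contents fail in their classical form, and the square-reduction $\Mult_{\ell^2}$ can manufacture common factors: a product of primitive linear forms over $R$ may acquire nontrivial content after the substitution $x_i^2\mapsto\ell_i^2$, as the cancellations in characteristic $2$ show. Controlling the content of $\Mult_{\ell^2}(M_1\cdots M_k)$ in terms of the contents of the $M_i$ --- equivalently, ruling out that $|\cdot|$ detects a non-unit modulo $p$ without the full content being divisible by $p$ --- is the crux, and is precisely the obstruction flagged after the Section~\ref{sec:full} theorem. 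A complementary line of attack, should the global content argument stall, is to run the valuation-$1$ recursion of Lemmas~\ref{lem:oneFactor} and~\ref{lem:factor}: there each peeled-off factor $\lin(\Mult_\xi P)$ already lies in $R[x]$, and only the cofactor $\frac1{\alpha_i}\frac{\partial}{\partial x_i}\Mult_\xi P$ carries a denominator $\alpha_i\in R$; showing that $\alpha_i$ always divides the content of that partial derivative would give a clean polynomial-coefficient induction, and is a sharper, more tractable form of the same obstacle.
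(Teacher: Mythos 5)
There is no proof of this statement in the paper to compare against: it is the concluding \emph{Conjecture}, stated as open. The paper's closest content is the two-part theorem of Section~\ref{sec:full} together with the remark following it, which says explicitly that a full characterization would require upgrading the linear factors from coefficients in $\FF(\xi_1,\dotsc,\xi_m)$ to coefficients in $\FF[\xi_1,\dotsc,\xi_m]$. Your proposal correctly reassembles this known reduction: the backward implication via the second bullet of that theorem (Theorem~\ref{thm:multilin}, Lemma~\ref{lem:multilin} and Lemma~\ref{lem:xi}), and, for the forward implication, Theorem~\ref{thm:nc} over $K=\FF(\xi_1,\dotsc,\xi_m)$ followed by clearing denominators to get $D\cdot\Mult_{\xi^2}P=\Mult_{\ell^2}(M_1\dotsm M_k)$ with $M_i$ over $R=\FF[\xi_1,\dotsc,\xi_m]$. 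All of that is sound, but it is precisely where the paper already stands; the only new mathematics would be the descent removing $D$, and that step your argument does not close --- as you yourself concede.

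The sketched descent fails for a concrete reason. Working modulo an irreducible $p\mid D$ in $A_{R/(p)}=(R/(p))[y_1,\dotsc,y_m]/(y_1^2,\dotsc,y_m^2)$, multiplicativity of the absolute value only yields that some factor $M_{i_0}$ has \emph{constant term} (in the $y$-coordinates) divisible by $p$, i.e.\ $M_{i_0}$ reduces into the maximal ideal $\mathfrak m$; it does not follow that $p$ divides the content of $M_{i_0}$. Indeed $A_{R/(p)}$ has zero divisors among linear elements: if two factors both reduce to $y_1$ modulo $p$, their product already vanishes in $A_{R/(p)}$ while neither is divisible by $p$, so the relation modulo $p$ can hold with every $M_i$ primitive, and "divide one factor by $p$" is simply unavailable. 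Closing this would require \emph{modifying} the factorization (redistributing units of $\Ring(\ell^2)$ among the $t_i$, which are far from unique), and you give no mechanism for that; this is exactly the obstruction the paper illustrates with $x_1^2+x_1x_2+x_1x_3+x_2x_3$, whose image under $\Mult_{\xi^2}$ acquires the inert constant $\xi_1^2$. Your fallback route is in the same state: the claim that $\alpha_i$ divides the content of $\partial(\Mult_{\xi^2}P)/\partial x_i$ is stated as a hope, not proved, and the preparation steps (Lemmas~\ref{lem:fullPoly} and~\ref{lem:LinPart}, Algorithm~\ref{algo:prep}) introduce further divisions, e.g.\ the cofactor $(p_0x_i+p_i)/p_i^2$, reintroducing denominators in $R$. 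In short: your proposal is a reasonable research plan that isolates the right open gap, but it is not a proof, and no proof exists in the paper --- the statement is genuinely open.
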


An example of a problematic polynomial is $x_1^2+x_1x_2+x_1x_3+x_2x_3$. Indeed, it can be factorized as $(x_1+x_2)(x_1+x_3)$. But once the projection \emph{modulo} $\I_\xi$ is made, it is not so clear anymore how it can be factorized. One idea could be first to factorize the polynomial and then apply our results to each factor. Yet it is not clear whether this strategy can work.

If the conjecture can be proved, or if some other characterization of the same kind can be found, it would also remain to see if the algorithms designed for multilinear polynomials can be extended to the general case. Once again, the main difficulty is to deal with the fact that our algorithms use inverse of elements.

Even for multilinear polynomials, some questions remain. For instance, it could be interesting to make a quantitative study to know how many polynomials have SDRs. For instance, all polynomials in $2$ variables are representable, and it seems that the proportion decreases as the number of variables increases.

\subsection*{Acknowledgments} B.G. thanks Erich L. Kaltofen, Pascal Koiran, Natacha Portier, Yann Strozecki and Sébastien Tavenas for fruitful discussions on the subject of this paper.

\end{document}